\newtheorem{theorem}{Theorem}
\newtheorem{lemma}{Lemma}
\newtheorem{property}{Property}
\newtheorem{definition}{Definition}
\begin{document}

\title{Weighted Squared Volume Minimization (WSVM) for Generating Uniform Tetrahedral Meshes}

\author{Kaixin~Yu$^*$,
        Yifu~Wang$^*$,
        Peng~Song,
        Xiangqiao~Meng,
        Ying~He$^\dagger$,
        Jianjun~Chen$^\dagger$
\thanks{$^*$K. Yu and Y. Wang contributed equally to the paper.}
\thanks{$^\dagger$Corresponding authors: Y. He (yhe@ntu.edu.sg) and J. Chen (jjchen@zju.edu.cn).}
\thanks{Kaixin Yu, Yifu Wang and Jianjun Chen are with the School of Aeronautics and Astronautics, Zhejiang University, Hangzhou, China. Kaixin Yu is also affiliated with the College of Computing and Data Science, Nanyang Technological University, Singapore.}
\thanks{Peng Song is with the Pillar of Information Systems Technology and Design, Singapore University of Technology and Design, Singapore.}
\thanks{Xiangqiao Meng is with the Department of Computing, The Hong Kong Polytechnic University, Hong Kong, China.}
\thanks{Ying He is with the College of Computing and Data Science, Nanyang Technological University, Singapore.}
}

\maketitle

\begin{abstract}
This paper presents a new algorithm, Weighted Squared Volume Minimization (WSVM), for generating high-quality tetrahedral meshes from closed triangle meshes. Drawing inspiration from the principle of minimal surfaces that minimize squared surface area, WSVM employs a new energy function integrating weighted squared volumes for tetrahedral elements. When minimized with constant weights, this energy promotes uniform volumes among the tetrahedra. Adjusting the weights to account for local geometry further achieves uniform dihedral angles within the mesh. The algorithm begins with an initial tetrahedral mesh generated via Delaunay tetrahedralization and proceeds by sequentially minimizing volume-oriented and then dihedral angle-oriented energies. At each stage, it alternates between optimizing vertex positions and refining mesh connectivity through the iterative process. The algorithm operates fully automatically and requires no parameter tuning. Evaluations on a variety of 3D models demonstrate that WSVM consistently produces tetrahedral meshes of higher quality, with fewer slivers and enhanced uniformity compared to existing methods. Check out further details at the project webpage: \url{https://kaixinyu-hub.github.io/WSVM.github.io/}.
\end{abstract}

\begin{IEEEkeywords}
Tetrahedral mesh generation, weighted squared volume energy, convex optimization, remeshing.
\end{IEEEkeywords}

\section{Introduction}
\label{sec:introduction}

\IEEEPARstart{T}{etrahedral} meshes provide a natural representation of solids. Compared to other volumetric representations, such as voxels and implicit functions, they are valued for their superior adaptability and capacity to represent complex geometries and topologies. The generation of high-quality tetrahedral meshes is a fundamental problem in digital geometry processing \cite{hang2015tetgen,cheng2013delaunay}, with applications spanning computer graphics \cite{dey2004approximating,cabiddu2017,hu2018tetwild,hu2020fasttetwild,gao2020learning,csahistan2021ray,plocharski2024skeleton}, scientific computations \cite{baker1989automatic,owen1998survey,schneider2022large} and various engineering domains \cite{biswas1998tetrahedral,yu2008high,zhang2021meshingnet3d}.  

The last three decades have seen significant advancements in tetrahedral mesh generation, with numerous efficient and practical algorithms developed. These include advancing-front methods~\cite{lohner1988generation,jin1993AFTgeneration}, envelope methods~\cite{hu2018tetwild,hu2019triwild,hu2020fasttetwild}, Laplacian methods~\cite{field1988laplacian,freitag1995efficient,freitag1996comparison,vollmer1999improvedLaplace,sorkine2006differential, klingner2009tetrahedral}, and a broad array of techniques related to Delaunay tetrahedralizations~\cite{caendish1985apporach,shewchuk2002good,sheehy2012new,hang2015tetgen,sorgente2023survey} and Voronoi diagrams~\cite{du1999centroidal,du2005anisotropic,liu2009centroidal,levy2010p,yan2013efficient,ekelschot2018robust,xiao2018optimal,ekelschot2019parallel,telsang2022computation}. Despite these advancements, generating high-quality tetrahedral meshes remains a significant challenge, as completely eliminating slivers is both theoretically and practically difficult \cite{fu2014anisotropic}. The presence of slivers can adversely impact the subsequent applications of tetrahedral meshes. 

This paper presents an approach to enhance the quality of tetrahedral meshes by reducing slivers and improving uniformity. Inspired by minimal surfaces that minimize squared surface area~\cite{renka1995minimal,twosimple,renka2015mesh}, 
our method introduces a new energy function that integrates weighted squared volumes for tetrahedral elements. When minimized with constant weights, this function promotes uniform volumes across tetrahedra. Adjusting the weights to account for local geometry results in uniform dihedral angles within the mesh. 
A key feature of our energy is its convexity and differentiability, which permits using Newton's solver for efficient minimization. 

Building on this foundation, we develop an iterative algorithm named Weighted Squared Volume Minimization (WSVM). Starting with an initial tetrahedral mesh generated by Delaunay tetrahedralization, WSVM sequentially optimizes the position of each interior vertex by first minimizing the volume-oriented energy and then angle-oriented energy. The algorithm also dynamically refines mesh connectivity by splitting edges and flipping faces and edges throughout the iterative process. Our algorithm is fully automatic and operates without the need for parameter tuning. We test WSVM on 100 models, each featuring diverse geometric and topological characteristics, to evaluate its performance against state-of-the-art (SOTA) methods. Experimental results show that WSVM consistently produces tetrahedral meshes of higher quality, with fewer slivers and enhanced uniformity compared with existing methods. 


\section{Related Work}
\label{sec:relatedwork}
Existing tetrahedralization methods are classified into two main categories: approaches that generate tetrahedral meshes from input boundary meshes and optimization-based techniques that enhance mesh quality.


\subsection{
Generation Techniques}

\paragraph{Delaunay methods} Delaunay tetrahedralization, due to its efficiency and favorable mathematical properties, is widely used in commercial software. Over the past two decades, research on Delaunay tetrahedralization algorithms for point sets has seen significant advancements \cite{sheehy2012new}. Despite these advances, constrained Delaunay tetrahedralization remains challenging. A notable issue is the need for Steiner points to facilitate tetrahedralization~\cite{hang2015tetgen}. Additionally, unlike 2D Delaunay triangulation, where the empty circumcircle property helps maximize the minimum angle, 3D Delaunay tetrahedral meshes frequently contain slivers—elements that satisfy Delaunay conditions yet are of poor quality \cite{caendish1985apporach, talmor1997well}. These slivers can significantly degrade the accuracy of numerical simulations \cite{shewchuk2002good, sorgente2023survey}. 

\paragraph{Advancing-front methods} 
The advancing-front method, introduced by \cite{lohner1988generation},
creates an unstructured mesh by sequentially adding individual elements to an existing front of generated elements. The process begins with generating boundary nodes, whose edges form the initial front that advances outward into the field. This method and its many variants are widely used for generating planar, surface, and tetrahedral meshes \cite{jin1993AFTgeneration}. However, they rely on heuristic strategies to generate a locally optimal element at each step, employing a flood-filling manner to fill the space. These heuristic approaches do not guarantee convergence and can sometimes lead to issues with robustness. While the advancing-front method can produce high-quality elements at boundaries, the mesh quality at internal intersections of advancing fronts can be suboptimal. 

\paragraph{Envelope methods} These methods stand out for their robustness, tolerating geometric errors such as self-intersections and gaps, and effectively handling thin structures~\cite{hu2019triwild,hu2018tetwild,hu2020fasttetwild}. They employ an envelope that  encompasses the input triangle soup, performing tetrahedralizaiton based on the envelope and then enhancing mesh quality through optimization. Though robust, they alter boundary geometry, which can introduce errors in applications requiring high-fidelity to the original high-quality triangle meshes. 

\subsection{Optimization Methods}

\paragraph{Laplacian methods}  Optimizing internal vertices is crucial for enhancing mesh quality. Known for their simplicity and efficiency, Laplacian smoothing methods enhance mesh quality by repositioning each vertex to the average location of its neighboring vertices \cite{field1988laplacian,vollmer1999improvedLaplace,sorkine2006differential}. However, they do not guarantee convergence or mesh correctness, and are often combined with other techniques for better results~\cite{freitag1995efficient,freitag1996comparison, klingner2009tetrahedral}. 

\paragraph{Centroidal Voronoi tessellations (CVT)} Originally proposed by \cite{du1999centroidal}, who positioned generating points of each Voronoi cell at its centroid, resulting in high-quality Voronoi diagrams. The CVT energy, a \(C^2\) continuous but non-convex function \cite{liu2009centroidal}, has been adapted for triangulation optimization \cite{yan2013efficient} and used in various applications including anisotropic mesh generation \cite{telsang2022computation, du2005anisotropic, levy2010p, ekelschot2018robust, ekelschot2019parallel, xiao2018optimal}. Its use in tetrahedral mesh optimization can, however, produce many sliver elements, affecting mesh quality and stability \cite{chen2014revisiting, 2004TOG_ODT}.

\paragraph{Optimal Delaunay triangulations (ODT)} \cite{chen2004ODT} introduced ODT for enhancing the quality of triangular meshes, focusing on minimizing an energy function tied to the weighted interpolation error \cite{chen2004ODT,chen2004ODT2}. \cite{2004TOG_ODT} expanded ODT's application to tetrahedral mesh optimization. Building on this, \cite{2009NODT} developed the natural ODT method (NODT), optimizing both internal and boundary vertices of tetrahedral meshes. \cite{chen2011efficient} introduced several local and global optimization techniques to minimize the ODT energy and also proposed the centroidal patch triangulation (CPT) method. Due to the ODT energy's nonlinearity and non-smoothness, achieving global optimal solutions is challenging \cite{2022degreeODT,chen2011efficient,chen2014revisiting}. Nevertheless, recent research has marked significant progress in overcoming these challenges \cite{chen2014revisiting,weng2024global}.

\section{Problem Formulation}
\label{sec:formulation}

Typically, tetrahedral elements of poor quality are characterized by non-uniform shapes, which lead to significant volume variations across tetrahedra, irregular vertex connectivity, and extreme dihedral angles that approach either $0^\circ$ or $180^\circ$. 
To enhance mesh quality, our approach aims to reduce volume variations, regularize vertex connectivity, and narrow the range of dihedral angles to avoid extremes. 

Towards this goal, we propose a weighted squared volume energy, which is proved to be convex, for optimizing each movable internal vertex. When minimized with constant weights, this function promotes uniform volumes across tetrahedra. Adjusting the weights to account for local geometry results in uniform dihedral angles within the mesh. 
After updating the vertex locations, we adopt flip operations to improve connectivity. We also use split and collapse operations to eliminate local short and long edges.


\subsection{Weighted Squared Volume Energy}\label{subsec:Theorem} 

Let \( \mathcal{S} \) be a closed, genus-0 surface that bounds a region, and let \(\mathbf{V}: \Omega \rightarrow \mathbb{R}^3\) be a \( C^2 \)-continuous parametric function representing this region, where \(\Omega = [0,1]^3\). Denote by $\mathbf{V}_x, \mathbf{V}_y, \mathbf{V}_z$ the partial derivatives of $\bf V$ with respect to the coordinates $x$, $y$ and $z$, respectively. We define the functional space $\mathbb{V}$ as
\[
\mathbb{V} = \left\{ \mathbf{V} \in C^2\left(\Omega, \mathbb{R}^3\right) \left |  ~~\mathbf{V}|_{\partial\Omega} = \mathcal{S}~\mathrm{and}~\mathbf{V}_x \cdot (\mathbf{V}_y \times \mathbf{V}_z) \neq 0  \right.\right\}
\]
to represent all regular and $C^2$-continuous parametric representations of the solid. The condition $\mathbf{V}_x \cdot (\mathbf{V}_y \times \mathbf{V}_z) \neq 0$ indicates that the volume element is non-zero, thereby the parametric representation is regular. We introduce an energy functional \(E(\mathbf{V}): \mathbb{V} \rightarrow \mathbb{R}\) defined over the functional space \(\mathbb{V}\).

\begin{definition}\label{energy-definition}
Let \(\rho: \Omega \to \mathbb{R}_+\) be a $C^2$-continuous density function defined over the domain \(\Omega\). The weighted squared volume energy \(E(\mathbf{V})\) is defined as the volume integral of weighted squared volumes:
\begin{equation}
E(\mathbf{V}) = \iiint_\Omega \rho \left(\mathbf{V}_x \cdot \left(\mathbf{V}_y \times \mathbf{V}_z\right)\right)^2 \mathrm{d}x\mathrm{d}y\mathrm{d}z.
\end{equation}\label{eqn:continuous-energy}
\end{definition}

Applying the variational principle, we show that the critical points of the energy functional are characterized by the function $\bf V$ such that the volume element remains constant. 

\begin{theorem}\label{Theorem 1}
At the critical points of the weighted squared volume energy $E(\mathbf{V})$, the quantity $\rho\left(\mathbf{V}_x\cdot(\mathbf{V}_y\times\mathbf{V}_z)\right)$ is constant across $\Omega$.
\end{theorem}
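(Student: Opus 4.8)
The plan is to apply the standard machinery of the calculus of variations to $E(\mathbf{V})$, which depends only on the first derivatives of $\mathbf{V}$. First I would introduce an admissible variation $\mathbf{V}_\epsilon = \mathbf{V} + \epsilon\mathbf{W}$ with $\mathbf{W} \in C^2(\Omega,\mathbb{R}^3)$ vanishing on $\partial\Omega$, so the boundary constraint $\mathbf{V}|_{\partial\Omega} = \mathcal{S}$ is preserved; since $\mathbf{V}_x\cdot(\mathbf{V}_y\times\mathbf{V}_z)\neq 0$ is an open condition, $\mathbf{V}_\epsilon$ remains in $\mathbb{V}$ for $\epsilon$ small, so differentiating in $\epsilon$ is legitimate. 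Writing $J = \mathbf{V}_x\cdot(\mathbf{V}_y\times\mathbf{V}_z)$, differentiating the triple product by the Leibniz rule and using its cyclic symmetry gives $\tfrac{d}{d\epsilon}\big|_{0} J_\epsilon = \mathbf{W}_x\cdot\mathbf{a} + \mathbf{W}_y\cdot\mathbf{b} + \mathbf{W}_z\cdot\mathbf{c}$, where $\mathbf{a} = \mathbf{V}_y\times\mathbf{V}_z$, $\mathbf{b} = \mathbf{V}_z\times\mathbf{V}_x$, $\mathbf{c} = \mathbf{V}_x\times\mathbf{V}_y$ are precisely the columns of the adjugate of the Jacobian of $\mathbf{V}$.

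Next I would write the first variation $\delta E = \iiint_\Omega 2\rho J\,(\mathbf{W}_x\cdot\mathbf{a} + \mathbf{W}_y\cdot\mathbf{b} + \mathbf{W}_z\cdot\mathbf{c})\,\mathrm{d}x\mathrm{d}y\mathrm{d}z$ and integrate by parts in each of the three variables. The boundary contributions vanish because $\mathbf{W}|_{\partial\Omega}=0$, leaving $\delta E = -\iiint_\Omega 2\,\mathbf{W}\cdot\big(\partial_x(\rho J\mathbf{a}) + \partial_y(\rho J\mathbf{b}) + \partial_z(\rho J\mathbf{c})\big)\,\mathrm{d}x\mathrm{d}y\mathrm{d}z$. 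By the fundamental lemma of the calculus of variations, requiring $\delta E = 0$ for every admissible $\mathbf{W}$ forces the Euler--Lagrange equation $\partial_x(\rho J\mathbf{a}) + \partial_y(\rho J\mathbf{b}) + \partial_z(\rho J\mathbf{c}) = \mathbf{0}$.

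The crucial step — and the one I expect to be the main obstacle in keeping the argument clean — is the identity $\partial_x\mathbf{a} + \partial_y\mathbf{b} + \partial_z\mathbf{c} = \partial_x(\mathbf{V}_y\times\mathbf{V}_z) + \partial_y(\mathbf{V}_z\times\mathbf{V}_x) + \partial_z(\mathbf{V}_x\times\mathbf{V}_y) = \mathbf{0}$. Expanding each cross product by the Leibniz rule produces six terms, and invoking equality of mixed second partials (this is where the $C^2$ hypothesis is used) together with antisymmetry of the cross product, the six terms cancel in pairs; this is the column form of the Piola identity that the adjugate of a Jacobian is divergence-free. Applying it to the Euler--Lagrange equation via the product rule annihilates the $\rho J(\partial_x\mathbf{a}+\partial_y\mathbf{b}+\partial_z\mathbf{c})$ term and leaves $(\partial_x(\rho J))\,\mathbf{a} + (\partial_y(\rho J))\,\mathbf{b} + (\partial_z(\rho J))\,\mathbf{c} = \mathbf{0}$.

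Finally I would use regularity to conclude: since $\mathbf{a}\cdot\mathbf{V}_x = \mathbf{b}\cdot\mathbf{V}_y = \mathbf{c}\cdot\mathbf{V}_z = J \neq 0$ while the mixed dot products all vanish, the matrix with columns $\mathbf{a},\mathbf{b},\mathbf{c}$ satisfies $[\mathbf{a}\ \mathbf{b}\ \mathbf{c}]^{\!\top}[\mathbf{V}_x\ \mathbf{V}_y\ \mathbf{V}_z] = J\,I$ and is therefore nonsingular, so $\{\mathbf{a},\mathbf{b},\mathbf{c}\}$ is a basis of $\mathbb{R}^3$. Hence the only way their linear combination can vanish is $\partial_x(\rho J) = \partial_y(\rho J) = \partial_z(\rho J) = 0$, i.e. $\nabla(\rho J)\equiv\mathbf{0}$ on the connected domain $\Omega$, which is exactly the statement that $\rho\big(\mathbf{V}_x\cdot(\mathbf{V}_y\times\mathbf{V}_z)\big)$ is constant across $\Omega$.
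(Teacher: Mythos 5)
Your proposal is correct and follows essentially the same route as the paper's proof: an admissible variation vanishing on $\partial\Omega$, integration by parts, the Piola identity (the paper's Property~1) to kill the $\rho J\,(\partial_x\mathbf{a}+\partial_y\mathbf{b}+\partial_z\mathbf{c})$ term, and then the conclusion $\nabla(\rho J)=\mathbf{0}$. The only cosmetic difference is the last step: the paper dots the Euler--Lagrange equation with $\mathbf{V}_x,\mathbf{V}_y,\mathbf{V}_z$ and uses the orthogonality relations of Property~2, whereas you phrase the same computation as nonsingularity of the adjugate matrix $[\mathbf{a}\ \mathbf{b}\ \mathbf{c}]$ via $[\mathbf{a}\ \mathbf{b}\ \mathbf{c}]^{\top}[\mathbf{V}_x\ \mathbf{V}_y\ \mathbf{V}_z]=J\,I$.
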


This theorem implies that minimizing this energy functional promotes uniform weighted volume elements. See the supplementary material for the proof.

\textbf{Remark 1.} Our weighted squared volume energy is an extension of the weighted squared area energy, as proposed by~\cite{twosimple} for remeshing surfaces. In the context of 2-manifolds, Renka proved that minimizers of the squared area energy lead to uniformly parameterized minimal surfaces. Building on this result, he developed an algorithm to enhance the quality of triangle meshes. Inspired by Renka's work, we adapt his approach by replacing squared areas with squared volumes, calculated using scalar triple products.


\subsection{Convex Optimization for Vertex Positioning} 
\label{subsec:convex}

The weighted squared volume energy, as defined in Equation~(\ref{eqn:continuous-energy}), is presented in a continuous form using a volume integral. In this subsection, we discretize it for use on tetrahedral meshes. Let \(\mathcal{M} = (V, E, F, T)\) represent a tetrahedral mesh, where \(V, E, F\) and \(T\) denote the sets of vertices, edges, faces, and tetrahedra, respectively. Consider an internal vertex \(\boldsymbol{v}_i \in V\). Denote by \(S_i\) the space defined by the 1-ring neighborhood of \(\boldsymbol{v}_i\), which consists of $n$ tetrahedra incident to \(\boldsymbol{v}_i\). Consider the $m$-th tetrahedron in \( S_i \), delineated by the cyclically ordered vertices \(\boldsymbol{v}_i\), \(\boldsymbol{v}_{mj}\), \(\boldsymbol{v}_{mk}\), and \(\boldsymbol{v}_{ml}\). The volume of this tetrahedron, denoted as $\overline{V}_m(\boldsymbol{v}_i)$, can be computed using scalar triple product:
\[
\overline{V}_m(\boldsymbol{v}_i) = \frac{1}{6} (\boldsymbol{v}_{mk} - \boldsymbol{v}_{ml}) \cdot \left( (\boldsymbol{v}_i - \boldsymbol{v}_{ml}) \times (\boldsymbol{v}_{mj} - \boldsymbol{v}_{ml}) \right) .
\]
Assuming the density function 
$\rho$ varies insignificantly within each tetrahedron, we approximate the density for the entire tetrahedron using the value defined at its centroid. We then discretize the energy functional \(E(\mathbf{V})\) for the space \(S_i\) by summing the squared volumes of the tetrahedra within \(S_i\): 
\begin{equation}\label{equ:discrete_energy_Volume_based}
E(\boldsymbol{v}_i) = \sum_{m=1}^{n} \rho_m \overline{V}_m^2(\boldsymbol{v}_i),
\end{equation} where \(\rho_m\) is the density value at the centroid of the \(m\)-th tetrahedron. To ensure the validity of tetrahedral configurations and prevent the presence of negative volumes, we impose the following constraint:
\[
f_m(\boldsymbol{v}_i)=(\boldsymbol{v}_{mk} - \boldsymbol{v}_{ml}) \cdot \left( (\boldsymbol{v}_i - \boldsymbol{v}_{ml}) \times (\boldsymbol{v}_{mj} - \boldsymbol{v}_{ml}) \right) > 0.
\]
We then formulate the optimization problem as:
\begin{equation}
\begin{aligned}
& \underset{\boldsymbol{v}_i}{\min} & & E(\boldsymbol{v}_i) \\
& \text{subject~to~} & & f_m(\boldsymbol{v}_i) > 0, \quad \text{for all } m = 1 \ldots n.
\end{aligned}\label{equ:convexOpt}
\end{equation}

\begin{theorem}\label{thm:convex_optimization}
The optimization problem described in Equation (\ref{equ:convexOpt}) is a convex optimization problem. See the supplementary material for the proof.
\end{theorem}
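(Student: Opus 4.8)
The plan is to verify directly the two defining ingredients of a convex program: that the feasible region is a convex set, and that the objective is a convex function on it.

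First I would rewrite the sign-constraint function $f_m$ as an affine function of the free variable $\boldsymbol{v}_i$. Using the cyclic invariance of the scalar triple product, $f_m(\boldsymbol{v}_i) = (\boldsymbol{v}_{mk}-\boldsymbol{v}_{ml})\cdot\big((\boldsymbol{v}_i-\boldsymbol{v}_{ml})\times(\boldsymbol{v}_{mj}-\boldsymbol{v}_{ml})\big)$ can be recast as $\boldsymbol{n}_m\cdot\boldsymbol{v}_i + c_m$, where $\boldsymbol{n}_m = (\boldsymbol{v}_{mj}-\boldsymbol{v}_{ml})\times(\boldsymbol{v}_{mk}-\boldsymbol{v}_{ml})$ is the fixed area-weighted normal of the face of the $m$-th tetrahedron opposite $\boldsymbol{v}_i$, and $c_m = -\boldsymbol{n}_m\cdot\boldsymbol{v}_{ml}$ is a constant, since every vertex other than $\boldsymbol{v}_i$ is held fixed during the local optimization. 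Consequently each constraint $f_m(\boldsymbol{v}_i)>0$ describes an open half-space of $\mathbb{R}^3$, and the feasible set is the intersection $\bigcap_{m=1}^{n}\{\boldsymbol{v}_i : f_m(\boldsymbol{v}_i)>0\}$, an intersection of convex sets and therefore convex (an open convex polyhedron).

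Next I would show that the objective $E(\boldsymbol{v}_i) = \sum_{m=1}^{n}\rho_m\,\overline{V}_m^2(\boldsymbol{v}_i) = \tfrac{1}{36}\sum_{m=1}^{n}\rho_m\, f_m^2(\boldsymbol{v}_i)$ is convex. Each $f_m$ is affine, so $f_m^2$ is the composition of the convex scalar map $t\mapsto t^2$ with an affine map and is hence convex; multiplying by the positive weight $\rho_m>0$ and summing preserves convexity. Equivalently, and this extra computation also yields uniqueness of the minimizer, I would form the Hessian explicitly and obtain $\nabla^2 E = \tfrac{1}{18}\sum_{m=1}^{n}\rho_m\,\boldsymbol{n}_m\boldsymbol{n}_m^{\top}\succeq 0$, a nonnegative combination of rank-one positive-semidefinite matrices. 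Combining the two parts, the problem minimizes a convex function over a convex set and is therefore a convex program.

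The manipulations here are routine (expanding the triple product, differentiating twice); the one step deserving care is the affine-in-$\boldsymbol{v}_i$ reduction, where one must confirm that $\boldsymbol{v}_i$ enters each triple product linearly and exactly once, so that no quadratic or higher-order terms in $\boldsymbol{v}_i$ appear. As a secondary remark I would note that whenever the opposite-face normals $\{\boldsymbol{n}_m\}_{m=1}^{n}$ span $\mathbb{R}^3$ — which holds for any valid $1$-ring with $\boldsymbol{v}_i$ in its interior — the Hessian is positive definite, so $E$ is strictly convex and the constrained minimizer is unique.
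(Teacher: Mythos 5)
Your proposal is correct and follows essentially the same route as the paper's supplementary proof: the feasible set is an intersection of half-spaces (the paper's Lemma 1), and the objective is convex because its Hessian is the positive-semidefinite sum $\tfrac{1}{18}\sum_m \rho_m\,\boldsymbol{s}_m\otimes\boldsymbol{s}_m$ of rank-one terms (the paper's Lemma 2, with your $\boldsymbol{n}_m$ equal to the paper's $\boldsymbol{s}_m$). Your explicit verification that each $f_m$ is affine in $\boldsymbol{v}_i$, the squared-affine composition argument, and the strict-convexity remark are small additions, but they do not change the underlying argument.
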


We employ a Newton solver to address this optimization problem for each internal vertex. By carefully selecting the density function $\rho$, we demonstrate that the new vertex positions induce a tetrahedral mesh with enhanced mesh quality. Specifically, we adopt two different weighting strategies in the weighted squared volume energy: a constant weight and a weight adaptive to local geometry. These strategies are detailed as follows. 

In the continuous setting, setting the density value \(\rho_m \equiv 1\) results in all volume elements in the integral having equal volumes. However, in the discrete setting, the continuous volume integral within the domain \(\Omega\) is approximated by the summation of the volumes of a finite number of individual tetrahedra. Due to discretization errors, achieving exact volume uniformity is nearly impossible. Typically, the volumes of the tetrahedra follow a normal distribution. Experimental results confirm that as the resolution of the tetrahedral mesh increases, the standard deviation of this normal distribution decreases rapidly, as illustrated in Figure~\ref{fig:volume_distribution}, justifying the practical use of constant density for achieving uniform volume. 

\begin{figure}[htbp]
\centering
\setlength{\abovecaptionskip}{0pt}
\setlength{\belowcaptionskip}{0pt}
\subcaptionbox*{}{%
\begin{minipage}[t]{0.2\linewidth}
\centering
\includegraphics[width=\linewidth]{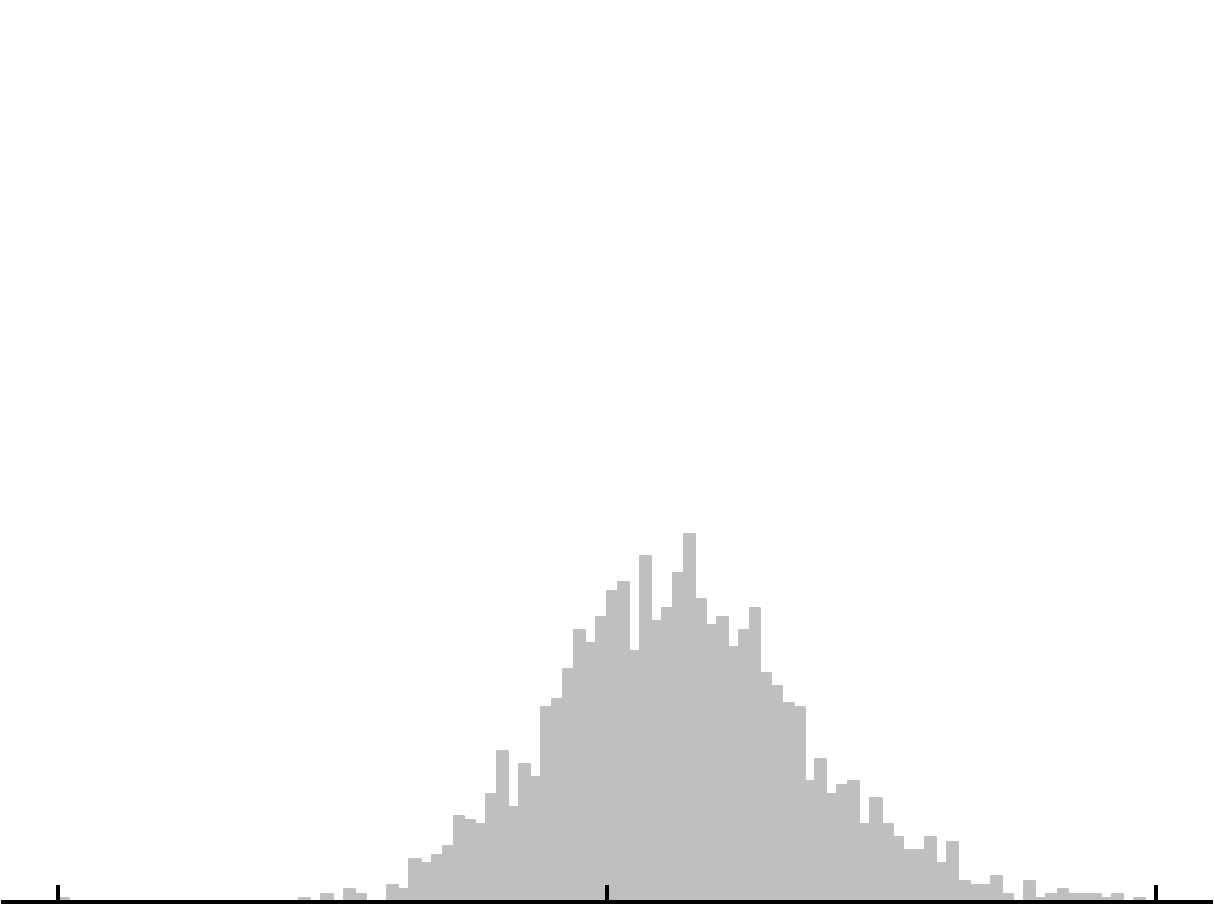}
\end{minipage}
}%
\subcaptionbox*{}{%
\begin{minipage}[t]{0.2\linewidth}
\centering
\includegraphics[width=\linewidth]{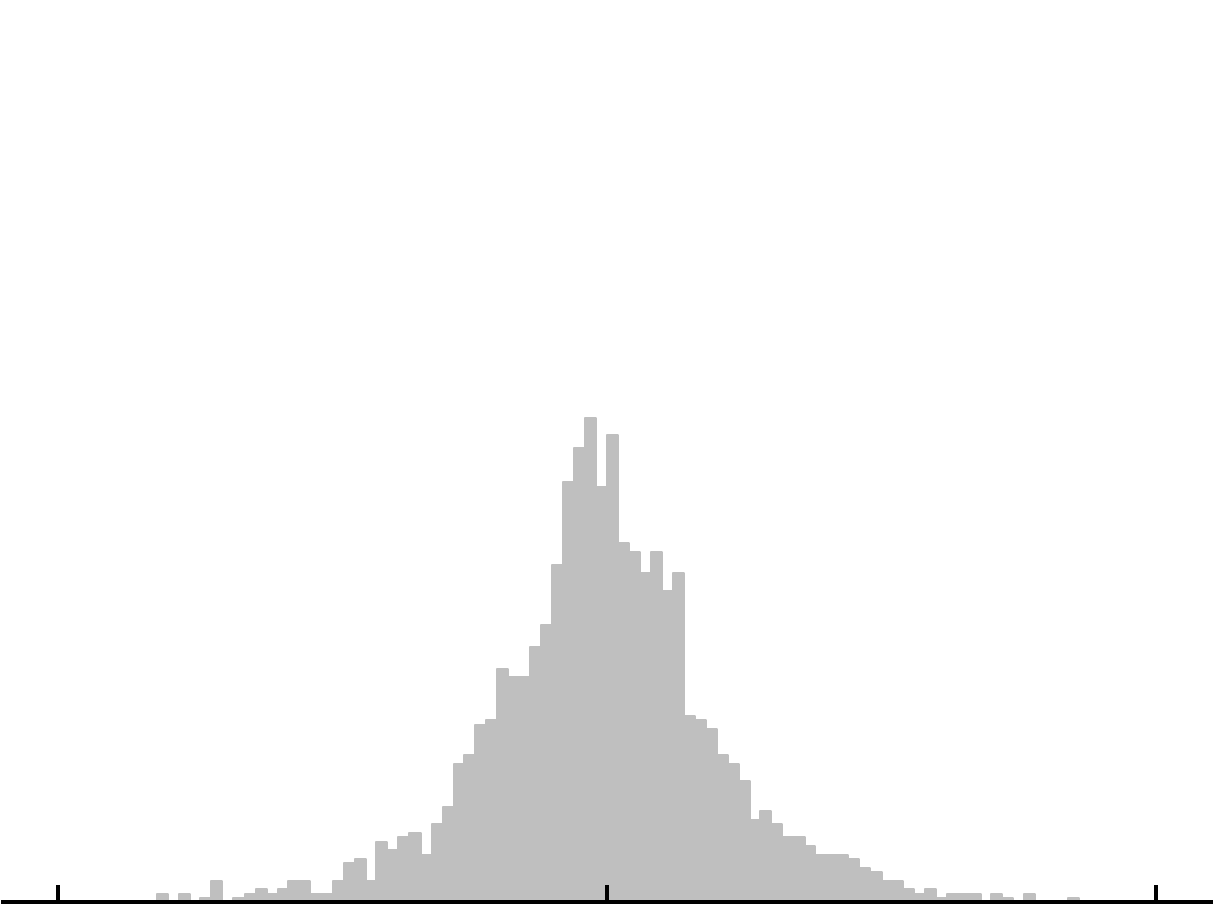}
\end{minipage}
}%
\subcaptionbox*{}{%
\begin{minipage}[t]{0.2\linewidth}
\centering
\includegraphics[width=\linewidth]{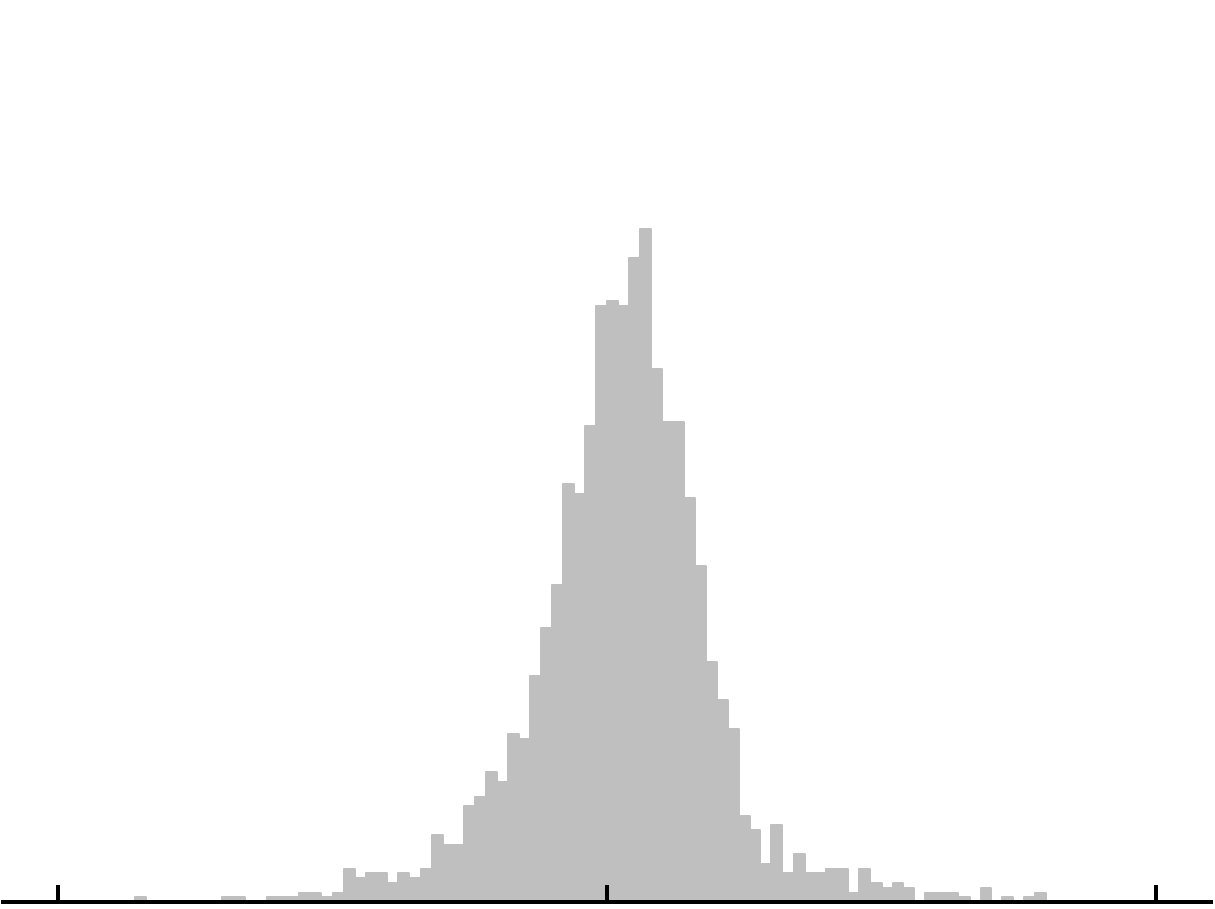}
\end{minipage}
}%
\subcaptionbox*{}{%
\begin{minipage}[t]{0.2\linewidth}
\centering
\includegraphics[width=\linewidth]{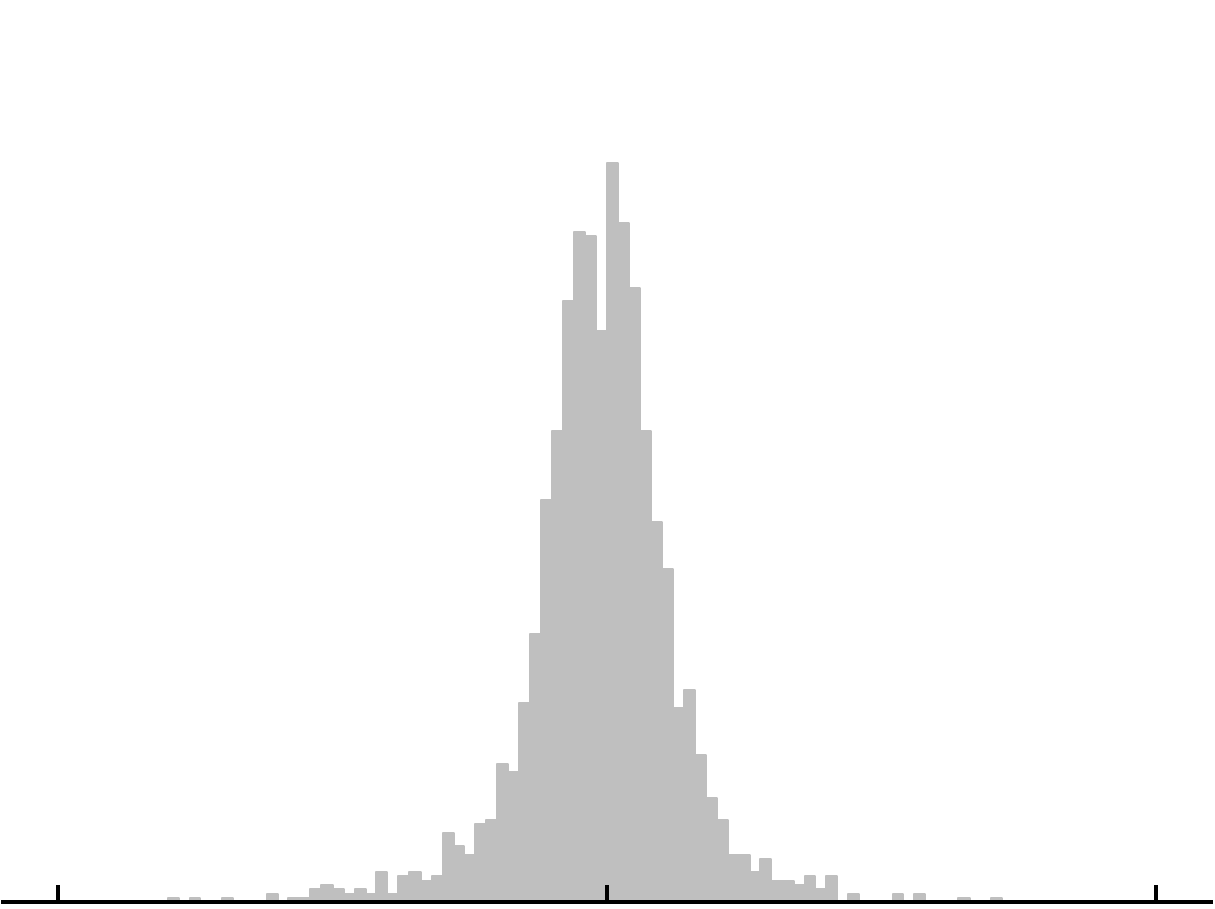}
\end{minipage}
}%
\vspace{-5mm}
\subcaptionbox*{}{%
\begin{minipage}[t]{0.2\linewidth}
\centering
\begin{overpic}[width=\linewidth]{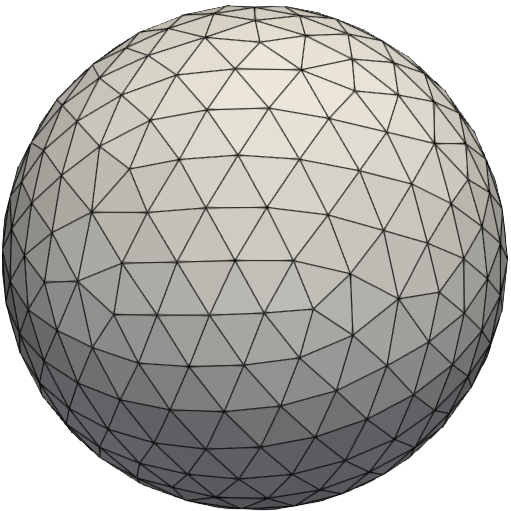}
\put(70,92){\color{black}\small 2K}
\end{overpic}
\end{minipage}
}%
\subcaptionbox*{}{%
\begin{minipage}[t]{0.2\linewidth}
\centering
\begin{overpic}[width=\linewidth]{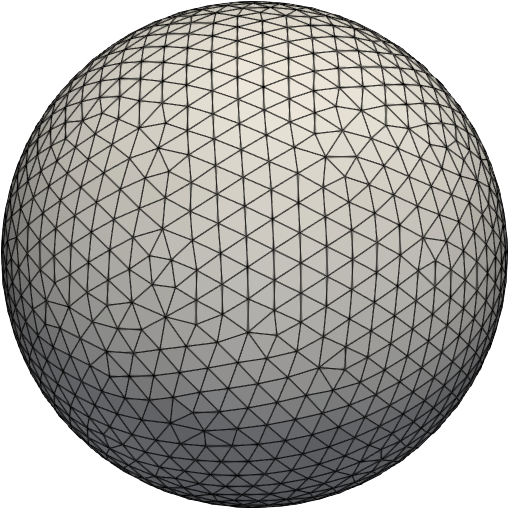}
\put(70,92){\color{black}\small 17K}
\end{overpic}
\end{minipage}
}%
\subcaptionbox*{}{%
\begin{minipage}[t]{0.2\linewidth}
\centering
\begin{overpic}[width=\linewidth]{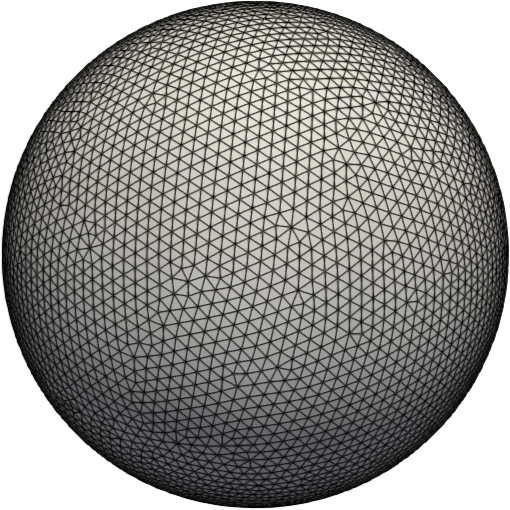}
\put(70,92){\color{black}\small 126K}
\end{overpic}
\end{minipage}
}%
\subcaptionbox*{}{%
\begin{minipage}[t]{0.2\linewidth}
\centering
\begin{overpic}[width=\linewidth]{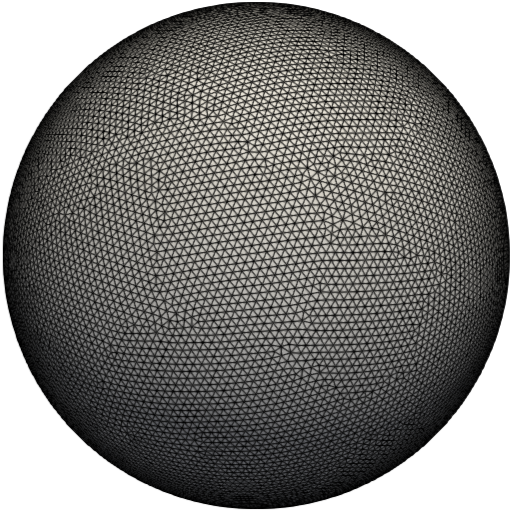}
\put(70,92){\color{black}\small 559K}
\end{overpic}
\end{minipage}
}%
\vspace{-6mm}
\subcaptionbox*{}{%
\begin{minipage}[t]{0.2\linewidth}
\centering
\includegraphics[width=\linewidth]{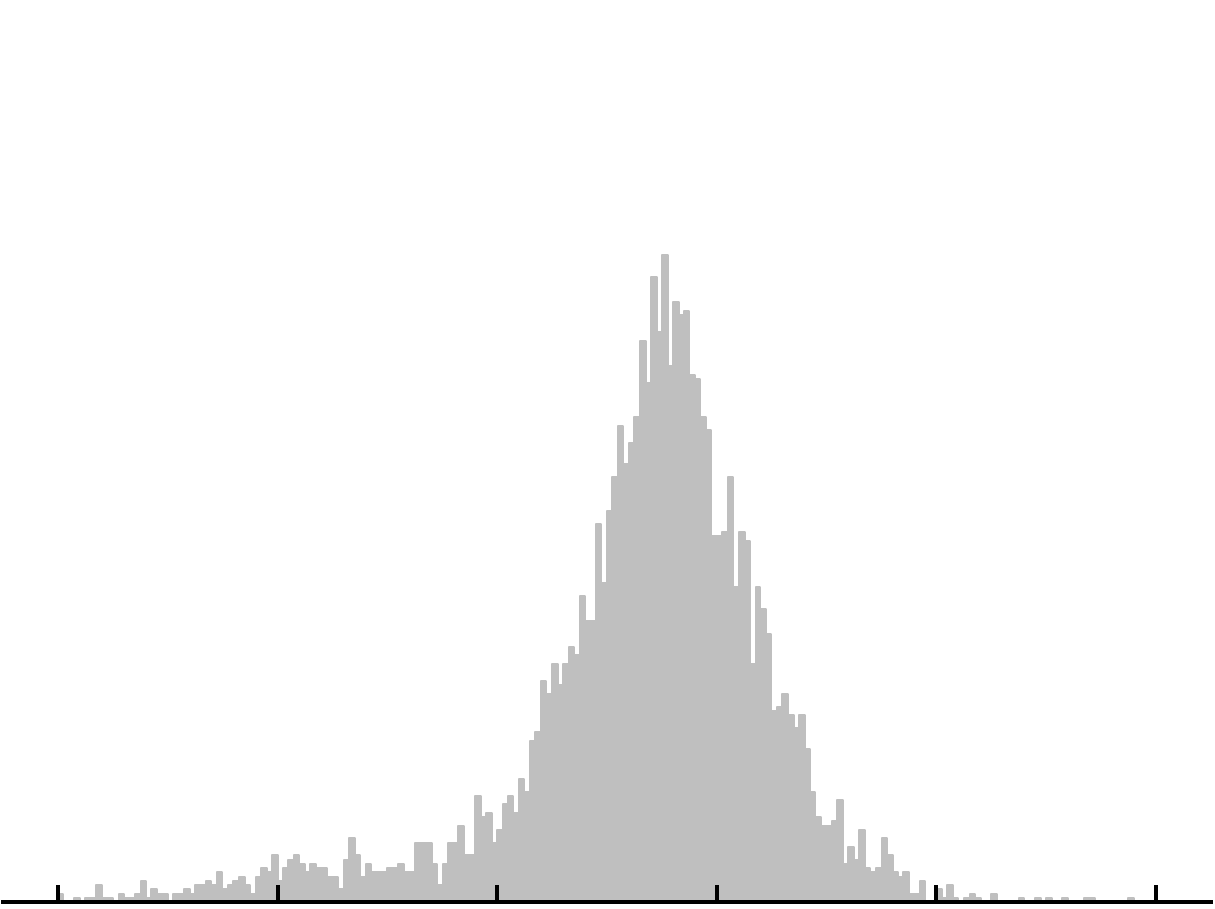}
\end{minipage}
}%
\subcaptionbox*{}{%
\begin{minipage}[t]{0.2\linewidth}
\centering
\includegraphics[width=\linewidth]{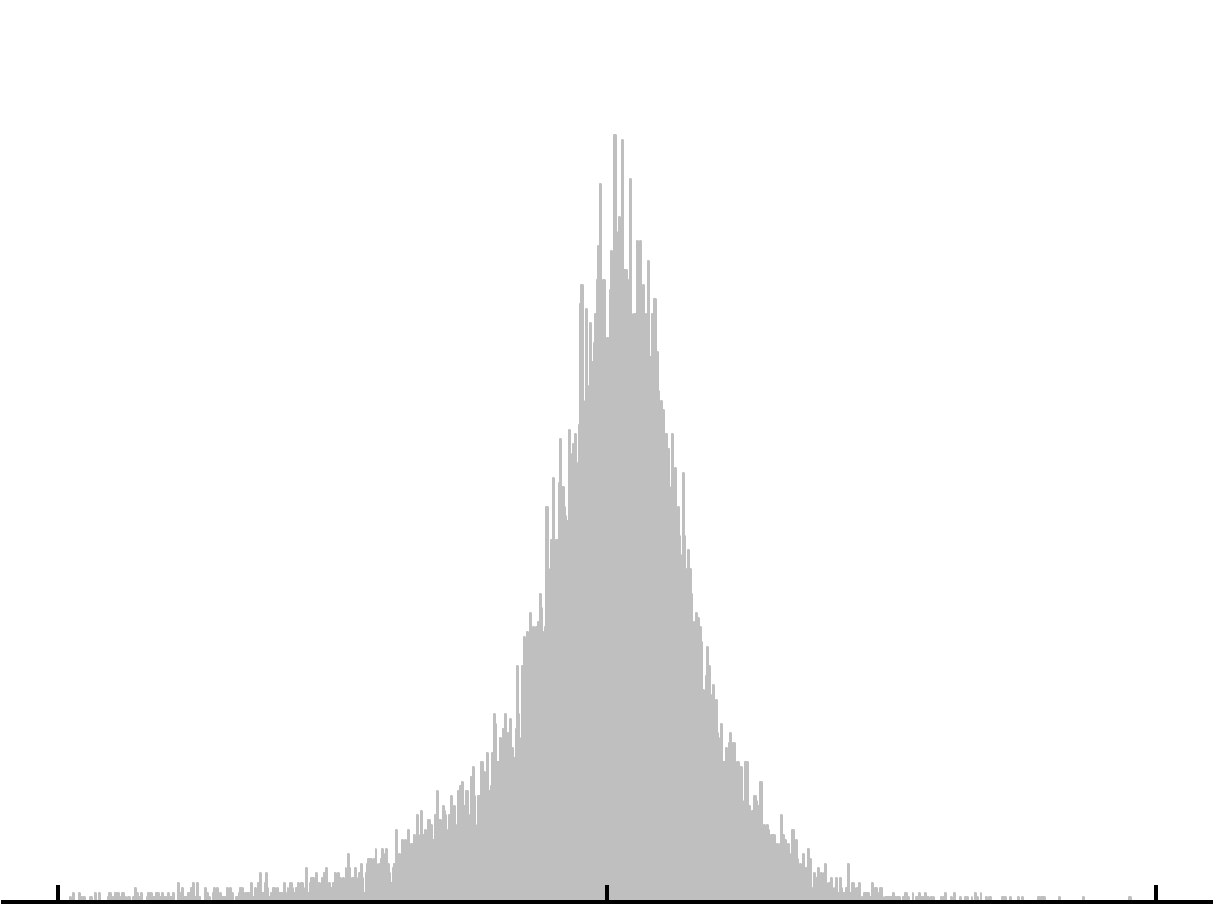}
\end{minipage}
}%
\subcaptionbox*{}{%
\begin{minipage}[t]{0.2\linewidth}
\centering
\includegraphics[width=\linewidth]{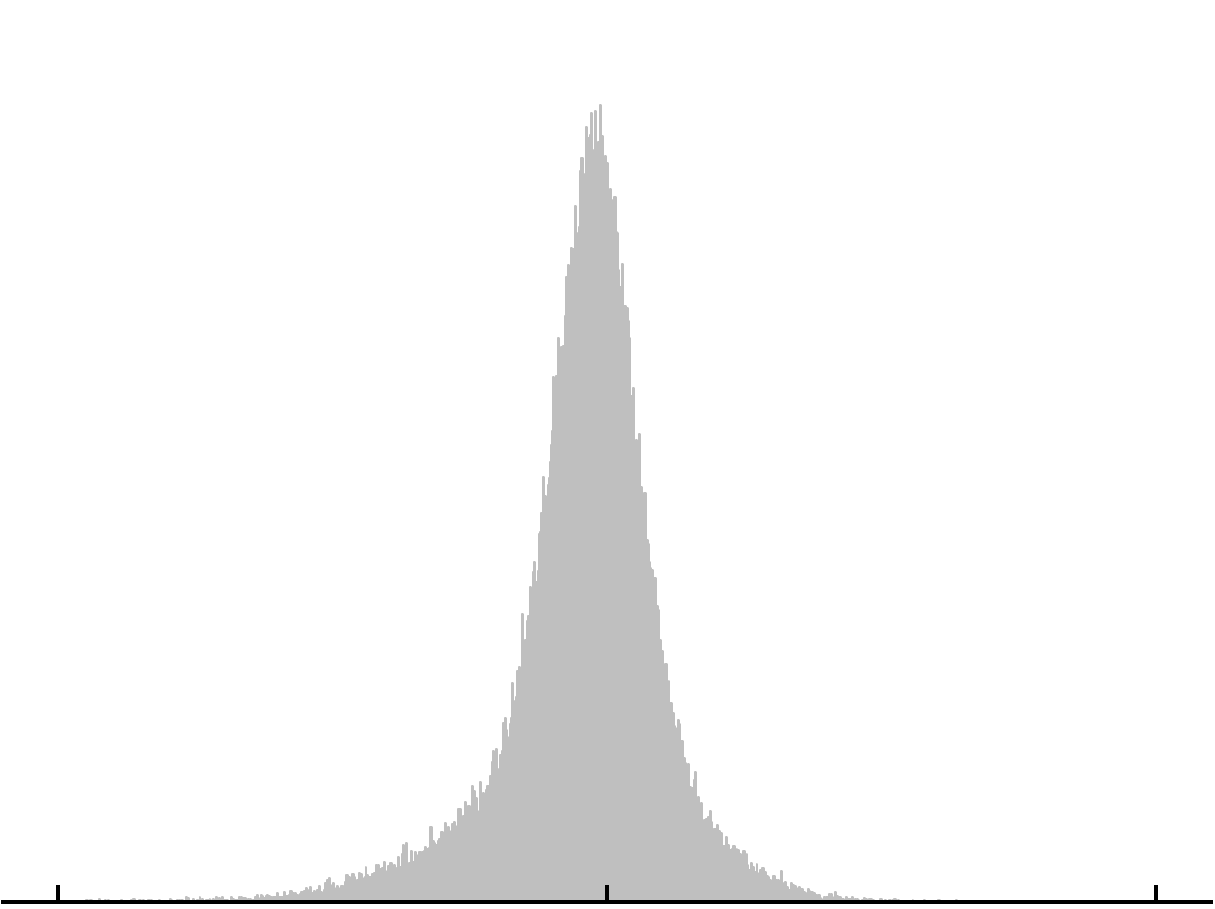}
\end{minipage}
}%
\subcaptionbox*{}{%
\begin{minipage}[t]{0.2\linewidth}
\centering
\includegraphics[width=\linewidth]{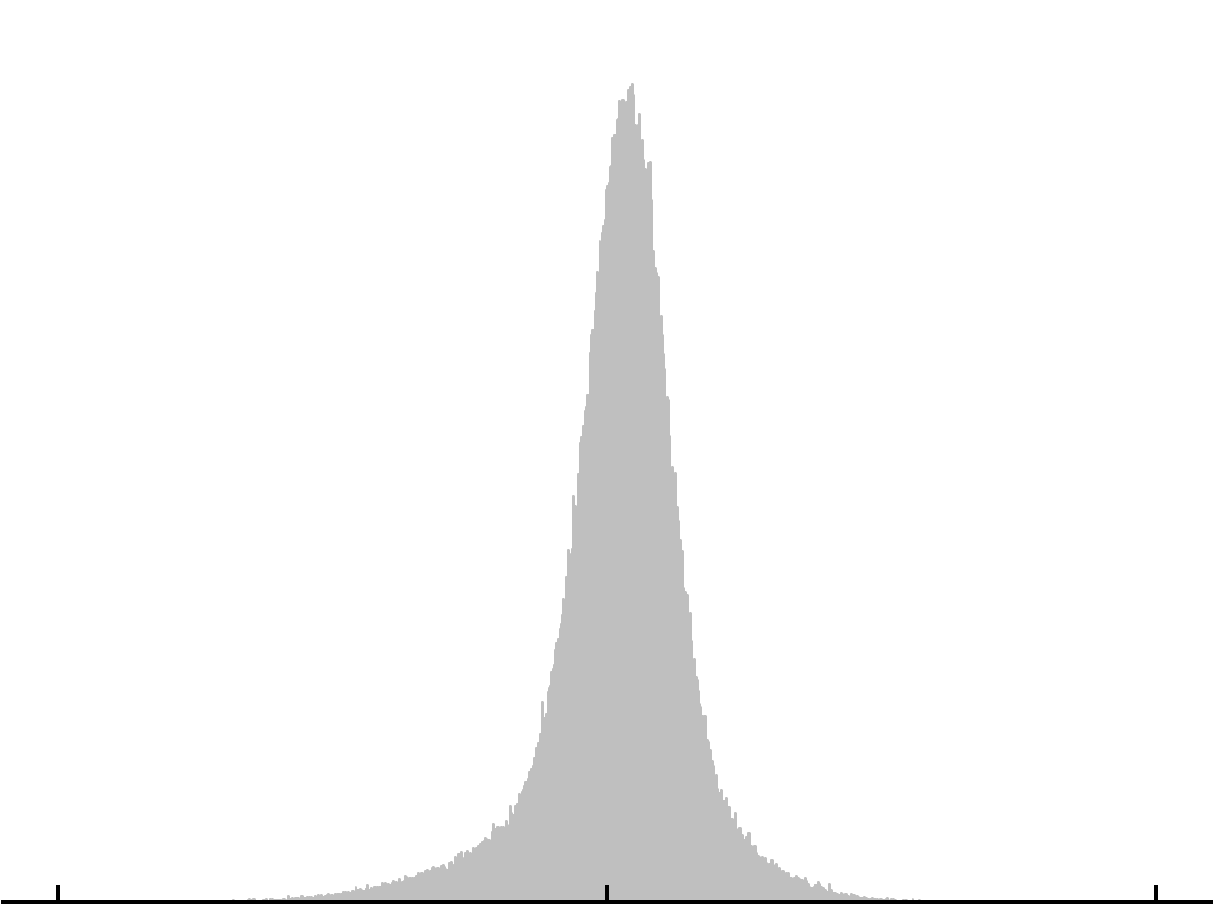}
\end{minipage}
}%
\vspace{-5mm}
\subcaptionbox*{}{%
\begin{minipage}[t]{0.2\linewidth}
\centering
\begin{overpic}[width=\linewidth]{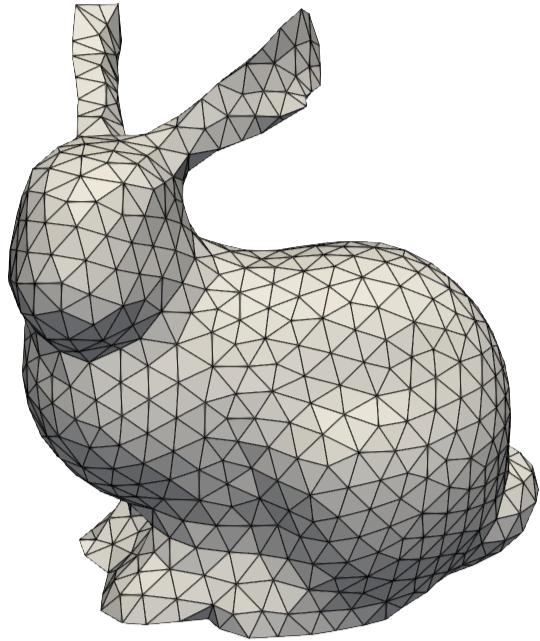}
\put(60,90){\color{black}\small 7K}
\end{overpic}
\end{minipage}
}%
\subcaptionbox*{}{%
\begin{minipage}[t]{0.2\linewidth}
\centering
\begin{overpic}[width=\linewidth]{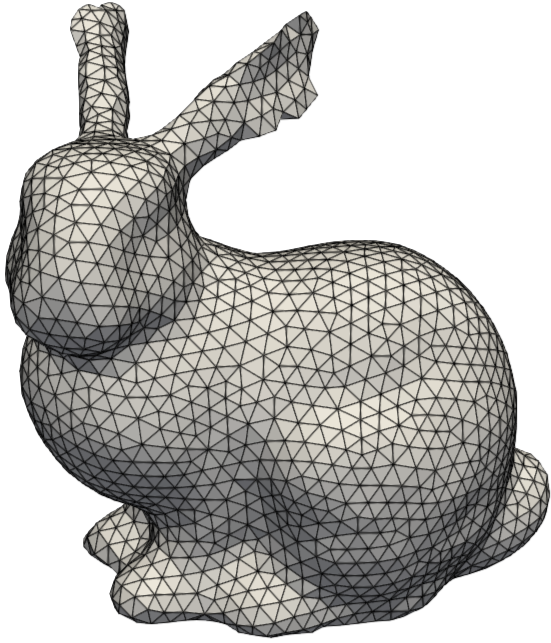}
\put(60,90){\color{black}\small 25K}
\end{overpic}
\end{minipage}
}%
\subcaptionbox*{}{%
\begin{minipage}[t]{0.2\linewidth}
\centering
\begin{overpic}[width=\linewidth]{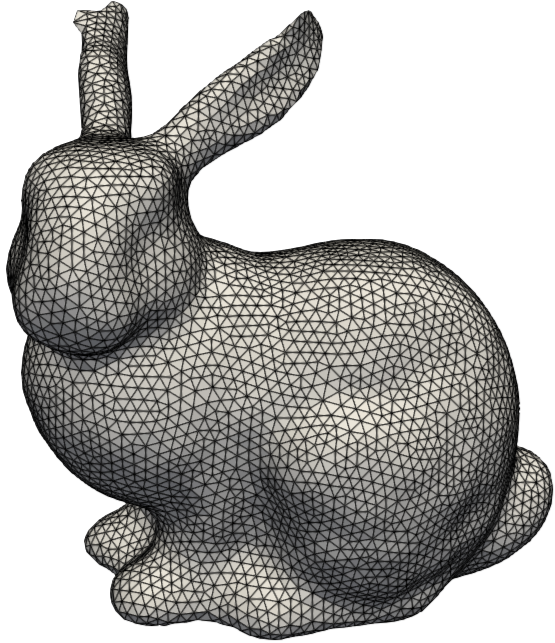}
\put(60,90){\color{black}\small 98K}
\end{overpic}
\end{minipage}
}%
\subcaptionbox*{}{%
\begin{minipage}[t]{0.2\linewidth}
\centering
\begin{overpic}[width=\linewidth]{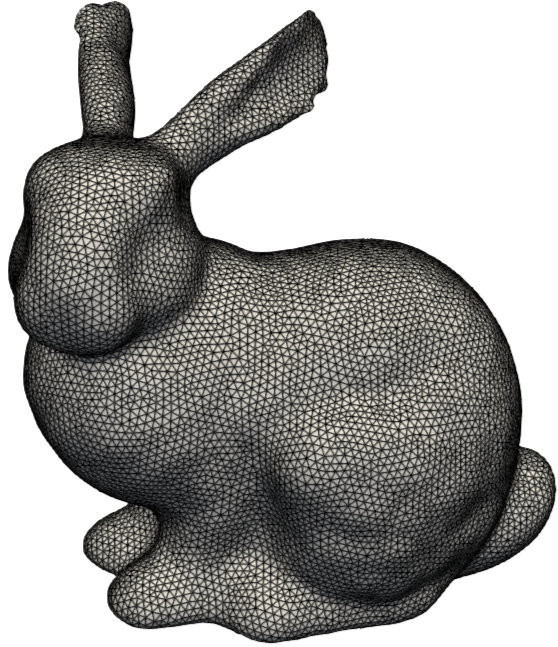}
\put(60,90){\color{black}\small 316K}
\end{overpic}
\end{minipage}
}%
\vspace{-0.1in}
\caption{Minimizing volume-oriented energy promotes uniformity in the volumes of tetrahedral elements. As the resolution of the boundary mesh increases, the generated tetrahedral meshes also exhibit a greater number of degrees of freedom. Consequently, this leads to improved uniformity in tetrahedral volumes, as demonstrated by a reduced standard deviation in the volume distribution.}\label{fig:volume_distribution}
\end{figure}

However, pursuing uniform tetrahedral volumes alone does not guarantee a high-quality tetrahedral mesh, as this volume-centric approach does not ensure uniformity in tetrahedral shapes. To reduce shape variation, we introduce a geometry-adaptive weight \(\rho_m = \frac{1}{A_m(\boldsymbol{v}_i)}\), where \(A_m(\boldsymbol{v}_i)\) is the area of the triangle opposing \(\boldsymbol{v}_i\) in the $m$-th tetrahedron. Note that $\frac{\overline{V}_m(\boldsymbol{v}_i)}{A_m(\boldsymbol{v}_i)}$ is proportional to the height with respect to the opposing triangle. This adaptive weight encourages \(\boldsymbol{v}_i\)
to align with the perpendicular foot of the opposite triangle, effectively equalizing the heights across the tetrahedra in $S_i$. Consequently, the dihedral angle for the edges opposing \(\boldsymbol{v}_i\) tend to be equalized. 



In our algorithm, we refer to  the energy using constant weight as \textit{volume-oriented} energy and the energy using adaptive weight $A_m$ as \textit{angle-oriented} energy.

\begin{algorithm}
\setlength{\abovecaptionskip}{0pt}
\setlength{\belowcaptionskip}{0pt}
\caption{WSVM}\label{algo:GlobalOptimization}
\begin{algorithmic}[1] 
\State \textbf{Input:} Closed triangle meshes $\mathcal{M}$ indicating the exterior and interior boundaries, target edge length $t$, and the convergence criteria $\varepsilon$
\State \textbf{Output:} A tetrahedral mesh with $\mathcal{M}$ as boundaries
\State Generate the initial tetrahedral mesh $\mathcal{T}^{(0)}$
\State $\mathcal{T}^{(1)}\gets $\Call{SquaredVolumeMinimizing}{$\mathcal{T}^{(0)}, 1$}
\State $\mathcal{T}^{(2)}\gets$ \Call{SquaredVolumeMinimizing}{$\mathcal{T}^{(1)}, 1/A$}
\State Output $ \mathcal{T}^{(2)}$
\end{algorithmic}
\end{algorithm}

\begin{figure*}[htbp]
\centering
\setlength{\abovecaptionskip}{0pt}
\setlength{\belowcaptionskip}{0pt}
\subcaptionbox{Input $\mathcal{M}$\label{fig:pipeline_input}}{%
\begin{minipage}[t]{0.16\linewidth}
\centering
\includegraphics[width=\linewidth]{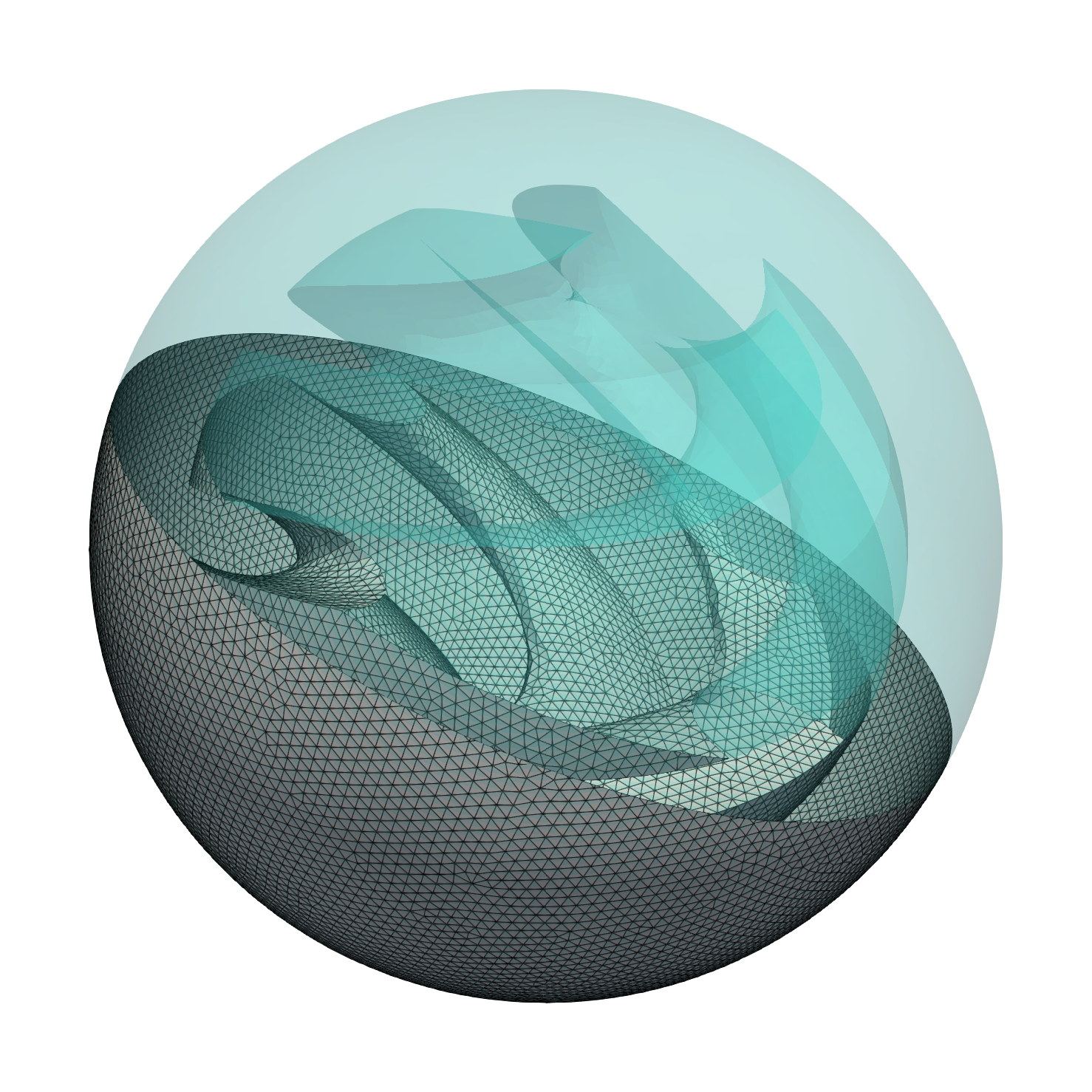}
\end{minipage}}%
\subcaptionbox{Volume-oriented optimizing\label{fig:pipeline_V_based}}
{%
\begin{minipage}[t]{0.34\linewidth} 
\centering
\vspace{0pt} 
\begin{minipage}[t]{0.24\linewidth}
\centering
\includegraphics[width=\linewidth]{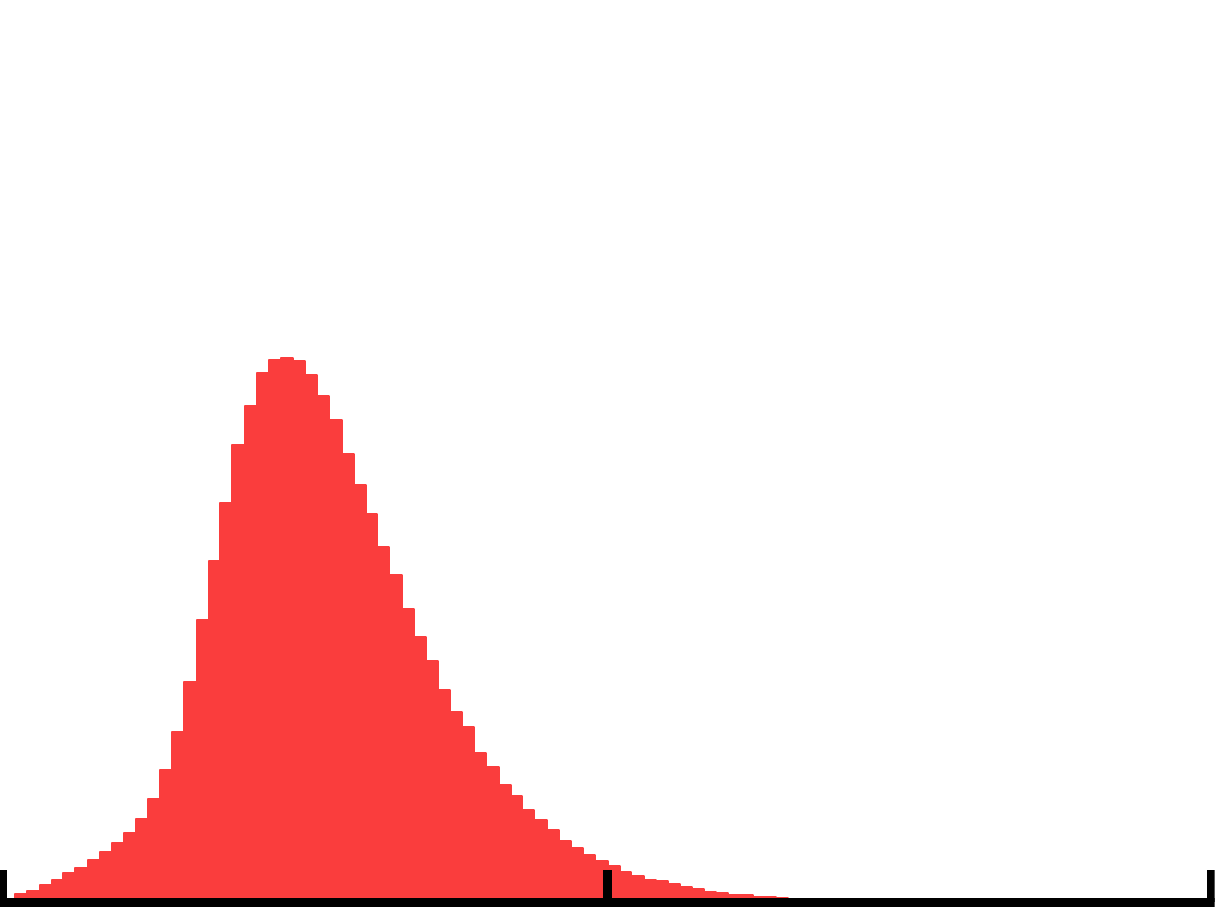}
\end{minipage}%
\begin{minipage}[t]{0.24\linewidth}
\centering
\includegraphics[width=\linewidth]{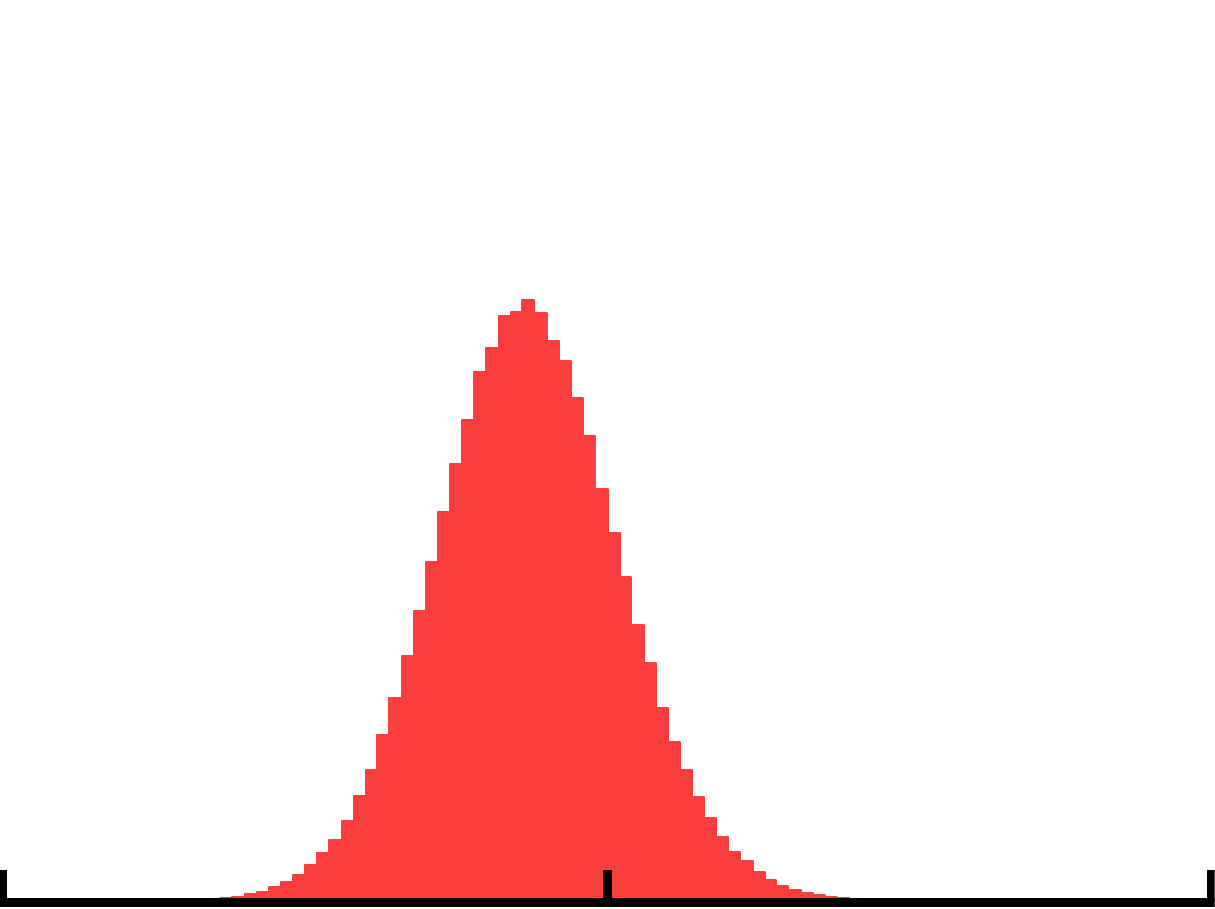}
\end{minipage}
\begin{minipage}[t]{0.24\linewidth}
\centering
\includegraphics[width=\linewidth]{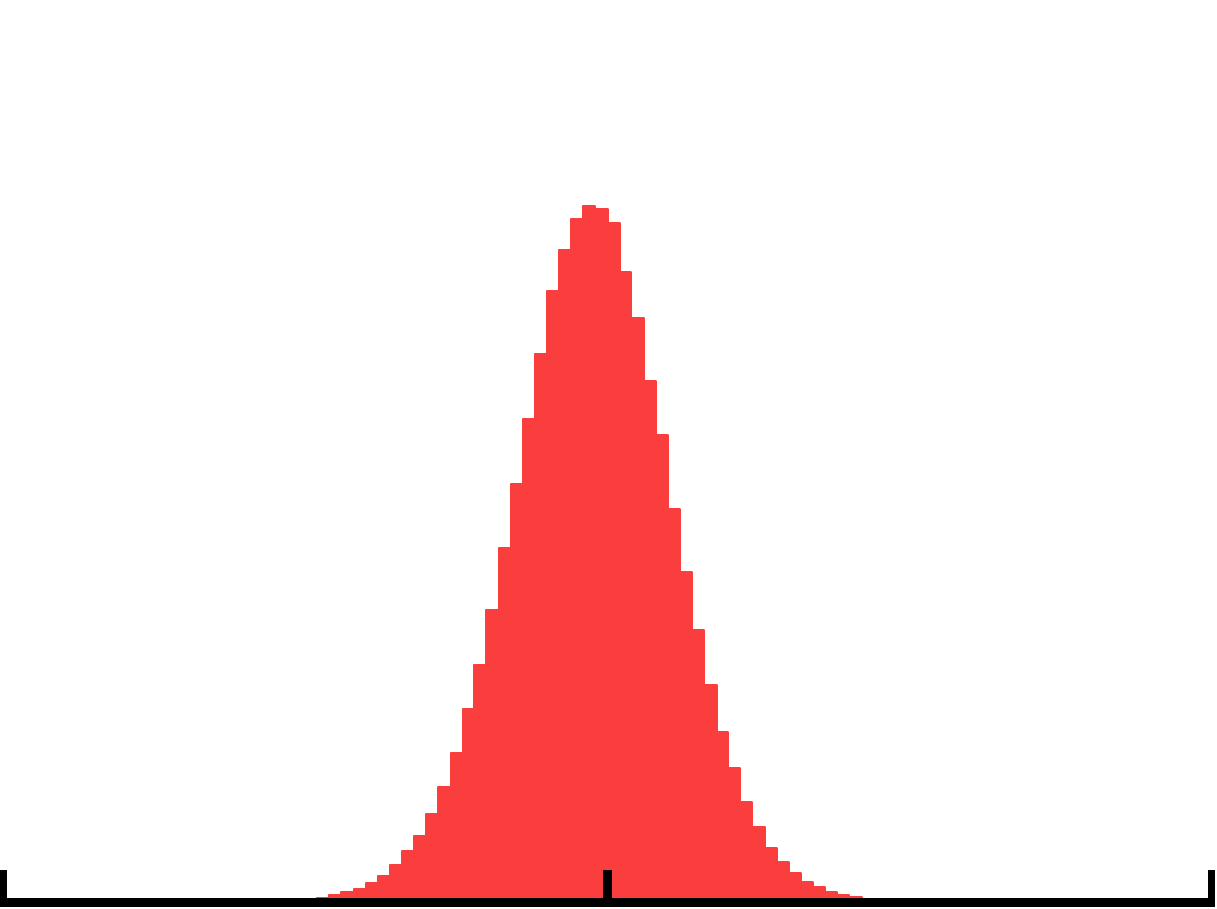}
\end{minipage}
\begin{minipage}[t]{0.24\linewidth}
\centering
\includegraphics[width=\linewidth]{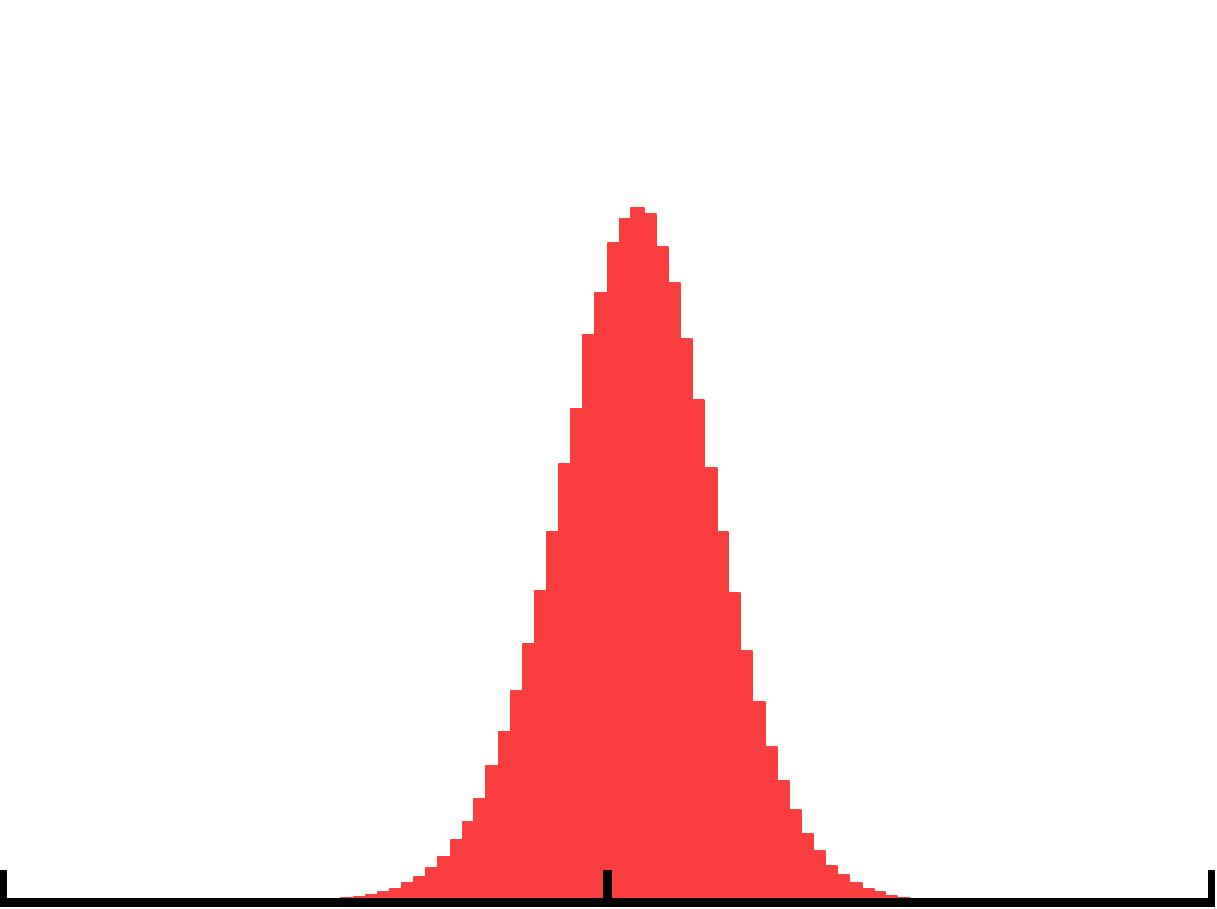}
\end{minipage}
\begin{minipage}[t]{0.24\linewidth}
\centering
\includegraphics[width=\linewidth]{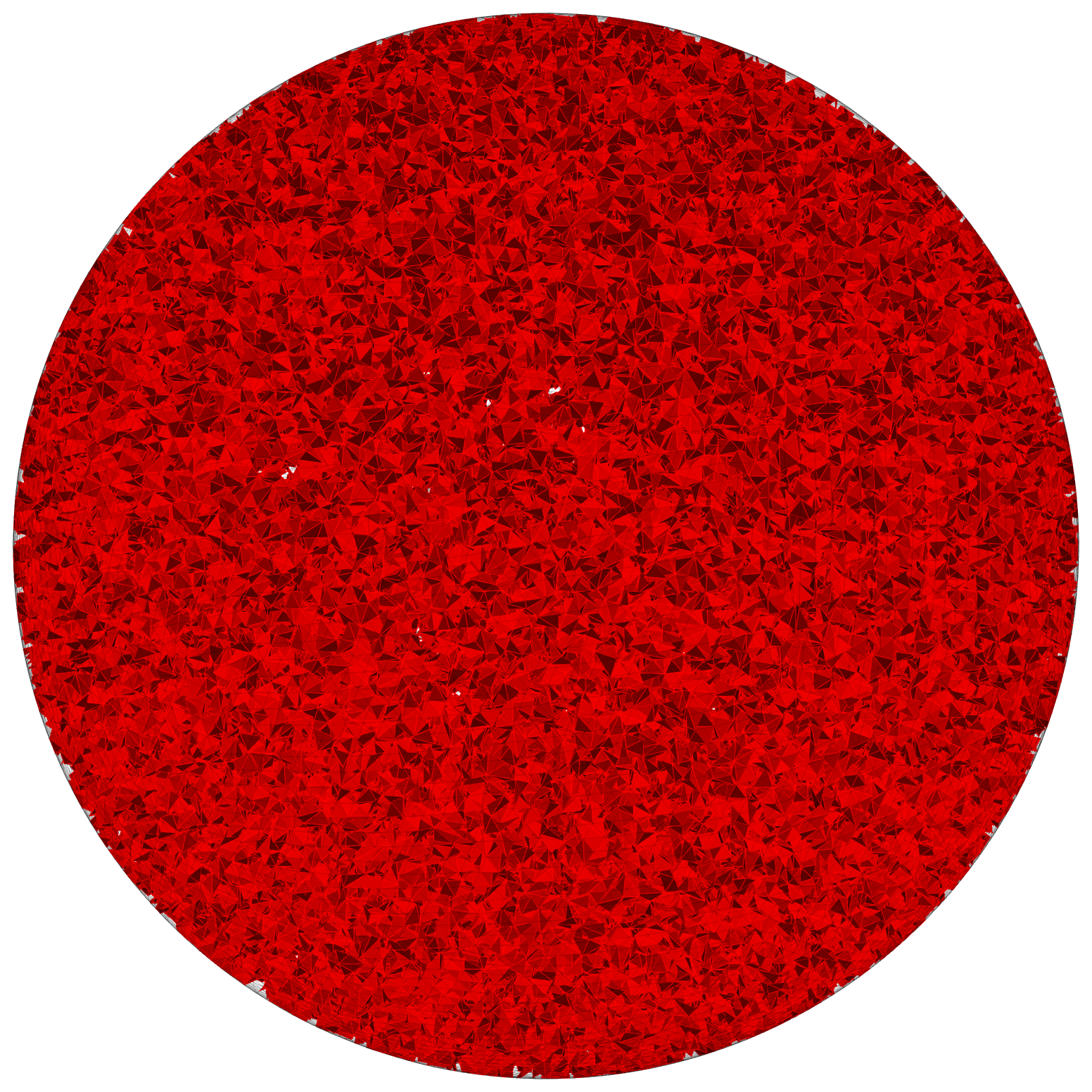}
\end{minipage}%
\begin{minipage}[t]{0.24\linewidth}
\centering
\includegraphics[width=\linewidth]{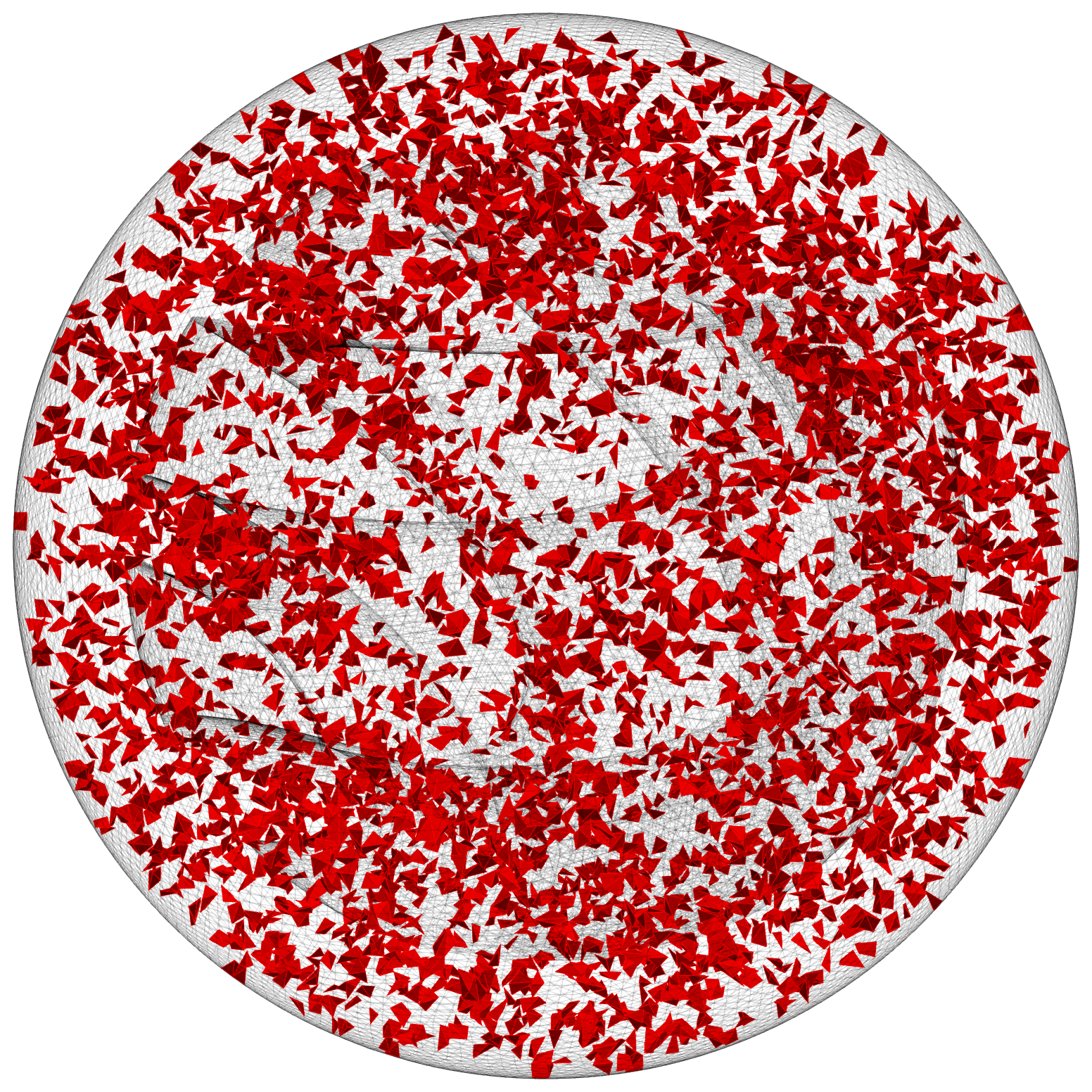}
\end{minipage}
\begin{minipage}[t]{0.24\linewidth}
\centering
\includegraphics[width=\linewidth]{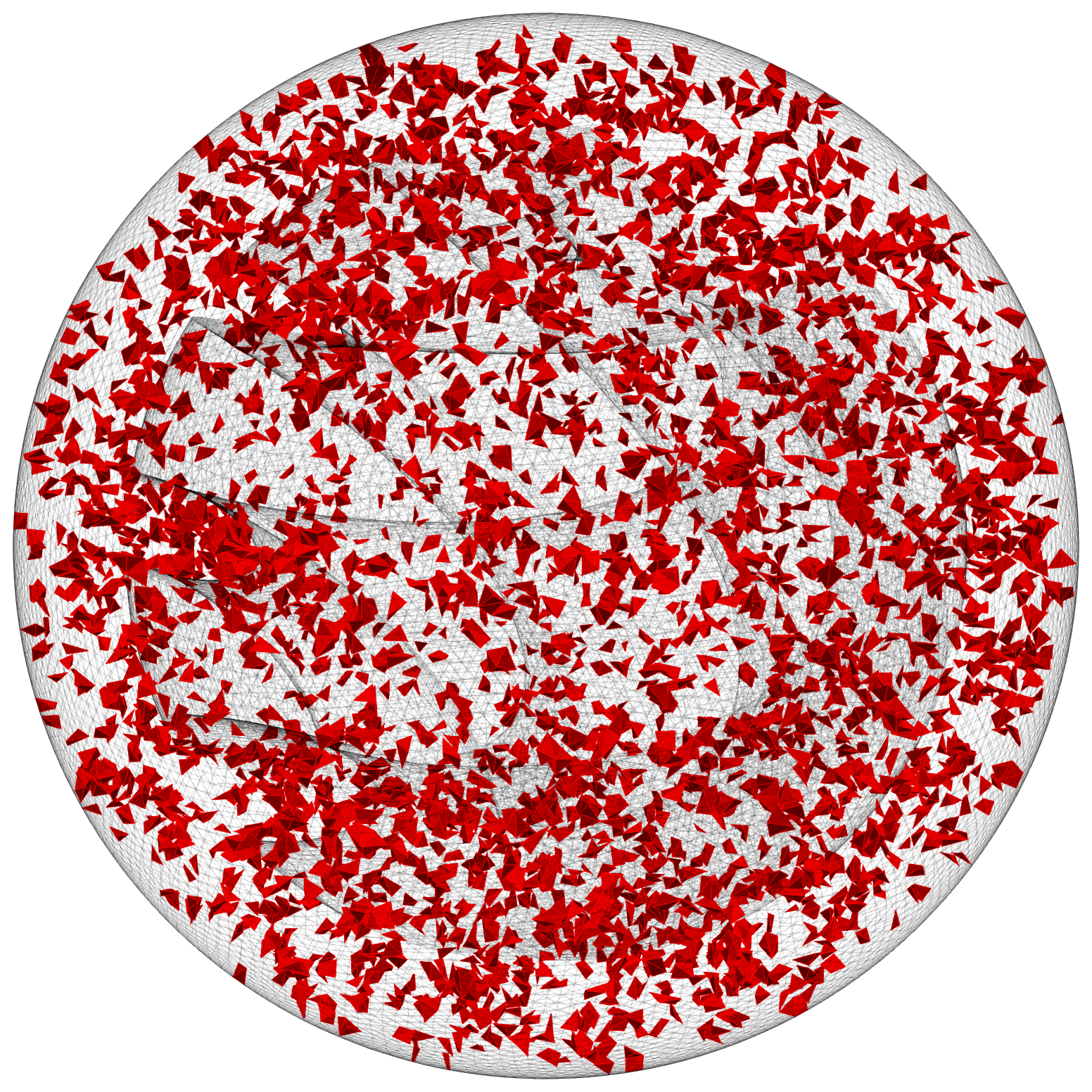}
\end{minipage}
\begin{minipage}[t]{0.24\linewidth}
\centering
\includegraphics[width=\linewidth]{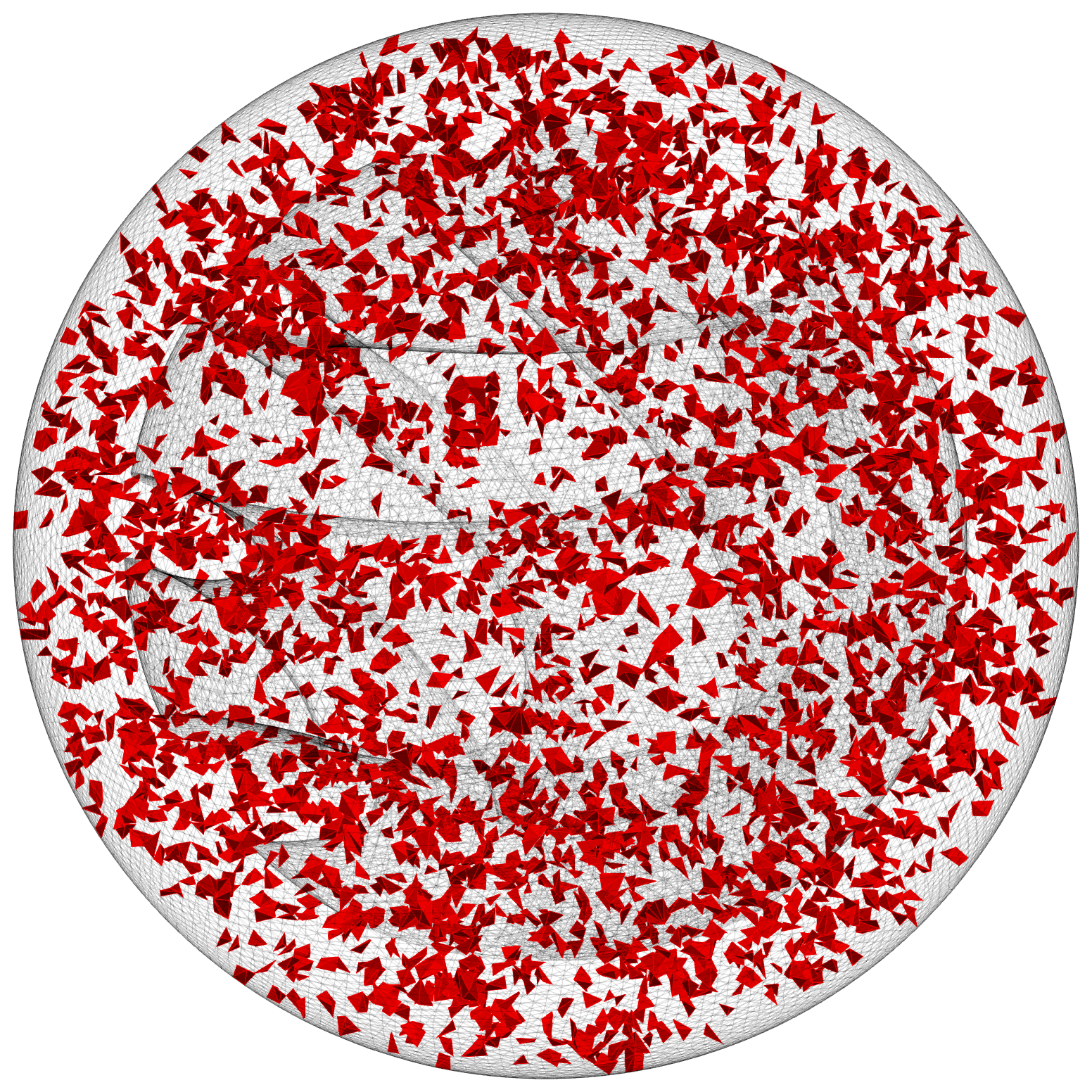}
\end{minipage}
\vspace{0pt} 
\begin{minipage}[t]{0.24\linewidth}
\centering
\includegraphics[width=\linewidth]{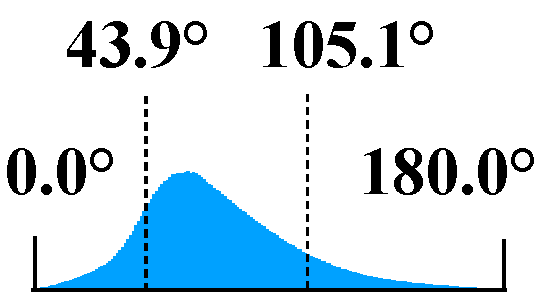}
\end{minipage}%
\begin{minipage}[t]{0.24\linewidth}
\centering
\includegraphics[width=\linewidth]{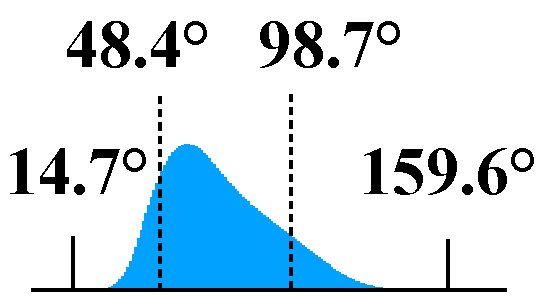}
\end{minipage}
\begin{minipage}[t]{0.24\linewidth}
\centering
\includegraphics[width=\linewidth]{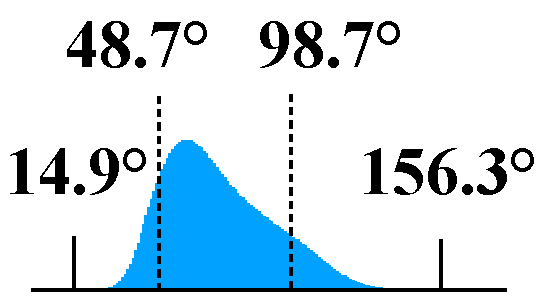}
\end{minipage}
\begin{minipage}[t]{0.24\linewidth}
\centering
\includegraphics[width=\linewidth]{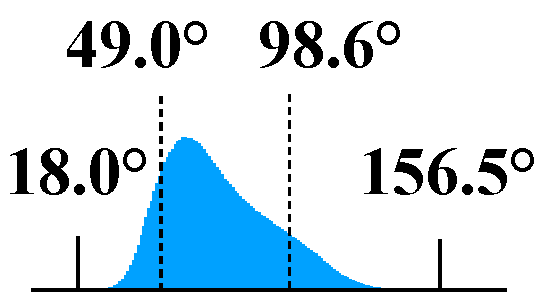}
\end{minipage}
\end{minipage}
}%
\subcaptionbox{Dihedral angle-oriented optimizing\label{fig:pipeline_V+A_based}}{%
\begin{minipage}[t]{0.34\linewidth} 
\centering
\begin{minipage}[t]{0.24\linewidth}
\centering
\includegraphics[width=\linewidth]{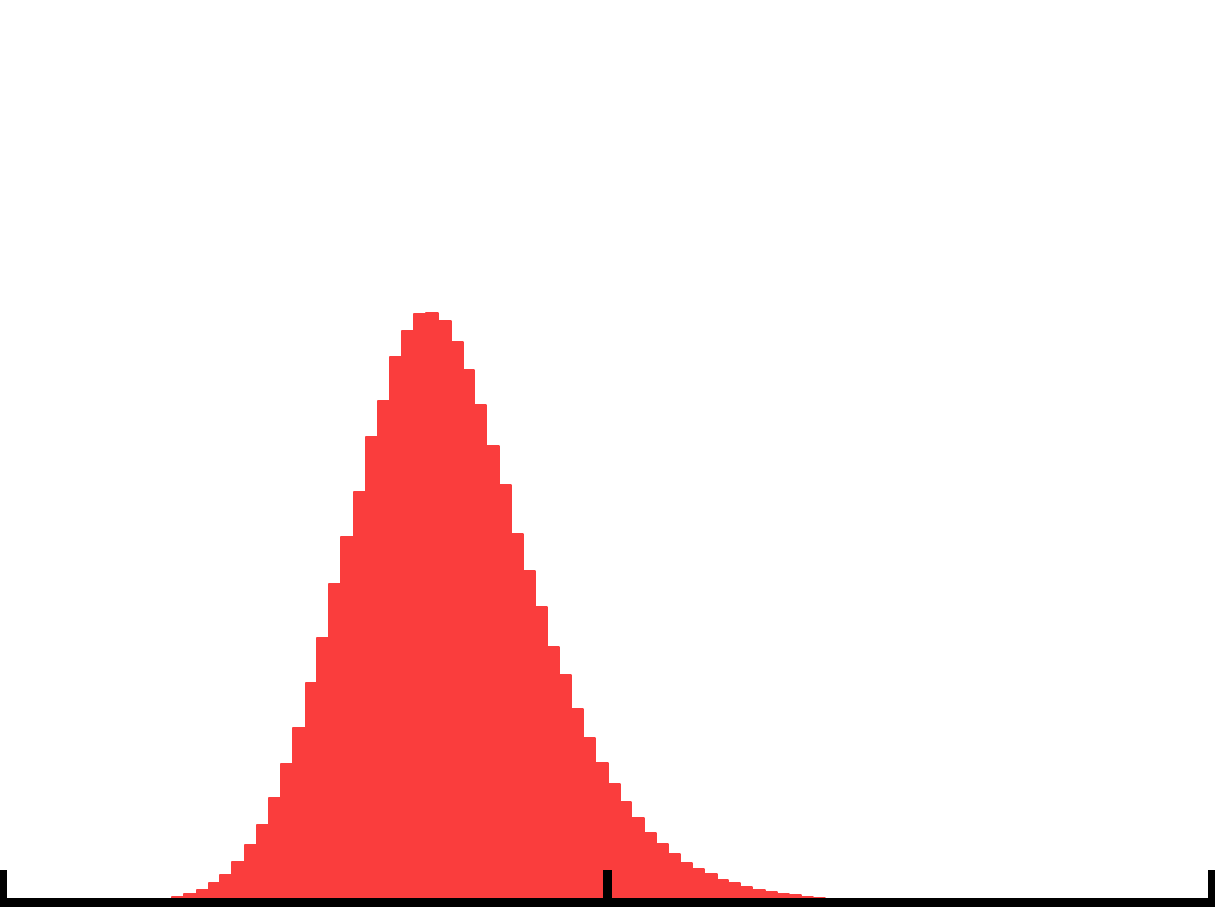}
\end{minipage}%
\begin{minipage}[t]{0.24\linewidth}
\centering
\includegraphics[width=\linewidth]{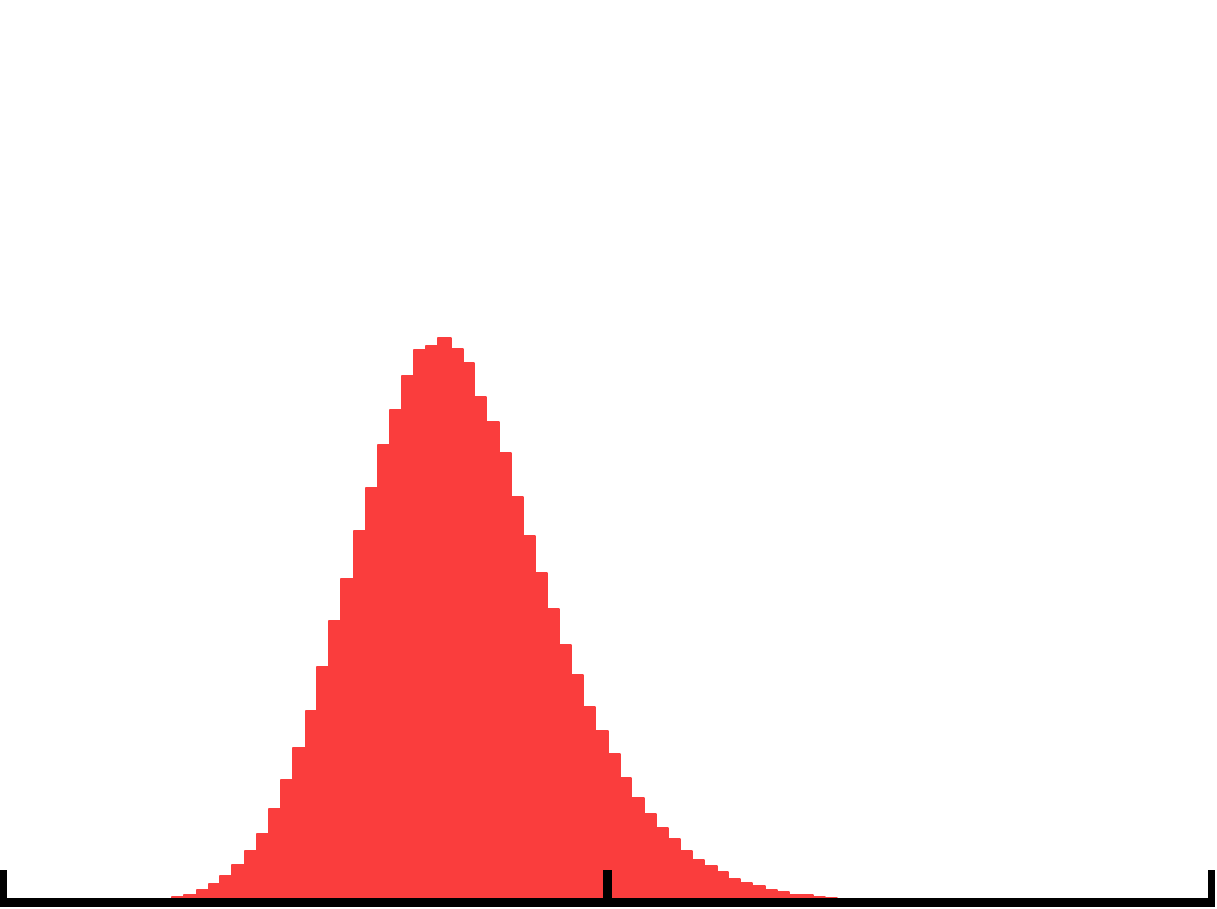}
\end{minipage}
\begin{minipage}[t]{0.24\linewidth}
\centering
\includegraphics[width=\linewidth]{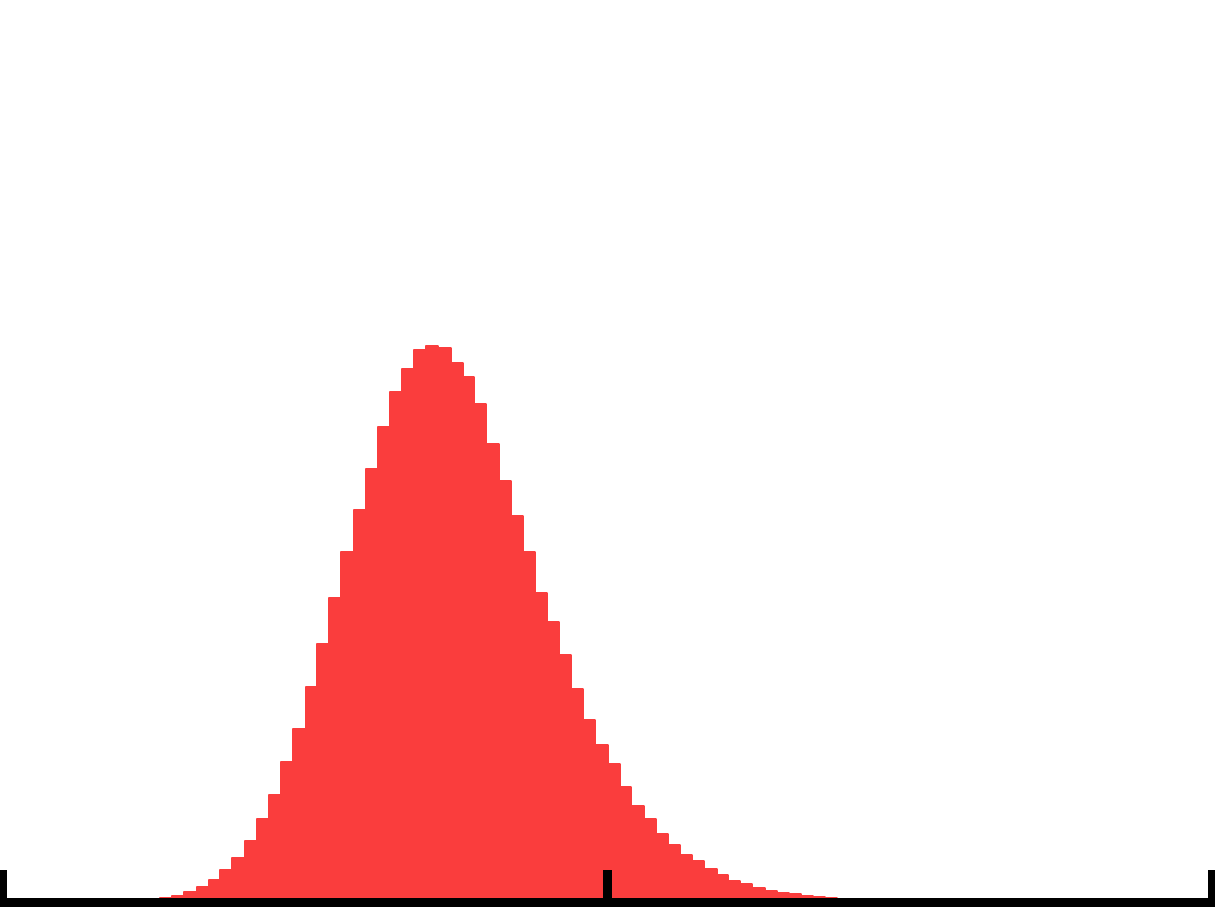}
\end{minipage}
\begin{minipage}[t]{0.24\linewidth}
\centering
\includegraphics[width=\linewidth]{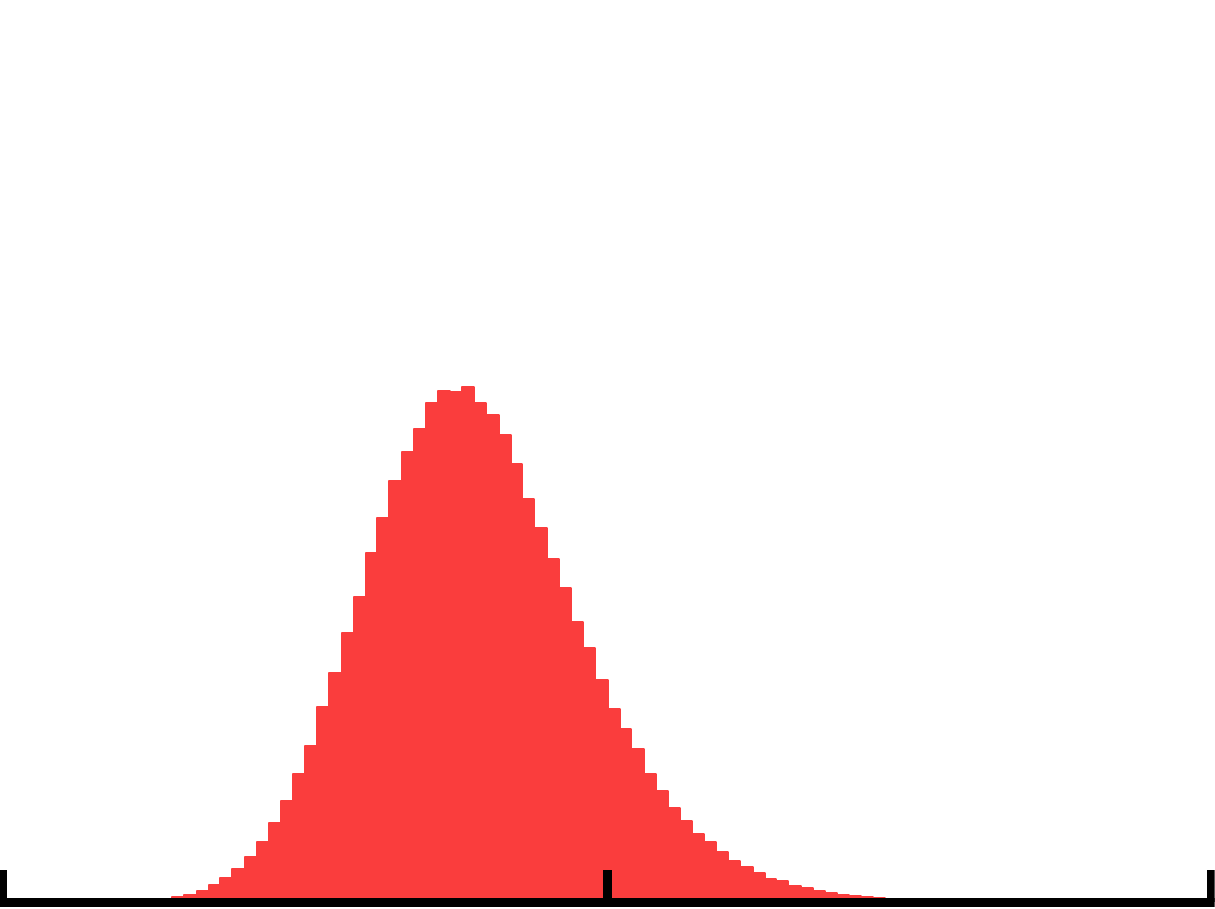}
\end{minipage}
\begin{minipage}[t]{0.24\linewidth}
\centering
\includegraphics[width=\linewidth]{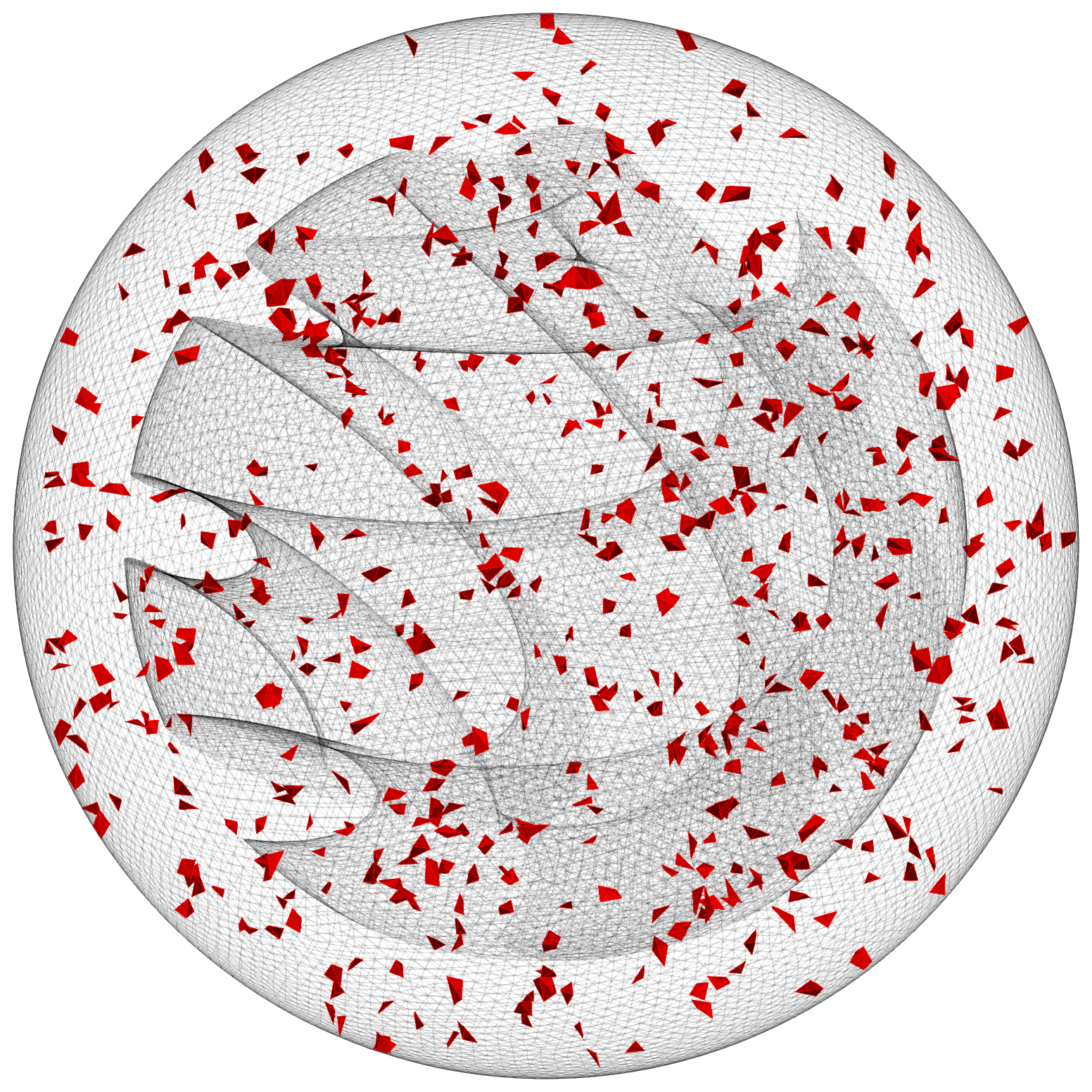}
\end{minipage}%
\begin{minipage}[t]{0.24\linewidth}
\centering
\includegraphics[width=\linewidth]{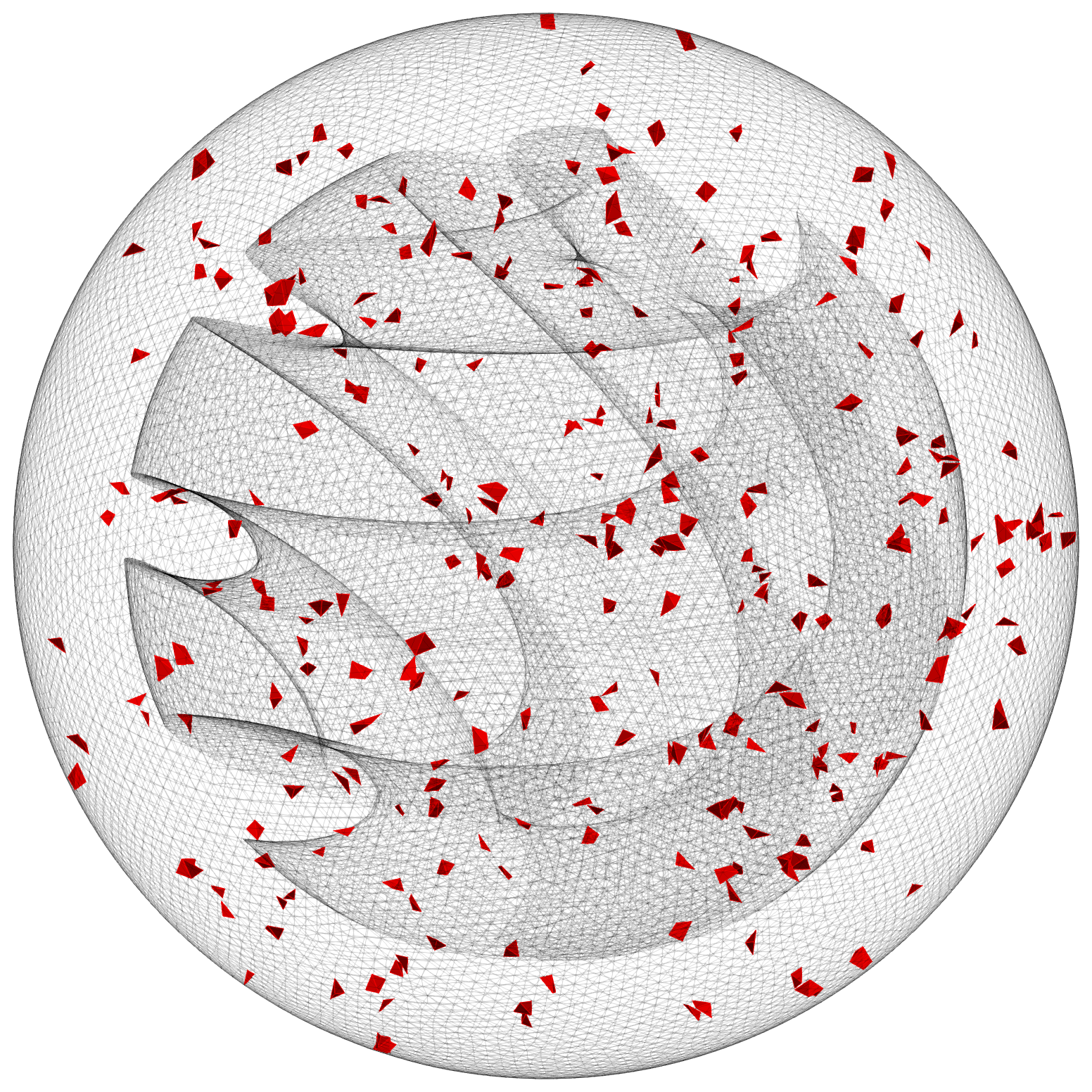}
\end{minipage}
\begin{minipage}[t]{0.24\linewidth}
\centering
\includegraphics[width=\linewidth]{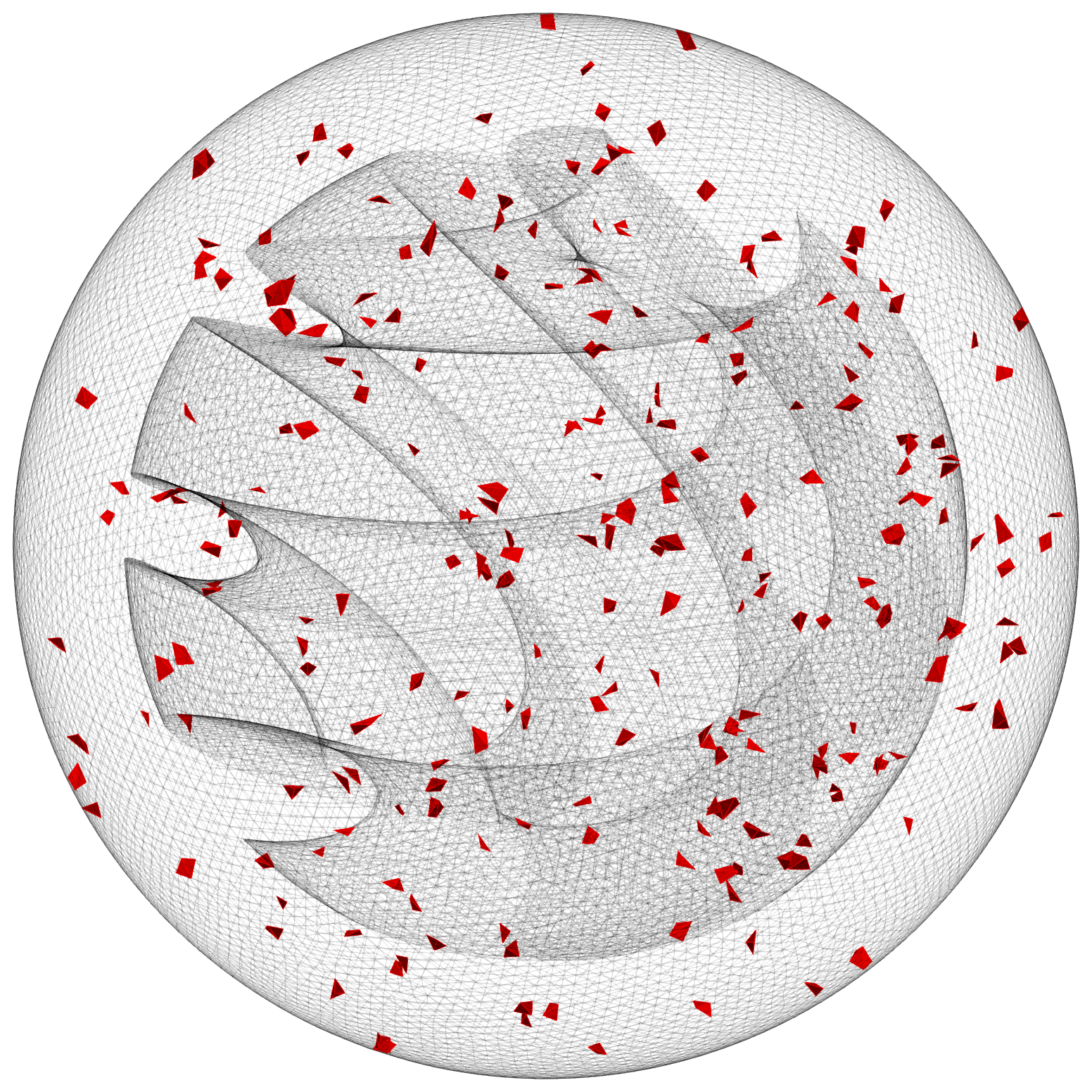}
\end{minipage}
\begin{minipage}[t]{0.24\linewidth}
\centering
\includegraphics[width=\linewidth]{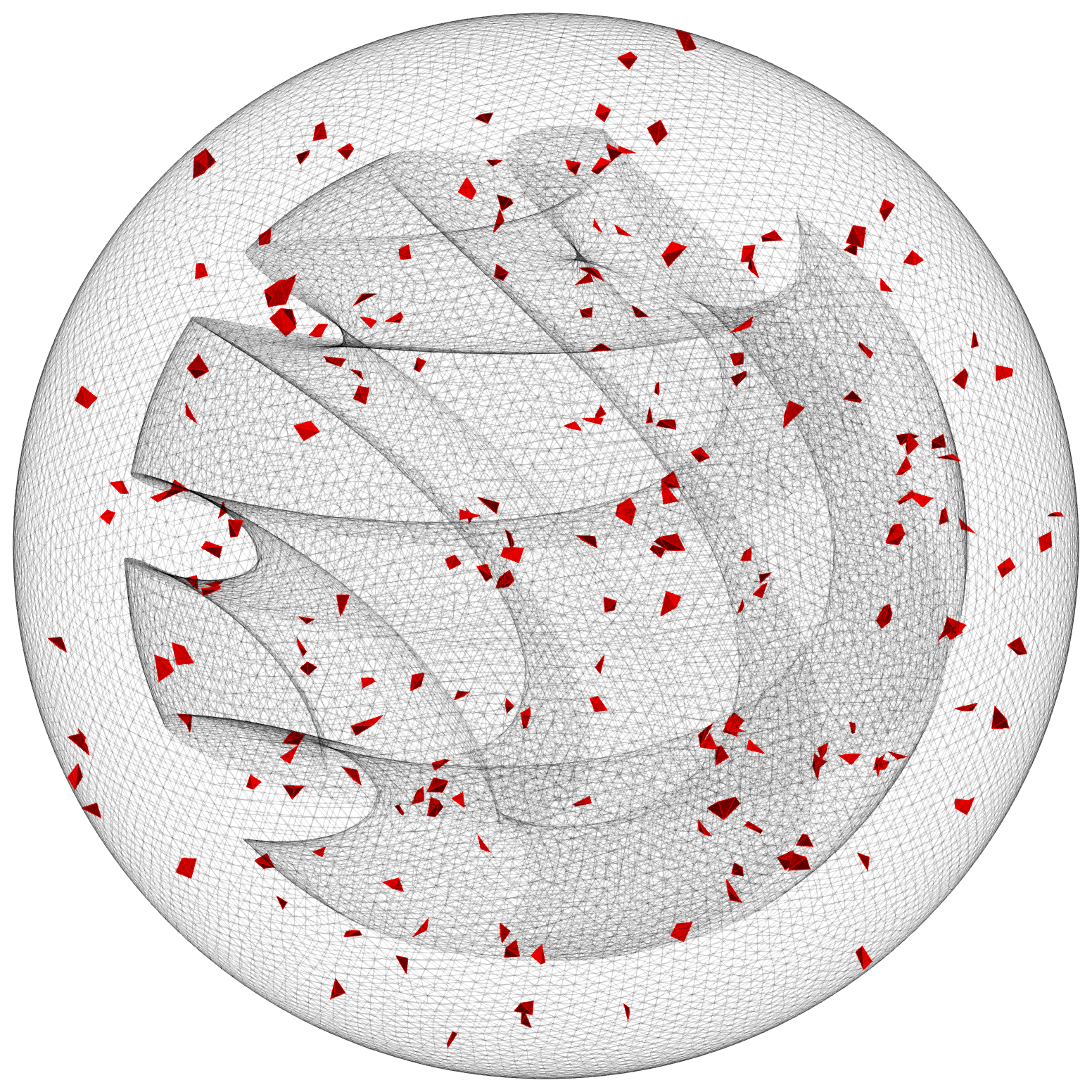}
\end{minipage}
\begin{minipage}[t]{0.24\linewidth}
\centering
\includegraphics[width=\linewidth]{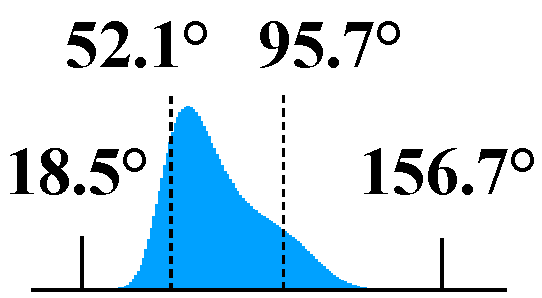}
\end{minipage}%
\begin{minipage}[t]{0.24\linewidth}
\centering
\includegraphics[width=\linewidth]{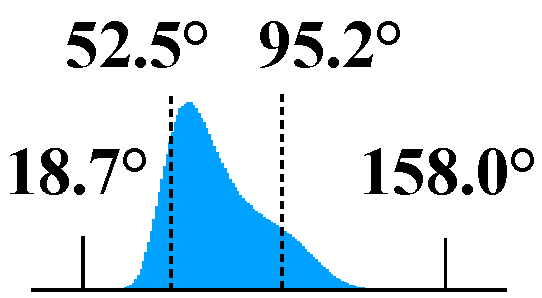}
\end{minipage}
\begin{minipage}[t]{0.24\linewidth}
\centering
\includegraphics[width=\linewidth]{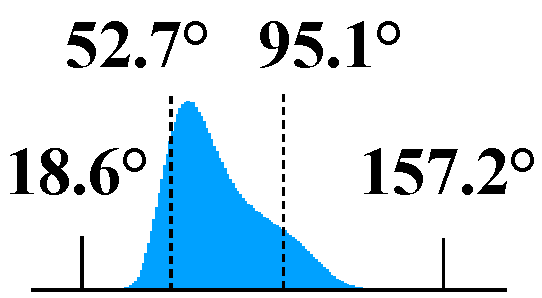}
\end{minipage}
\begin{minipage}[t]{0.24\linewidth}
\centering
\includegraphics[width=\linewidth]{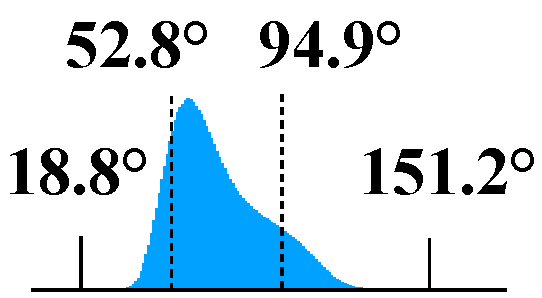}
\end{minipage}
\end{minipage}}%
\subcaptionbox{Output $\mathcal{T}$\label{fig:pipeline_output}}{%
\begin{minipage}[t]{0.16\linewidth}
\centering
\includegraphics[width=\linewidth]{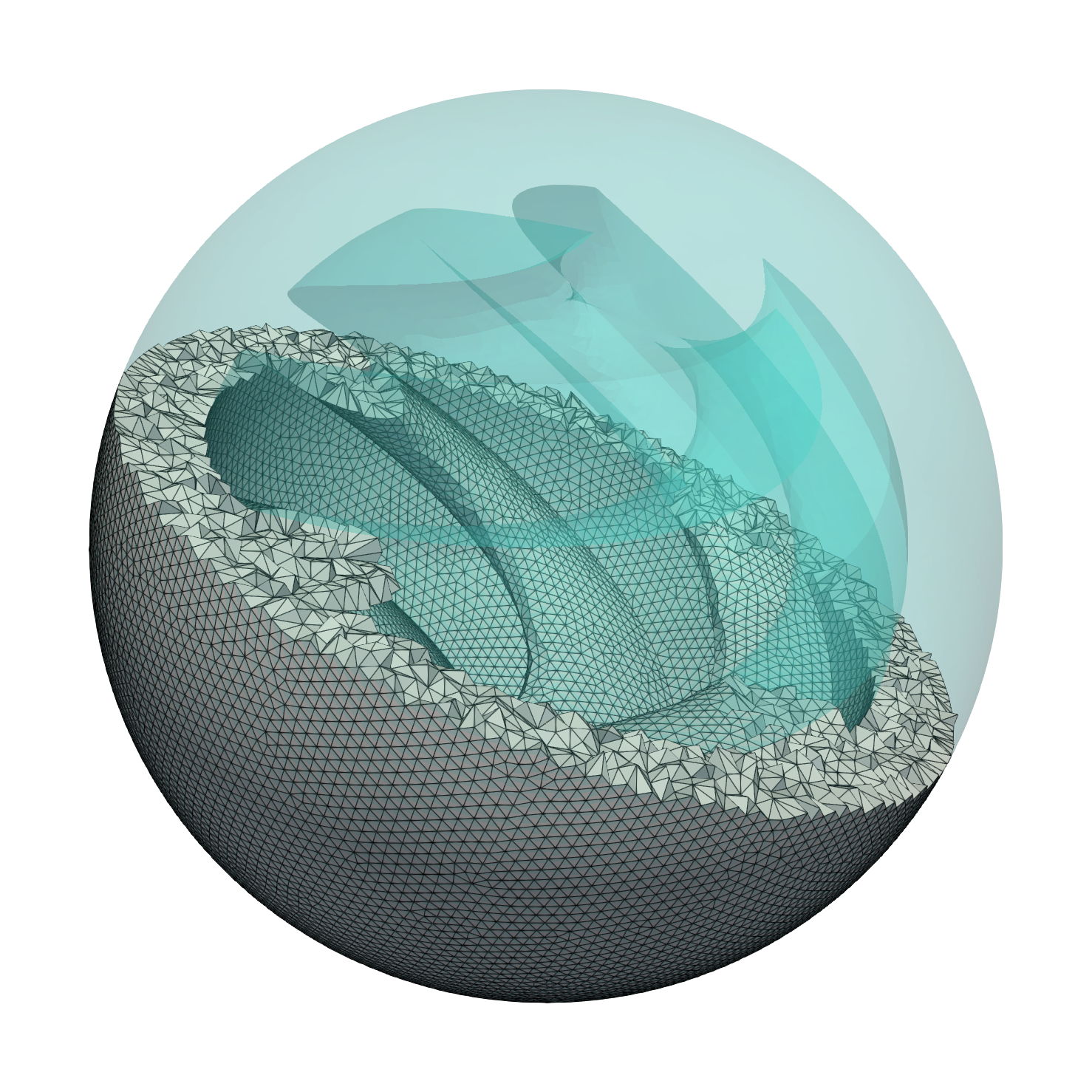}
\end{minipage}}%
\vspace{0.1in}
\caption{Algorithmic pipeline. (a) Our method takes a set of watertight, high-quality manifold triangular meshes as the input boundaries. (b) The initial tetrahedral mesh, constructed via Delaunay tetrahedralization, contains numerous slivers. Slivers with dihedral angles less than $30^\circ$ are rendered in red. Minimizing the volume-oriented energy improves the uniformity of the volume of tetrahedra. (c) Subsequently, minimizing the diheral angle-oriented energy further reduces the number of slivers. (d) The final tetrahedral mesh exhibits significantly improved mesh quality. The red and blue histograms represent the volume distribution and dihedral angle distribution of tetrahedral elements, respectively.  
In each angle histogram, we indicate the range of dihedral angles $[\theta_{\min}, \theta_{\max}]$ via short solid lines and the average of the minimal and maximal angles $\theta_{\min}^{\mathrm{avg}}$ and $\theta_{\max}^{\mathrm{avg}}$ via long dashed lines.
}
\label{fig:pipeline}
\end{figure*}

\section{Algorithm}\label{sec:Method}
\subsection{Overview}\label{subsec:overview}

Our algorithm takes a set of triangle meshes $\mathcal{M}$ as input. The exterior mesh represents the outer boundary of the object, while the remaining meshes define the boundaries of voids within the object. Since the boundary meshes serve as constraints in our method, we require that each triangle mesh is a closed manifold mesh with fairly regular triangulation. For meshes that do not meet this condition, preprocessing is required. 

Based on the user-specified target size $t$, we generate the initial tetrahedral mesh using a classical constrained Delaunay tetrahedralization method~\cite{cavalcanti1999three}, supplemented by the Bowyer-Watson vertex insertion algorithm~\cite{george1997improvements}, controlling the mesh edge lengths to be close to $t$. The initial tetrahedral mesh typically contains many slivers with very small dihedral angles.

Our method begins by minimizing the volume-oriented energy. Although this energy does not optimize individual tetrahedral shapes, it significantly improves the volume distribution of individual tetrahedra. Since slivers typically correspond to tetrahedra with extremely small volumes, reducing the volume-oriented energy effectively decreases the occurrence of slivers.

Following this initial volume-oriented optimization, WSVM proceeds by minimizing the dihedral angle-oriented energy, setting $\rho=1/A$.  This approach promotes uniform dihedral angles across the mesh, thereby significantly enhancing mesh quality.

Compared with the initial tetrahedral mesh, the output of WSVM is a high-quality tetrahedral mesh, characterized by a significant reduction in slivers and more uniform dihedral angles.
We show the pseudo-code of WSVM in Algorithm~\ref{algo:GlobalOptimization} and illustrate the algorithmic pipeline in Figure \ref{fig:pipeline}.

\subsection{Connectivity Update} 
\begin{figure}[htbp]
\centering
\includegraphics[width=1\linewidth]{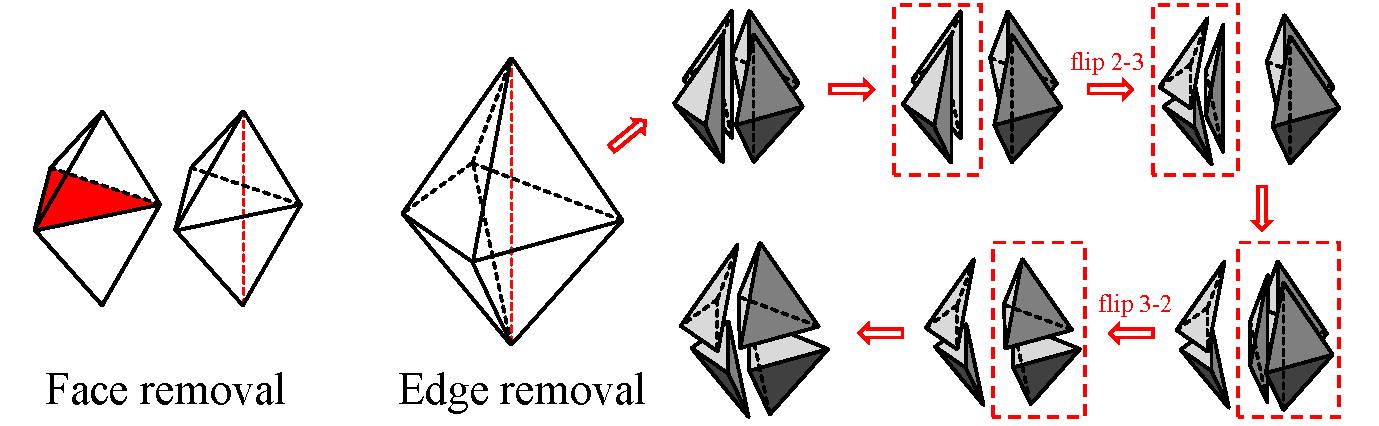}
\makebox[1.0in]{(a) 2-to-3 flip}
\makebox[2.2in]{(b) $n$-to-$m$ flip}
\caption{The flip operations for removing edges and faces.
(a) The 2-to-3 flip, used for removing the red face.
(b) A specific instance of an $n$-to-$m$ flip, where $n = m = 4$, is illustrated. This is achieved by applying a 2-to-3 flip followed by a 3-to-2 flip to remove the red edge.}
\label{fig:flipOperation}
\end{figure}

In addition to optimizing vertex positions, updating tetrahedral connectivity is also crucial. In our implementation, undesired edges and faces are removed via flip operations. As shown in Figure~\ref{fig:flipOperation}, face removal is accomplished through a 2-to-3 flip, and edge removal through an n-to-m flip~\cite{hang2015tetgen}. Given that three-dimensional flip operations are more complex and time-consuming than their two-dimensional counterparts, our method does not apply them indiscriminately. Instead, it selectively applies flips, assessing their effectiveness by the improvements they bring to the minimum dihedral angle, and focuses iteratively on elements with dihedral angles less than \(40^\circ\).


\subsection{Edge Split and Collapse}
Besides edge and face flipping, our algorithm also consists of edge splitting and collapsing to adjust the vertex count in the mesh. Let \( t \) be the user-specified threshold for edge length. In our implementation, edges longer than \( \frac{4}{3}t \) are split, while edges shorter than \( \frac{4}{5}t \) are collapsed. Additionally, our method prevents the collapse of edges if it would result in creating edges longer than \( \frac{4}{3}t \), a strategy that supports the convergence of the algorithm \cite{botsch2010polygon}.


\subsection{Numerical Solver}

\begin{algorithm}
\label{algo:NewtonMethodWithFixedIterations}
\begin{algorithmic}[1] 
\Function{SquaredVolumeMinimizing}{$\mathcal{T}, \rho$}
\Repeat
    \State Edge split and collapse
    \State Update connectivity
    \For{each internal vertex $\boldsymbol{v}_i\in\mathcal{T}$}
    \State Compute the energy $E\left(\rho_i,\boldsymbol{v}_i^{(0)}\right)$
      \State $k\leftarrow 0$
        \Repeat
        \State  Compute $\boldsymbol{g}^{(k)}$ and $\boldsymbol{H}^{(k)}$ of $E\left(\rho_i,\boldsymbol{v}_i^{(k)}\right)$             \State  $\boldsymbol{d}^{(k)} = - \left(\boldsymbol{H}^{(k)}\right)^{-1} \cdot \boldsymbol{g}^{(k)}$
             \State Find \( \alpha^{(k)} \) via line search along $\boldsymbol{d}^{(k)}$ 
            \State $\boldsymbol{v}_i^{(k+1)} = \boldsymbol{v}_i^{(k)} + \alpha^{(k)} \boldsymbol{d}^{(k)}$
           \State Compute the energy $E\left(\rho_i,\boldsymbol{v}_i^{(k+1)}\right)$
            \State $k\leftarrow k+1$
        
        \Until{$\left|E\left(\rho_i,\boldsymbol{v}_i^{(k+1)}\right) - E\left(\rho_i,\boldsymbol{v}_i^{(k)}\right)\right| \leq \epsilon_E$}
      \EndFor
\Until{the change of $\theta_{\min}^{\mathrm{avg}} \leq \epsilon_{\theta}$}
\State \textbf{return} ~$\mathcal{T}$
    \EndFunction
\end{algorithmic}
\end{algorithm}

\if 0
\begin{algorithm}
\begin{algorithmic}[1]
\Function{ConstrainedLineSearch}{$\boldsymbol v^{(k)}_i, \rho_i, \boldsymbol d^{(k)}_i$}\\
\textcolor{red}{whether need to delete this function}

\Comment{Initial position \(\boldsymbol v^{(k)}_i\), density $\rho_i$, line search direction \(\boldsymbol d^{(k)}_i\)} 
    \State \(\alpha \gets 1\), \(\beta \gets 0.8\), \(\gamma \gets 0.01\)
    
    \While{true}
    \State \(\boldsymbol v^{\alpha}_i \gets \boldsymbol v^{(k)}_i + \alpha \boldsymbol d^{(k)}_i\)
        \If{no negative volume within \( S_i \)}
            \State Compute \(E\left(\boldsymbol v^{\alpha}_i,\rho_i\right)\)
            \State Compute \(E_{\alpha}=E\left(\boldsymbol v^{(k)}_i\right) + \gamma\alpha \left[\nabla E\left(\boldsymbol v^{(k)}_i\right)\right]^T \boldsymbol d^{(k)}_i\)
            \If{\(E\left(\boldsymbol v^{\alpha}_i,\rho_i\right) < E_{\alpha}\)}
                \State \textbf{break} \Comment{\textcolor{red}{Backtracking??}}
            \EndIf
        \EndIf
        \State \(\alpha \gets \beta\alpha\) \Comment{Adjust the step length }
        \If{\(\alpha < \epsilon\)}
            \State \(\alpha \gets 0\)
            \State \textbf{break} \Comment{The feasible region is empty}
        \EndIf
    \EndWhile
    \State \Return \(\alpha\)\Comment{Output the step length \(\alpha\)} 
\EndFunction
\end{algorithmic}\label{fig:flip_tec}
\end{algorithm}
\fi

We employ Newton's method to minimize both energies, using a backtracking line search \cite{backtracking} with constraints to efficiently minimize energy functional within a vertex's one-ring neighborhood and prevent negative volume tetrahedra. The method adjusts vertex positions based on the gradient \(\boldsymbol{g}^{(k)}\) and Hessian matrix \(\boldsymbol{H}^{(k)}\), and determines the step size \(\alpha^{(k)}\) to maintain mesh validity. 

\begin{table}[htbp]
  \centering
  \caption{Statistics. The best results are highlighted in blue.}
  \setlength{\tabcolsep}{1.5pt} 
  \begin{scriptsize}
    \begin{tabular}{cccccccccccccc}
    \hline
    \multirow{2}[3]{*}{{Model}} & \multirow{2}[3]{*}{{Method}} & \multicolumn{4}{c|}{$\theta$}     & \multicolumn{2}{c|}{$c$} & \multicolumn{2}{c|}{$e$} & \multicolumn{2}{c}{$s$} & \multirow{2}[3]{*}{ $N_t/k$} & \multirow{2}[3]{*}{{$P$/\%}} \\
\cmidrule{3-12}          & \multicolumn{1}{c}{} & \multicolumn{1}{c}{$\theta_{\min}^{\mathrm{avg}}$} & \multicolumn{1}{c}{$\theta_{\min}$} & \multicolumn{1}{c}{$\theta_{\max}^{\mathrm{avg}}$} & \multicolumn{1}{c}{$\theta_{\max}$} & \multicolumn{1}{c}{{Avg.}} & \multicolumn{1}{c}{{Max.}} & \multicolumn{1}{c}{{Avg.}} & \multicolumn{1}{c}{{Max.}} & \multicolumn{1}{c}{{Avg.}} & \multicolumn{1}{c}{{Max.}} &       
&  \\
\cmidrule{1-14}
\cmidrule{1-12}\cmidrule{14-14}\multirow{5}[1]{*}{\rotatebox{90}{Botijo}} & WSVM  & \cellcolor[rgb]{ .678,  .847,  .902}52.4 & \cellcolor[rgb]{ .678,  .847,  .902}20.2 & \cellcolor[rgb]{ .678,  .847,  .902}95.2 & \cellcolor[rgb]{ .678,  .847,  .902}155.7 & \cellcolor[rgb]{ .678,  .847,  .902}1.13 & \cellcolor[rgb]{ .678,  .847,  .902}2.7 & \cellcolor[rgb]{ .678,  .847,  .902}1.42 & 2.75  & \cellcolor[rgb]{ .678,  .847,  .902}0.27 & 0.65  & 197   & \cellcolor[rgb]{ .678,  .847,  .902}0.05  \\
      & Gmsh  & 47.6  & 4.7   & 101.7 & 172.5 & 1.26  & 9.89  & 1.52  & \cellcolor[rgb]{ .678,  .847,  .902}2.41 & 0.32  & \cellcolor[rgb]{ .678,  .847,  .902}0.59 & 208   & 6.67  \\
      & MMG3D   & 47.4  & 12.8  & 98.2  & 157.3 & 1.2   & 4.23  & 1.6   & 8.28  & 0.35  & 0.85  & 191   & 1.38  \\
      & TetGen & 49.5  & 1.6   & 96.9  & 176.6 & 1.2   & 25.66 & 1.47  & 2.7   & 0.29  & 0.64  & 216   & 4.35  \\
      & fTetwild & 50    & 13.7  & 96.3  & 161.4 & 1.17  & 3.73  & 1.55  & 3.41  & 0.32  & 0.61  & 185   & 0.21  \\
      \hline
\multirow{5}[0]{*}{\rotatebox{90}{Bunny}} & WSVM  & \cellcolor[rgb]{ .678,  .847,  .902}52.6 & \cellcolor[rgb]{ .678,  .847,  .902}22.4 & \cellcolor[rgb]{ .678,  .847,  .902}94.9 & \cellcolor[rgb]{ .678,  .847,  .902}152.5 & \cellcolor[rgb]{ .678,  .847,  .902}1.12 & \cellcolor[rgb]{ .678,  .847,  .902}2.5 & \cellcolor[rgb]{ .678,  .847,  .902}1.42 & 2.99  & \cellcolor[rgb]{ .678,  .847,  .902}0.26 & 0.66  & 284   & \cellcolor[rgb]{ .678,  .847,  .902}0.03  \\
      & Gmsh  & 47.5  & 12.8  & 101.8 & 156.8 & 1.26  & 3.37  & 1.52  & \cellcolor[rgb]{ .678,  .847,  .902}2.78 & 0.32  & 0.65  & 299   & 6.76  \\
      & MMG3D   & 47.3  & 12    & 98.3  & 156.9 & 1.2   & 3.97  & 1.6   & 6.14  & 0.35  & 0.82  & 269   & 1.33  \\
      & TetGen & 49.6  & 1.7   & 96.7  & 176.9 & 1.2   & 26.77 & 1.47  & 3.03  & 0.29  & 0.65  & 315   & 4.22  \\
      & fTetwild & 50    & 13.9  & 96.3  & 159.3 & 1.17  & 3.58  & 1.55  & 3.21  & 0.32  & \cellcolor[rgb]{ .678,  .847,  .902}0.64 & 258   & 0.18  \\
      \hline
\multirow{5}[0]{*}{\rotatebox{90}{Carter}} & WSVM  & \cellcolor[rgb]{ .678,  .847,  .902}51.4 & 16.5  & 96.1  & 160.2 & \cellcolor[rgb]{ .678,  .847,  .902}1.14 & 3.89  & \cellcolor[rgb]{ .678,  .847,  .902}1.46 & 4.44  & \cellcolor[rgb]{ .678,  .847,  .902}0.28 & 0.78  & 463   & 0.13  \\
      & Gmsh  & 47.5  & 5.1   & 101.7 & 171.4 & 1.26  & 9.4   & 1.53  & 3.48  & 0.32  & 0.72  & 463   & 6.65  \\
      & MMG3D   & 46.9  & 13.4  & 98.8  & 156.1 & 1.21  & 4.05  & 1.63  & 4.15  & 0.36  & 0.8   & 500   & 1.54  \\
      & TetGen & 49.5  & 1.4   & 97.1  & 176.8 & 1.2   & 29.62 & 1.48  & \cellcolor[rgb]{ .678,  .847,  .902}3.33 & 0.29  & \cellcolor[rgb]{ .678,  .847,  .902}0.7 & 502   & 4.26  \\
      & fTetwild & 51.2  & \cellcolor[rgb]{ .678,  .847,  .902}16.9 & \cellcolor[rgb]{ .678,  .847,  .902}94.2 & \cellcolor[rgb]{ .678,  .847,  .902}147.4 & \cellcolor[rgb]{ .678,  .847,  .902}1.14 & \cellcolor[rgb]{ .678,  .847,  .902}2.95 & 1.49  & 3.65  & 0.29  & 0.77  & 460   & \cellcolor[rgb]{ .678,  .847,  .902}0.07  \\
      \hline
\multirow{5}[0]{*}{\rotatebox{90}{Lion}} & WSVM  & \cellcolor[rgb]{ .678,  .847,  .902}52.4 & \cellcolor[rgb]{ .678,  .847,  .902}16.3 & \cellcolor[rgb]{ .678,  .847,  .902}95 & 158   & \cellcolor[rgb]{ .678,  .847,  .902}1.12 & \cellcolor[rgb]{ .678,  .847,  .902}3.16 & \cellcolor[rgb]{ .678,  .847,  .902}1.42 & 3     & \cellcolor[rgb]{ .678,  .847,  .902}0.26 & 0.69  & 295   & \cellcolor[rgb]{ .678,  .847,  .902}0.05  \\
      & Gmsh  & 47.5  & 12.3  & 101.7 & \cellcolor[rgb]{ .678,  .847,  .902}156.8 & 1.26  & 3.37  & 1.52  & \cellcolor[rgb]{ .678,  .847,  .902}2.73 & 0.32  & \cellcolor[rgb]{ .678,  .847,  .902}0.65 & 307   & 6.70  \\
      & MMG3D   & 47.3  & 12.3  & 98.3  & 156.9 & 1.2   & 4.28  & 1.61  & 4.97  & 0.35  & 0.81  & 277   & 1.36  \\
      & TetGen & 49.6  & 1.7   & 96.7  & 177   & 1.2   & 26.79 & 1.47  & 2.81  & 0.29  & 0.66  & 325   & 4.24  \\
      & fTetwild & 50    & 15    & 95.9  & 162.9 & 1.17  & 3.86  & 1.55  & 2.91  & 0.32  & 0.67  & 275   & 0.19  \\
      \hline
\multirow{5}[0]{*}{\rotatebox{90}{Elephant}} & WSVM  & \cellcolor[rgb]{ .678,  .847,  .902}52.4 & \cellcolor[rgb]{ .678,  .847,  .902}22.8 & \cellcolor[rgb]{ .678,  .847,  .902}95.1 & \cellcolor[rgb]{ .678,  .847,  .902}149.4 & \cellcolor[rgb]{ .678,  .847,  .902}1.12 & \cellcolor[rgb]{ .678,  .847,  .902}2.32 & \cellcolor[rgb]{ .678,  .847,  .902}1.42 & 2.78  & \cellcolor[rgb]{ .678,  .847,  .902}0.26 & 0.66  & 359   & \cellcolor[rgb]{ .678,  .847,  .902}0.04  \\
      & Gmsh  & 47.5  & 12.8  & 101.8 & 158.8 & 1.26  & 3.74  & 1.52  & \cellcolor[rgb]{ .678,  .847,  .902}2.44 & 0.32  & \cellcolor[rgb]{ .678,  .847,  .902}0.61 & 374   & 6.70  \\
      & MMG3D   & 47.2  & 13.1  & 98.4  & 156   & 1.2   & 4.02  & 1.61  & 4.29  & 0.35  & 0.81  & 370   & 1.29  \\
      & TetGen & 49.5  & 1.9   & 96.8  & 176.9 & 1.2   & 25.75 & 1.47  & 2.95  & 0.29  & 0.68  & 395   & 4.24  \\
      & fTetwild & 50.8  & 18.5  & 95.6  & 156.6 & 1.15  & 2.83  & 1.5   & 2.86  & 0.3   & 0.67  & 352   & 0.20  \\
      \hline
\multirow{5}[0]{*}{\rotatebox{90}{Fandisk}} & WSVM  & \cellcolor[rgb]{ .678,  .847,  .902}52.3 & 17    & \cellcolor[rgb]{ .678,  .847,  .902}95.3 & 154.5 & \cellcolor[rgb]{ .678,  .847,  .902}1.13 & 2.62  & \cellcolor[rgb]{ .678,  .847,  .902}1.43 & 3.79  & \cellcolor[rgb]{ .678,  .847,  .902}0.27 & 0.7   & 268   & \cellcolor[rgb]{ .678,  .847,  .902}0.07  \\
      & Gmsh  & 47.5  & 10.4  & 101.8 & 163.8 & 1.26  & 4.95  & 1.52  & \cellcolor[rgb]{ .678,  .847,  .902}2.96 & 0.32  & \cellcolor[rgb]{ .678,  .847,  .902}0.68 & 283   & 6.59  \\
      & MMG3D   & 47.1  & 15.7  & 98.6  & \cellcolor[rgb]{ .678,  .847,  .902}148.9 & 1.21  & 3.25  & 1.62  & 3.88  & 0.35  & 0.77  & 294   & 1.37  \\
      & TetGen & 49.6  & 1.8   & 96.8  & 176.9 & 1.2   & 25.46 & 1.47  & 3     & 0.29  & \cellcolor[rgb]{ .678,  .847,  .902}0.68 & 297   & 4.17  \\
      & fTetwild & 49.6  & \cellcolor[rgb]{ .678,  .847,  .902}20.2 & 96.6  & 150.6 & 1.17  & \cellcolor[rgb]{ .678,  .847,  .902}2.43 & 1.56  & 3.47  & 0.32  & \cellcolor[rgb]{ .678,  .847,  .902}0.68 & 266   & 0.18  \\
      \hline
\multirow{5}[0]{*}{\rotatebox{90}{Fertility}} & WSVM  & \cellcolor[rgb]{ .678,  .847,  .902}52.3 & \cellcolor[rgb]{ .678,  .847,  .902}19.5 & \cellcolor[rgb]{ .678,  .847,  .902}95.3 & 153.3 & \cellcolor[rgb]{ .678,  .847,  .902}1.13 & 2.68  & \cellcolor[rgb]{ .678,  .847,  .902}1.43 & 3.62  & \cellcolor[rgb]{ .678,  .847,  .902}0.27 & 0.7   & 321   & \cellcolor[rgb]{ .678,  .847,  .902}0.05  \\
      & Gmsh  & 47.5  & 7.1   & 101.9 & 167.8 & 1.26  & 7.28  & 1.52  & 3.55  & 0.32  & 0.73  & 330   & 6.82  \\
      & MMG3D   & 47    & 14.7  & 98.6  & 155   & 1.21  & 3.49  & 1.62  & 4.74  & 0.35  & 0.79  & 343   & 1.45  \\
      & TetGen & 49.5  & 1.7   & 96.9  & 176.9 & 1.2   & 25.61 & 1.47  & 3.02  & 0.29  & \cellcolor[rgb]{ .678,  .847,  .902}0.68 & 351   & 4.29  \\
      & fTetwild & 49.7  & 18.7  & 96.3  & \cellcolor[rgb]{ .678,  .847,  .902}151.7 & 1.17  & \cellcolor[rgb]{ .678,  .847,  .902}2.66 & 1.54  & \cellcolor[rgb]{ .678,  .847,  .902}2.99 & 0.32  & 0.7   & 306   & 0.25  \\
      \hline
\multirow{5}[0]{*}{\rotatebox{90}{Molecule}} & WSVM  & \cellcolor[rgb]{ .678,  .847,  .902}50.6 & \cellcolor[rgb]{ .678,  .847,  .902}17.6 & 96.2  & \cellcolor[rgb]{ .678,  .847,  .902}156.6 & \cellcolor[rgb]{ .678,  .847,  .902}1.15 & \cellcolor[rgb]{ .678,  .847,  .902}3.07 & \cellcolor[rgb]{ .678,  .847,  .902}1.49 & 3.2   & \cellcolor[rgb]{ .678,  .847,  .902}0.3 & 0.68  & 99    & \cellcolor[rgb]{ .678,  .847,  .902}0.14  \\
      & Gmsh  & 47.5  & 5.5   & 100.8 & 172.8 & 1.26  & 9.75  & 1.56  & 2.76  & 0.33  & 0.67  & 92    & 7.11  \\
      & MMG3D   & 46.7  & 13.5  & 99    & 158.1 & 1.22  & 4.18  & 1.64  & 6.83  & 0.36  & 0.83  & 103   & 2.29  \\
      & TetGen & 48.9  & 1.9   & 97.6  & 176.7 & 1.22  & 24.24 & \cellcolor[rgb]{ .678,  .847,  .902}1.49 & \cellcolor[rgb]{ .678,  .847,  .902}2.6 & \cellcolor[rgb]{ .678,  .847,  .902}0.3 & \cellcolor[rgb]{ .678,  .847,  .902}0.62 & 104   & 4.93  \\
      & fTetwild & 49.7  & 8.6   & \cellcolor[rgb]{ .678,  .847,  .902}95.7 & 162.7 & 1.16  & 4.75  & 1.54  & 2.69  & 0.32  & 0.65  & 105   & 0.24  \\
      \hline
\multirow{5}[0]{*}{\rotatebox{90}{Pegaso}} & WSVM  & \cellcolor[rgb]{ .678,  .847,  .902}52.2 & \cellcolor[rgb]{ .678,  .847,  .902}18.4 & \cellcolor[rgb]{ .678,  .847,  .902}95.4 & 154.3 & \cellcolor[rgb]{ .678,  .847,  .902}1.13 & \cellcolor[rgb]{ .678,  .847,  .902}2.71 & \cellcolor[rgb]{ .678,  .847,  .902}1.43 & 2.78  & \cellcolor[rgb]{ .678,  .847,  .902}0.27 & 0.69  & 420   & \cellcolor[rgb]{ .678,  .847,  .902}0.05  \\
      & Gmsh  & 47.6  & 10.7  & 101.8 & 159.9 & 1.26  & 4.07  & 1.52  & \cellcolor[rgb]{ .678,  .847,  .902}2.51 & 0.32  & \cellcolor[rgb]{ .678,  .847,  .902}0.59 & 430   & 6.64  \\
      & MMG3D   & 47.1  & 12    & 98.6  & \cellcolor[rgb]{ .678,  .847,  .902}153.7 & 1.21  & 3.76  & 1.62  & 4.88  & 0.35  & 0.79  & 459   & 1.43  \\
      & TetGen & 49.5  & 1.7   & 97    & 177   & 1.2   & 26.86 & 1.47  & 3.05  & 0.29  & 0.69  & 455   & 4.24  \\
      & fTetwild & 49.7  & 15.4  & 96.2  & 155   & 1.16  & 3.02  & 1.54  & 3.2   & 0.32  & 0.67  & 407   & 0.25  \\
      \hline
\multirow{5}[0]{*}{\rotatebox{90}{Sphere}} & WSVM  & \cellcolor[rgb]{ .678,  .847,  .902}52.6 & 17.8  & \cellcolor[rgb]{ .678,  .847,  .902}95.1 & 152.6 & \cellcolor[rgb]{ .678,  .847,  .902}1.12 & 2.87  & \cellcolor[rgb]{ .678,  .847,  .902}1.42 & 2.38  & \cellcolor[rgb]{ .678,  .847,  .902}0.26 & \cellcolor[rgb]{ .678,  .847,  .902}0.59 & 55    & \cellcolor[rgb]{ .678,  .847,  .902}0.03  \\
      & Gmsh  & 47.6  & 13    & 101.9 & 156.6 & 1.26  & 3.34  & 1.51  & \cellcolor[rgb]{ .678,  .847,  .902}2.29 & 0.31  & \cellcolor[rgb]{ .678,  .847,  .902}0.59 & 56    & 6.53  \\
      & MMG3D   & 47.3  & 13.8  & 98.2  & \cellcolor[rgb]{ .678,  .847,  .902}150.2 & 1.2   & 3.56  & 1.6   & 3.29  & 0.35  & 0.74  & 52    & 1.54  \\
      & TetGen & 49.6  & 2.8   & 96.9  & 175.8 & 1.2   & 18.11 & 1.47  & 2.72  & 0.29  & 0.64  & 60    & 4.28  \\
      & fTetwild & 52.2  & \cellcolor[rgb]{ .678,  .847,  .902}18 & 95.3  & 151.5 & 1.14  & \cellcolor[rgb]{ .678,  .847,  .902}2.62 & 1.45  & 2.75  & \cellcolor[rgb]{ .678,  .847,  .902}0.26 & 0.7   & 39    & 0.71  \\
          \hline
    \end{tabular}%
  \label{tab:MeshQuality}%
  \end{scriptsize}
\end{table}%

\begin{figure*}[htbp]
\centering
\includegraphics[width=1\linewidth]{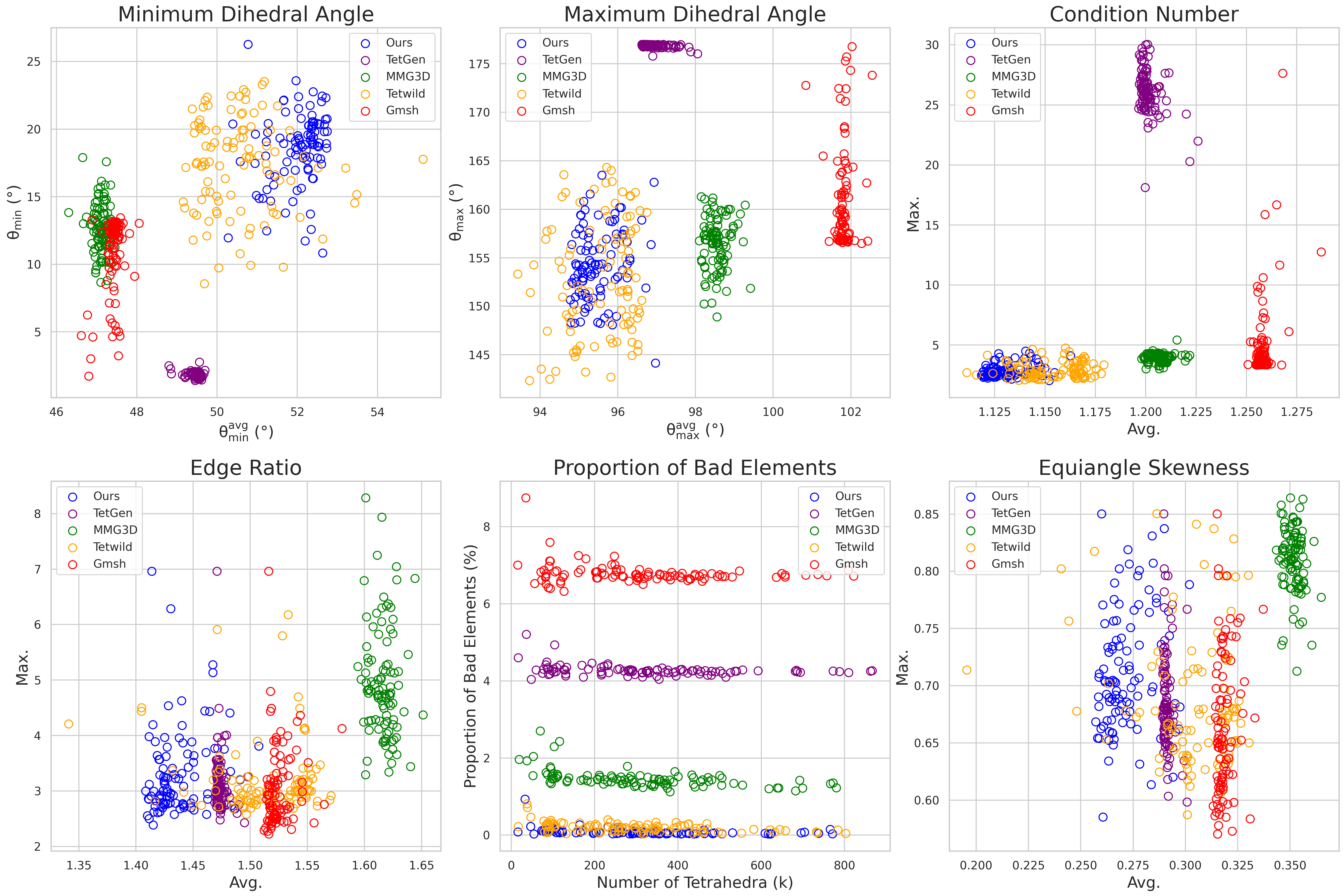}
\caption{Comparison of WSVM with various methods across different metrics. In the scatter plots, data points in the top-right corner represent better performance for the minimum dihedral angle. Data points in the bottom-left corner indicate better performance for the maximum dihedral angle, condition number, edge ratio, and equiangle skewness. Lower data points in the proportion of bad elements plot suggest a smaller percentage of poor-quality elements.}
\label{fig:result_compare}
\end{figure*}

\section{Experiments}\label{sec:experiment}

We implemented our algorithm in C++ and tested on a laptop with 32 GB of memory, equipped with an AMD Ryzen 9 7945HX CPU.

\paragraph{Baselines}
We compared our method against four representative methods: TetGen \cite{hang2015tetgen}, Gmsh \cite{GmshPaper}, MMG3D \cite{DAPOGNY2014358}, and fTetWild \cite{hu2020fasttetwild}. To ensure a fair comparison, we carefully tuned parameters so that the meshes generated for each test model had a similar number of tetrahedra. Specifically, for TetGen, we adjusted its parameters to align the average volume of its tetrahedral elements with ours, thereby controlling the tetrahedron count. For Gmsh and MMG3D, we controlled the mesh size to match our mesh scale. For fTetWild, we used the input target edge length to set the mesh scale. All other parameters were maintained at their default values. As reported in Table \ref{tab:MeshQuality}, the number of tetrahedral elements \( N_t \) generated by each method is similar, facilitating a fair comparison.

\paragraph{Dataset}
We selected 100 models, most of which are from the dataset provided by \cite{yang2020error} and \cite{levy2010p}. Before testing, each model underwent isotropic remeshing to obtain a high-quality surface mesh. The dataset we tested covers a wide range of geometric and topological features, including high-genus models, models with sharp features, as well as CAD models commonly used in industrial applications.

\begin{figure}[htbp]
\centering
\includegraphics[width=\linewidth]{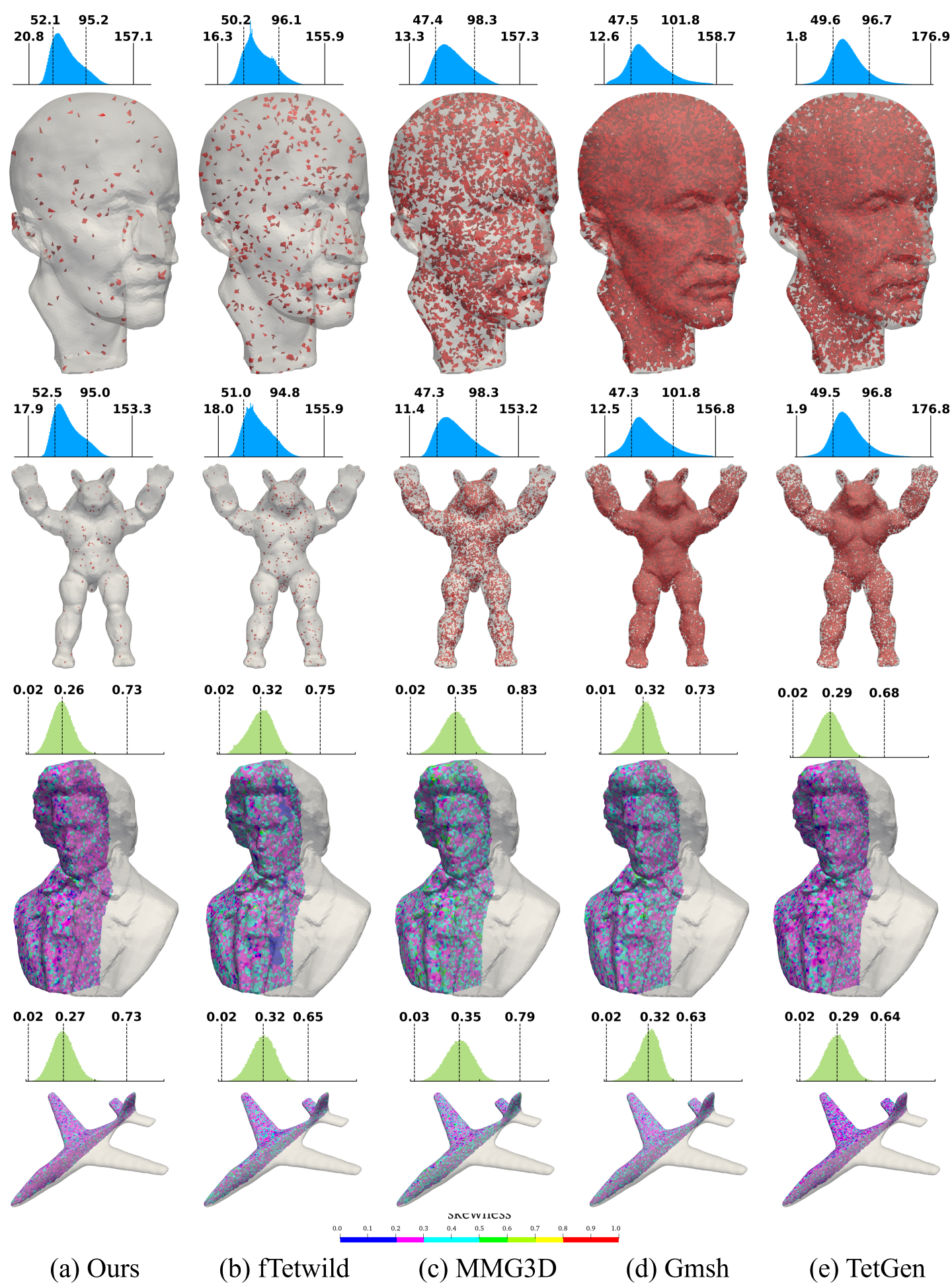} 
\caption{Comparison of WSVM with various methods, for the first two models, elements with dihedral angles less than 30 degrees are rendered in red. For the latter two models, we show the mesh visualization colored by equiangle skewness values, along with corresponding histograms indicating the distribution of skewness values. The histograms highlight the minimum, average, and maximum skewness values, represented by green bars on the plot.}
\label{fig:compare_method}
\end{figure}


\paragraph{Metrics}
We assess the quality of tetrahedral meshes using five commonly used metrics: 
1) \textbf{Distribution of dihedral angles $\theta$}, which characterizes the regularity of tetrahedral elements. We report the minimum dihedral angle $\theta_{\min}$, the average minimum $\theta_{\min}^{\text{avg}}$, the maximum dihedral angle $\theta_{\max}$, and the average maximum $\theta_{\max}^{\text{avg}}$.
2) \textbf{Edge ratio} $e \in [1, +\infty)$, defined as the ratio of the longest edge to the shortest edge of each tetrahedron \cite{AspectRatio}. We evaluate the average and maximum edge ratios among all tetrahedra, with the optimal value being one.
3) \textbf{Condition number} $c \in [1, +\infty)$, which measures the deviation of an element from a regular tetrahedron~\cite{Condition}. The optimal value is one.
4) \textbf{Equiangle skewness} $s \in [0, 1]$, which measures the maximum ratio of an element's angle to the angle of an equilateral element\footnote{\url{https://www.pointwise.com/doc/user-manual/examine/functions/equiangle-skewness.html}}. The optimal value is zero.
5) \textbf{Proportion of bad elements} $P \in [0, 100]$. Tetrahedra with dihedral angles less than 30° are considered bad elements, and we calculate their percentage among the total tetrahedra. A lower percentage indicates higher uniformity and angle quality.

\paragraph{Performance}  
We validated five methods on the dataset. Due to the high-quality manifold meshes used, our primary focus is on the quality of the final generated tetrahedral meshes.
In Figure \ref{fig:result_compare}, we use scatter plots to compare the mesh quality metrics of all methods applied to the dataset, providing an intuitive comparison. 
In Table \ref{tab:MeshQuality}, we present mesh quality data for 10 selected models. 
The results demonstrate that our method excels in controlling both small and large angles in the tetrahedral meshes. It also performs well in terms of condition number, edge ratio, and equiangle skewness, maintaining poor elements within a minimal range.
Additionally, we selected 20 models from the dataset, covering common graphics test models, CAD models, and high-genus models, and visualized them as shown in Figure \ref{fig:gallery}.


\begin{figure}[htbp]
\centering
\includegraphics[width=\linewidth]{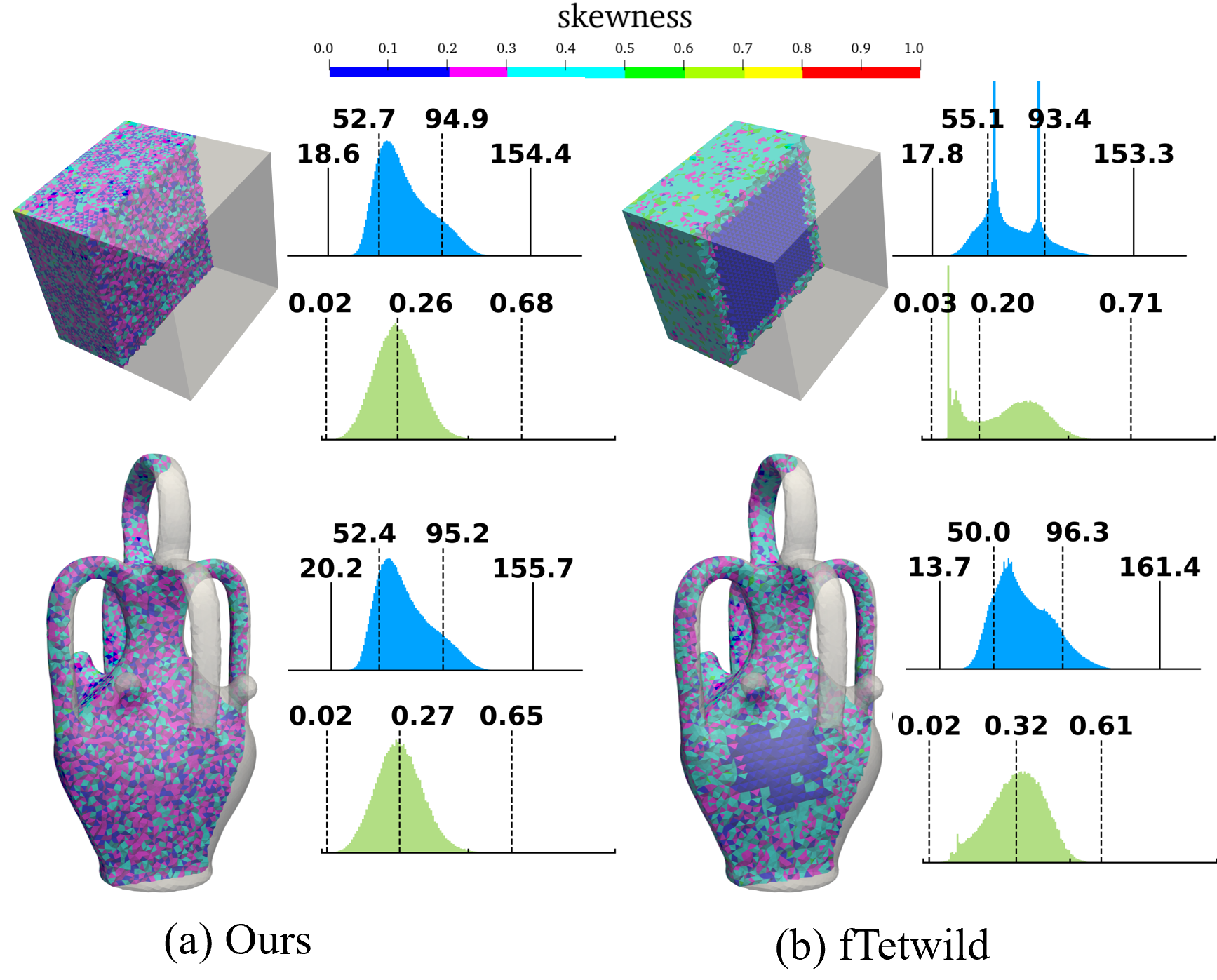}
\caption{The comparison of meshes generated by WSVM and fTetWild was performed using the simple Cube model and the high-genus Botijo model, focusing on dihedral angle and equiangle skewness metrics.}
\label{fig:tetwild_compare}
\end{figure}

\paragraph{Comparison with SOTA software}
TetGen and Gmsh are widely-used tetrahedral mesh generation programs based on Delaunay triangulation, employing highly efficient vertex insertion algorithms. As shown in Figure \ref{fig:result_compare}, our results indicate that TetGen performs well in terms of edge ratio, while Gmsh excels at controlling the worst equiangle skewness.
MMG3D is an popular tetrahedral remeshing tool that operates relatively slower than the previous two programs. However, it demonstrates superior control over maximum and minimum angles, as well as a better proportion of poor-quality elements.

To provide a more intuitive comparison of mesh quality, we selected several models and compared the poor-quality elements and overall mesh quality generated by each method, as shown in Figure \ref{fig:compare_method}. Our approach and fTetWild generate the highest mesh quality among all methods. However, our approach and fTetWild are also the slowest, typically operating an order of magnitude slower than the other methods. This is primarily because both employ extensive remeshing operations to improve mesh quality, which often results in a trade-off between high-quality tetrahedra and computational efficiency.

FTetWild is a robust tetrahedralization algorithm known for its high tolerance to imperfect inputs, achieved by allowing modifications to the input boundary mesh within a certain tolerance range. 
The algorithm uses a voxel-based point insertion strategy and optimizes vertices by minimizing AMIPS energy~\cite{AMIPSrabinovich2017scalable}. FTetWild generates right-angled tetrahedra, effectively preventing the creation of slivers in the interior. This strategy is particularly effective for models with simple or axis-aligned boundaries, as demonstrated by the Cube model in Figure \ref{fig:tetwild_compare}. However, this approach tends to produce non-uniform dihedral angles, typically peaking around $60^{\circ}$ and $90^{\circ}$, due to the prevalence of right-angled interior tetrahedra. Additionally, when dealing with high-genus models with non-axis-aligned boundaries, such as the Botijo model, this approach faces difficulty in producing high-quality tetrahedral elements near the boundary. 

In contrast to fTetWild, which is designed for robustness and handling imperfect inputs, WSVM assumes the input mesh is of high quality. As shown in the Figure \ref{fig:result_compare}, our method demonstrates advantages in terms of condition number, edge ratio, and equiangle skewness. WSVM aims to produce high-quality tetrahedral elements with fewer slivers and greater uniformity in tetrahedral shapes. Therefore, we consider WSVM and fTetWild to be complementary, each serving distinct application domains.

\begin{figure}[htbp]
\centering
\includegraphics[width=0.8\linewidth]{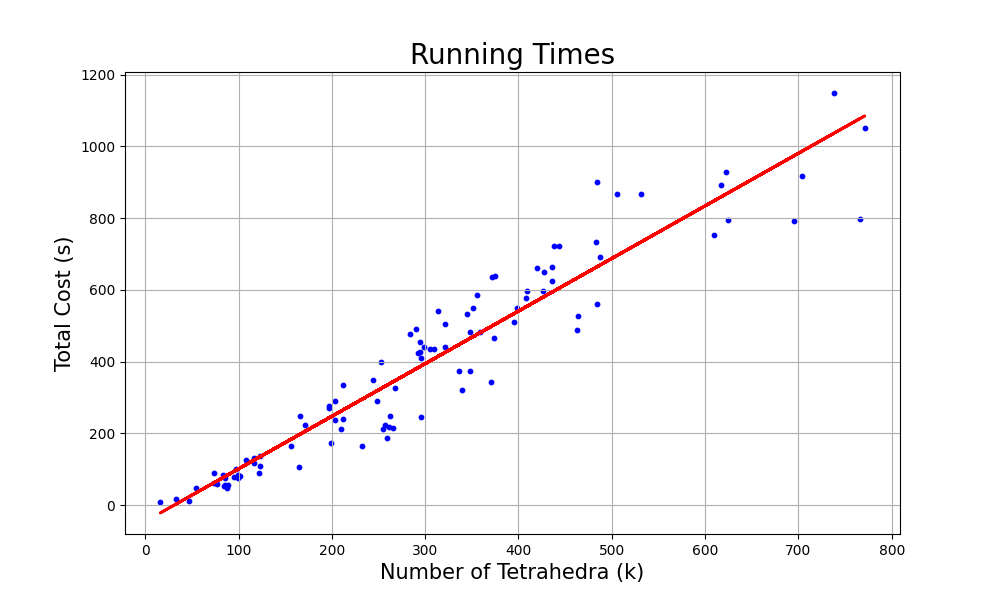}
\caption{WSVM exhibits a roughly linear time complexity with respect to the number of tetrahedral elements in the output mesh.}
\label{fig:timedata}
\end{figure}

\paragraph{Running times} We analyzed the running times by plotting scatter points of mesh sizes against running times for all models and performing a least squares linear fit, as shown in Figure \ref{fig:timedata}. 
Due to its convex nature, optimizing the weighted square volume energies for each vertex is highly efficient. Based on a high-quality surface mesh and initial Delaunay tetrahedralization, global convergence typically requires no more than one hundred iterations.
However, the main bottleneck is the tetrahedral flip operations, which are more complex and time-consuming compared to their two-dimensional counterparts. Despite of this, our algorithm maintains running times within the minute range for all tested models.

\begin{figure}[htbp]
\centering
\includegraphics[width=0.8\linewidth]{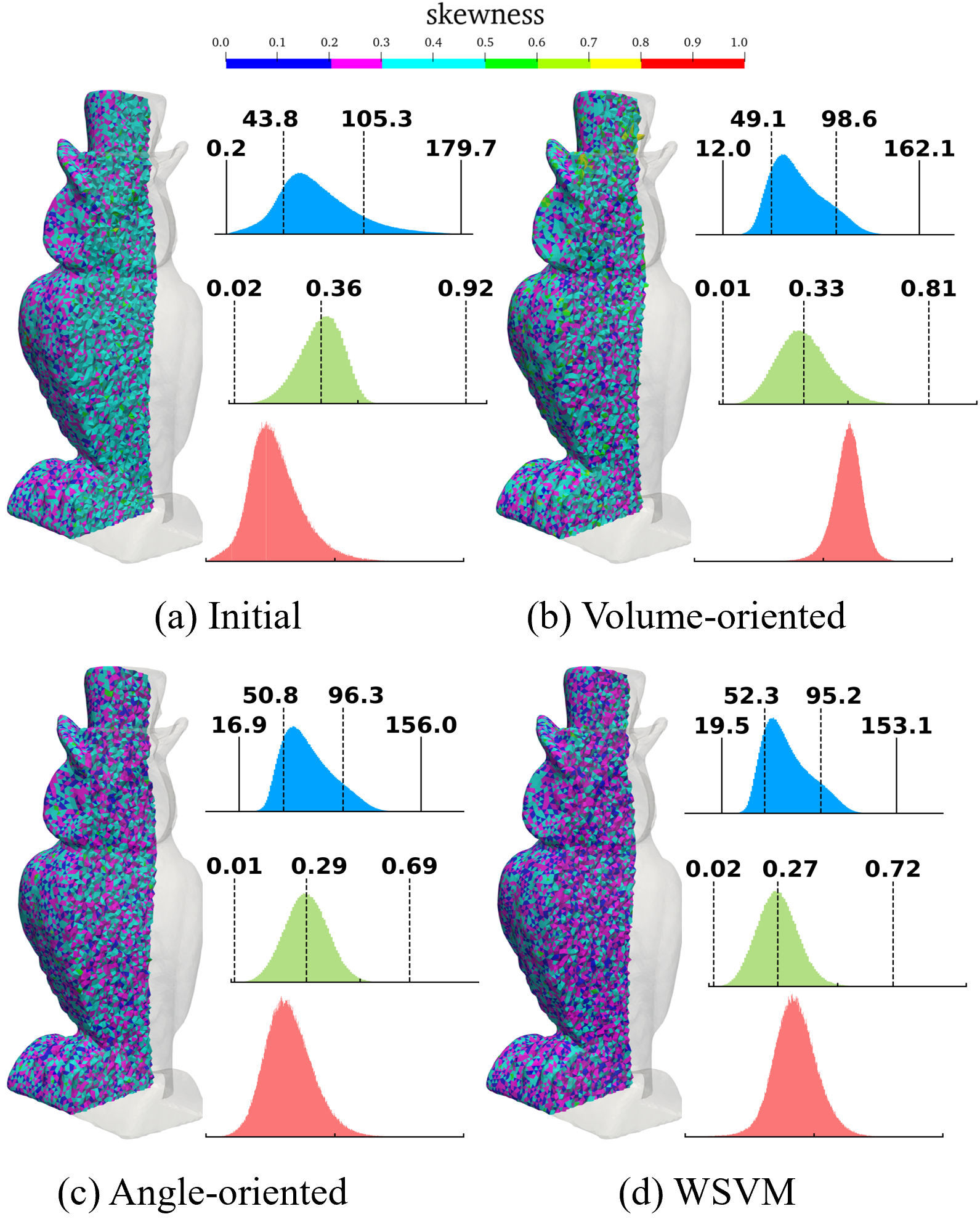}
\caption{Changes in the dihedral angle histogram equiangle skewness histogram and volume histogram of the Owl model at different stages of WSVM.}
\label{fig:owl_skewness}
\end{figure}

\begin{figure}[htbp]
\centering
   \includegraphics[width=3.2in]{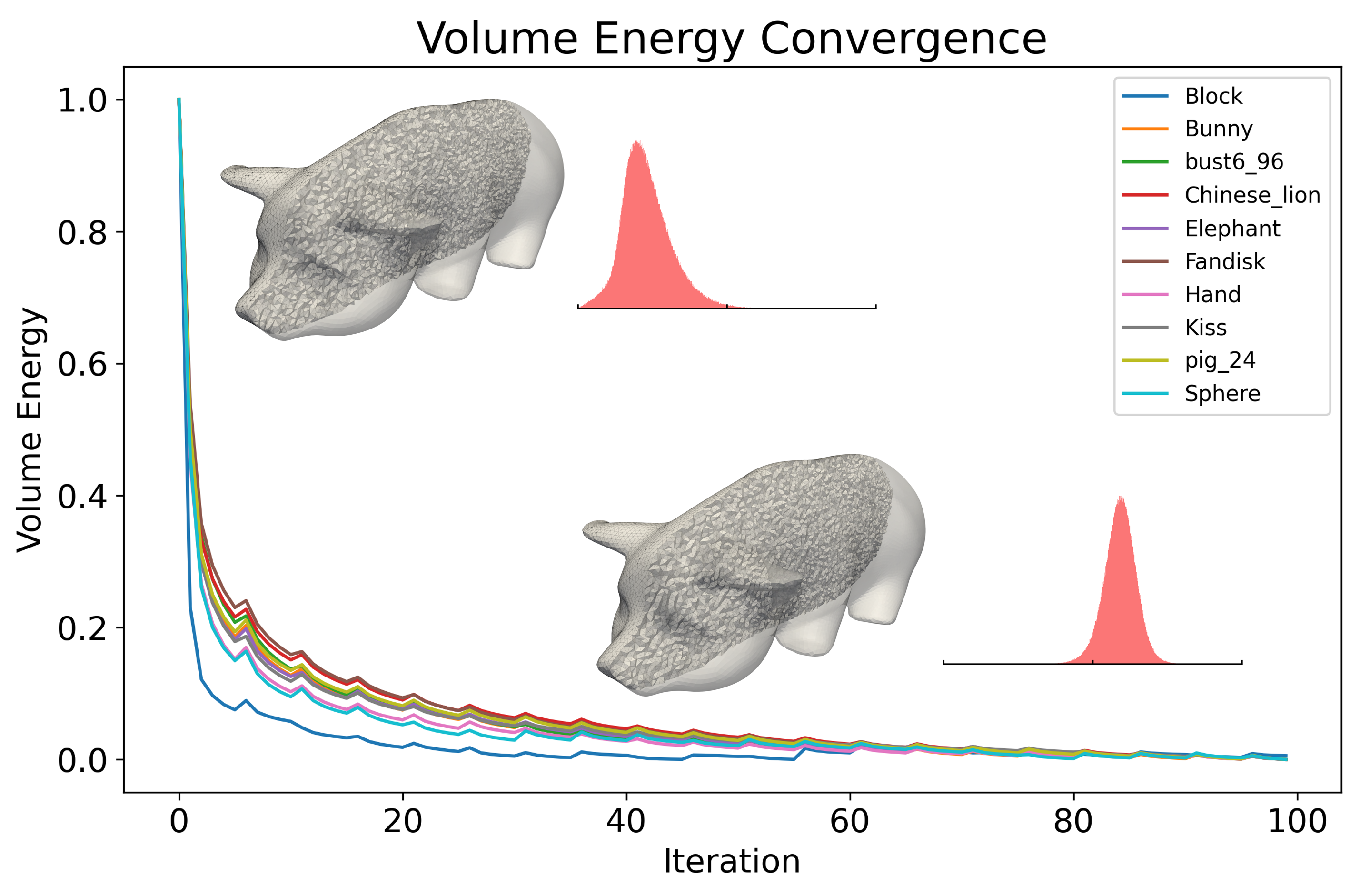}
\caption{Volume-oriented optimization. Ten models were selected to demonstrate the changes in their normalized volume-oriented energy during the optimization stage. A comparison of the volume histograms before and after optimization is shown for one of the models.}\label{fig:volume_energy}
\end{figure}

\begin{figure*}[htbp]
\centering
  \includegraphics[width=3.20in]{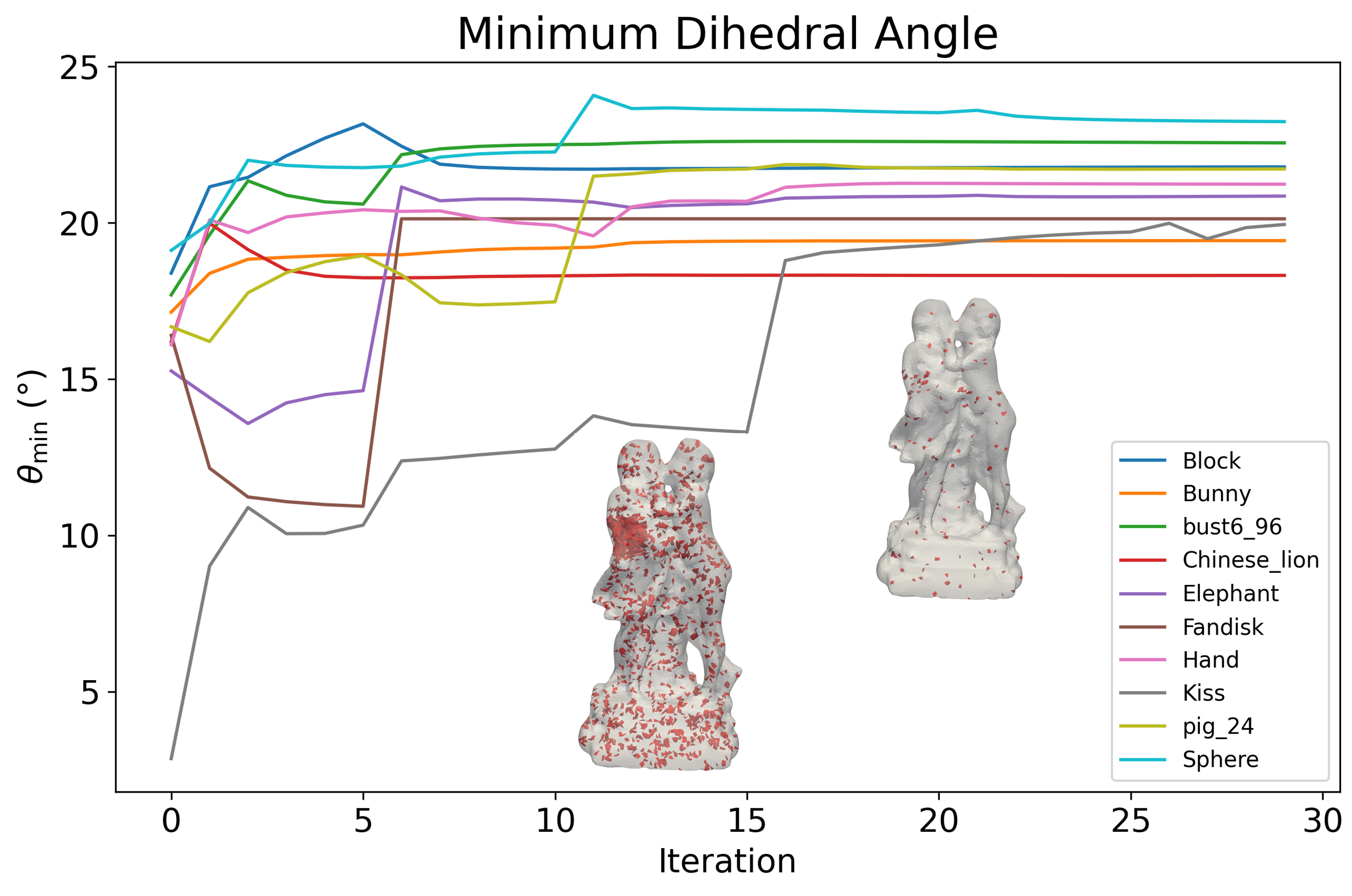}
    \includegraphics[width=3.20in]{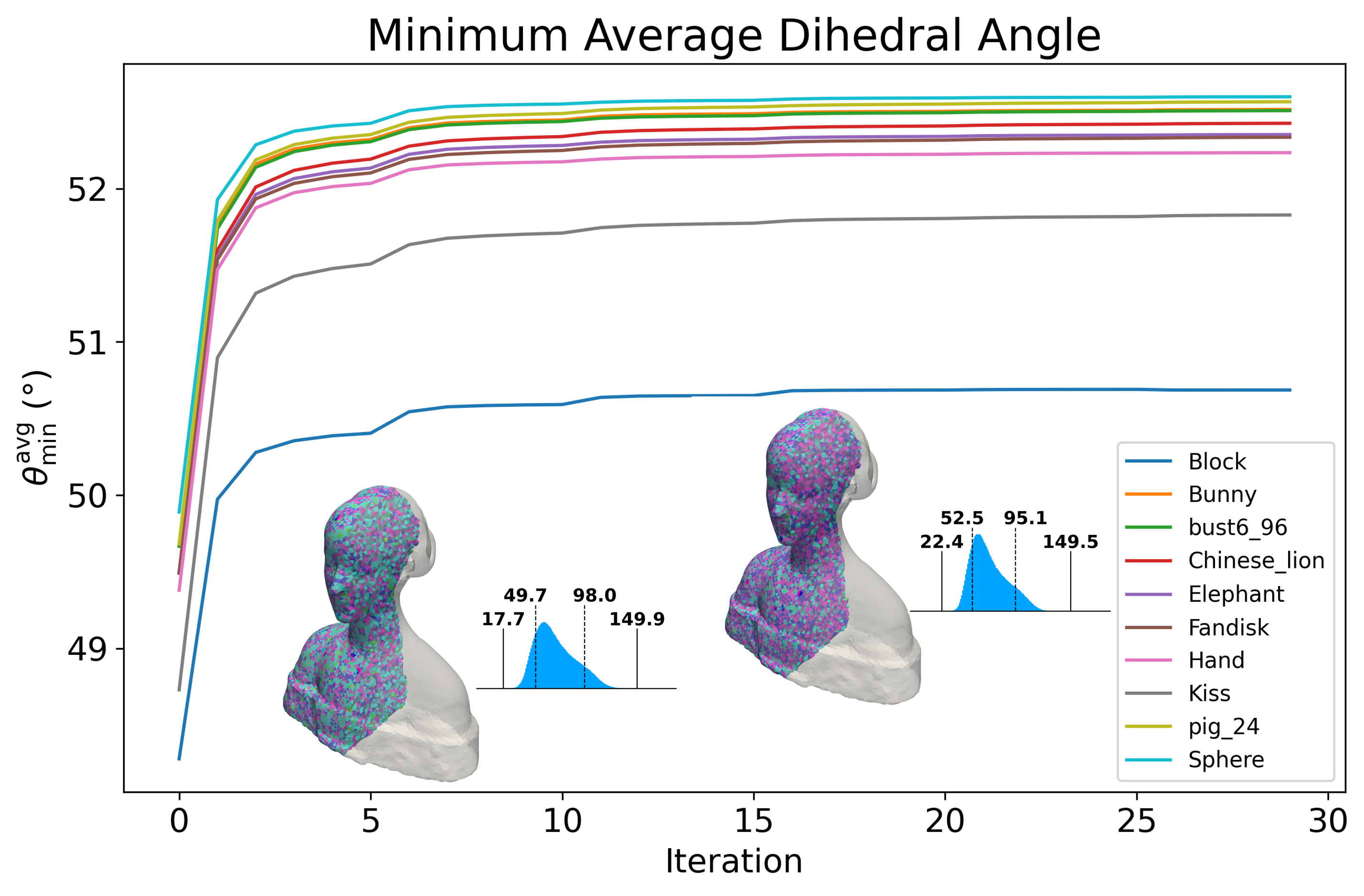}
\caption{Dihedral angle-oriented optimization. Left: The change curves of the minimum dihedral angles for ten models during the angle-oriented optimization process, along with a plot showing the changes in poor-quality elements before and after optimization for one of the models. Right: The average minimum dihedral angles of ten models gradually converge after iterations. A plot of the dihedral angle histogram changes is presented for one of the models.}\label{fig:minAvg_angle_improvement}
\end{figure*}

\paragraph{Ablation studies} 
Given an initial tetrahedral mesh, WSVM enhances its quality by minimizing volume-oriented energy and dihedral angle-oriented energy, respectively. Figure \ref{fig:volume_energy} shows the change curve of the volume-oriented energy during the optimization process. Although there are minor oscillations due to topological operations, as well as split and collapse actions, the overall trend gradually converges. Figure~\ref{fig:minAvg_angle_improvement} illustrates the effect of further optimizing the angles after volume optimization, showing significant improvements in both the minimum angle and the average minimum dihedral angle.

In Figure \ref{fig:owl_skewness}, we use the Owl model to demonstrate the four stages of optimization in WSVM. We observe that while volume optimization alone does not contribute to the mesh quality, it greatly improves the uniformity of the tetrahedral mesh. Angle optimization, on the other hand, significantly improves the mesh quality. By combining these two optimizations, we greatly enhance the overall quality of the generated tetrahedral meshes.


\begin{figure*}[htbp]
\centering
\includegraphics[width=0.95\linewidth]{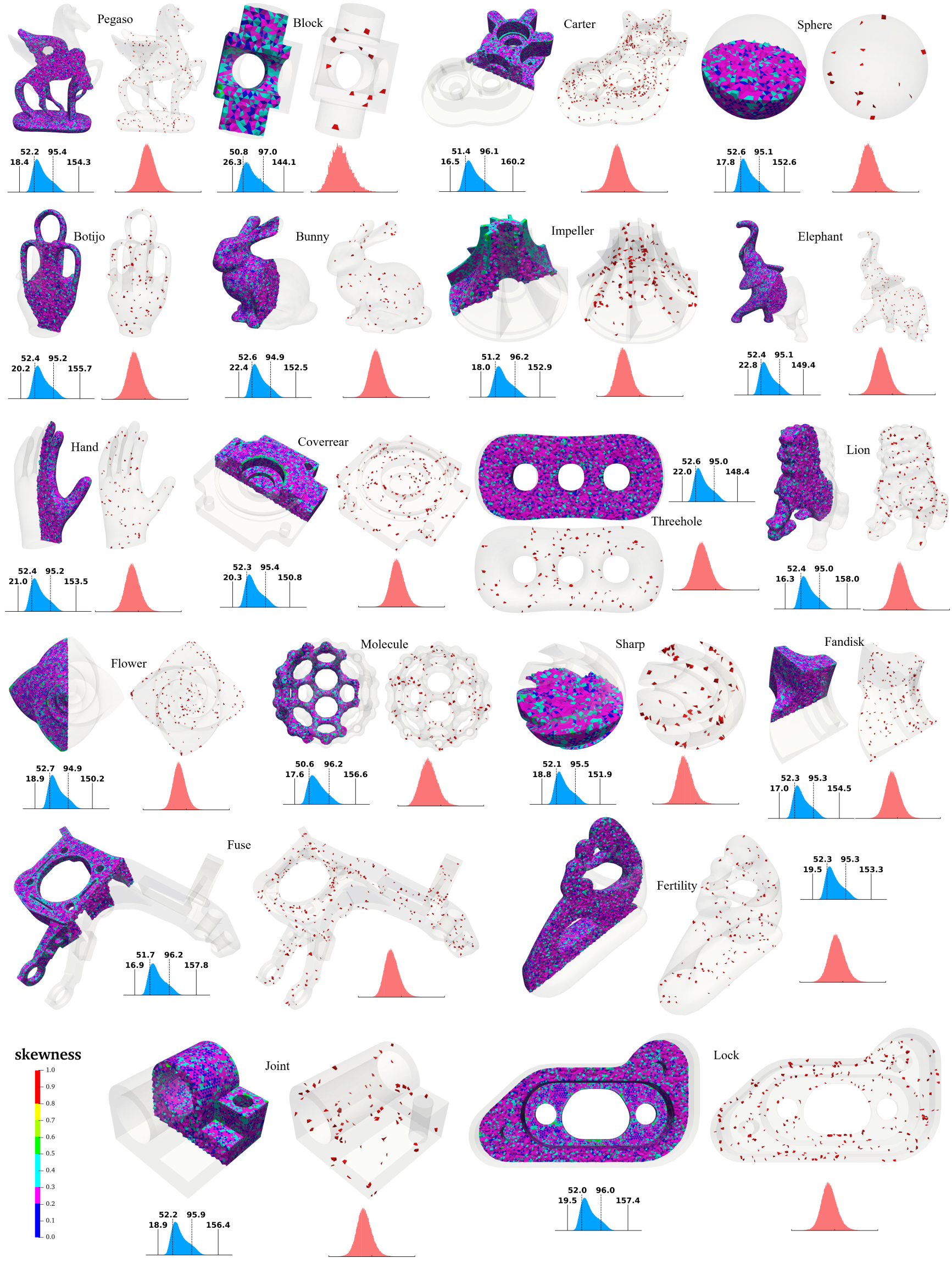}
\caption{Visualization of selected results across the test models. For each model, we present: a slice view of Equiangle Skewness quality, a visualization highlighting tetrahedra with minimum dihedral angles less than 30 degrees, and histograms depicting the distribution of dihedral angles (blue) and tetrahedron volumes (red).
}
\label{fig:gallery}
\end{figure*}

\section{Conclusion and Future Work}\label{sec:conclude}
This paper introduces WSVM, an optimization-based method for generating high-quality, uniform tetrahedral meshes. Specifically, we propose a novel weighted squared volume energy. Minimizing this energy with a constant weight promotes uniform tetrahedral volumes, while an adaptive weight, which considers local geometry, enhances the uniformity of tetrahedral shapes. By integrating these strategies together, we demonstrate that WSVM produces tetrahedral meshes of higher quality--with fewer slivers and greater uniformity of tetrahedral elements--compared to existing methods. 
 
There are several interesting directions for future research. First, parallelizing and enhancing tetrahedral flip operations could significantly improve runtime efficiency. 
Second, exploring alternative density functions with sizing fields could generate non-uniform tetrahedral meshes. Third, extending WSVM to generate higher-order meshes would be valuable for physics-based simulations. Finally, enabling WSVM to accept imperfect inputs would greatly expand its applicability to real-world models.


\bibliographystyle{IEEEtran}
\bibliography{ref} 

\begin{thebibliography}{10}
\providecommand{\url}[1]{#1}
\csname url@samestyle\endcsname
\providecommand{\newblock}{\relax}
\providecommand{\bibinfo}[2]{#2}
\providecommand{\BIBentrySTDinterwordspacing}{\spaceskip=0pt\relax}
\providecommand{\BIBentryALTinterwordstretchfactor}{4}
\providecommand{\BIBentryALTinterwordspacing}{\spaceskip=\fontdimen2\font plus
\BIBentryALTinterwordstretchfactor\fontdimen3\font minus \fontdimen4\font\relax}
\providecommand{\BIBforeignlanguage}[2]{{%
\expandafter\ifx\csname l@#1\endcsname\relax
\typeout{** WARNING: IEEEtran.bst: No hyphenation pattern has been}%
\typeout{** loaded for the language `#1'. Using the pattern for}%
\typeout{** the default language instead.}%
\else
\language=\csname l@#1\endcsname
\fi
#2}}
\providecommand{\BIBdecl}{\relax}
\BIBdecl

\bibitem{hang2015tetgen}
H.~Si, ``Tetgen, a delaunay-based quality tetrahedral mesh generator,'' \emph{ACM Transactions on Mathematical Software}, vol.~41, no.~2, pp. 11:1--11:36, 2015.

\bibitem{cheng2013delaunay}
S.-W. Cheng, T.~K. Dey, J.~Shewchuk, and S.~Sahni, \emph{Delaunay mesh generation}.\hskip 1em plus 0.5em minus 0.4em\relax CRC Press Boca Raton, 2013.

\bibitem{dey2004approximating}
T.~K. Dey and W.~Zhao, ``Approximating the medial axis from the voronoi diagram with a convergence guarantee,'' \emph{Algorithmica}, vol.~38, pp. 179--200, 2004.

\bibitem{cabiddu2017}
D.~Cabiddu and M.~Attene, ``$\epsilon$-maps: Characterizing, detecting and thickening thin features in geometric models,'' \emph{Computers \& Graphics}, vol.~66, pp. 143--153, 2017.

\bibitem{hu2018tetwild}
Y.~Hu, Q.~Zhou, X.~Gao, A.~Jacobson, D.~Zorin, and D.~Panozzo, ``Tetrahedral meshing in the wild,'' \emph{ACM Transactions on Graphics}, vol.~37, no.~4, pp. 60:1--60:14, 2018.

\bibitem{hu2020fasttetwild}
Y.~Hu, T.~Schneider, B.~Wang, D.~Zorin, and D.~Panozzo, ``Fast tetrahedral meshing in the wild,'' \emph{ACM Transactions on Graphics}, vol.~39, no.~4, pp. 117:1--117:18, 2020.

\bibitem{gao2020learning}
J.~Gao, W.~Chen, T.~Xiang, A.~Jacobson, M.~McGuire, and S.~Fidler, ``Learning deformable tetrahedral meshes for 3d reconstruction,'' \emph{Advances in neural information processing systems}, vol.~33, pp. 9936--9947, 2020.

\bibitem{csahistan2021ray}
A.~{\c{S}}ah{\i}stan, S.~Demirci, N.~Morrical, S.~Zellmann, A.~Aman, I.~Wald, and U.~G{\"u}d{\"u}kbay, ``Ray-traced shell traversal of tetrahedral meshes for direct volume visualization,'' in \emph{2021 IEEE Visualization Conference (VIS)}.\hskip 1em plus 0.5em minus 0.4em\relax IEEE, 2021, pp. 91--95.

\bibitem{plocharski2024skeleton}
A.~Płocharski, J.~Porter-Sobieraj, A.~Lamecki, T.~Herman, and A.~Uszakow, ``Skeleton based tetrahedralization of surface meshes,'' \emph{Computer Aided Geometric Design}, vol. 111, p. 102317, 2024.

\bibitem{baker1989automatic}
T.~J. Baker, ``Automatic mesh generation for complex three-dimensional regions using a constrained delaunay triangulation,'' \emph{Engineering with Computers}, vol.~5, pp. 161--175, 1989.

\bibitem{owen1998survey}
S.~J. Owen, ``A survey of unstructured mesh generation technology.'' \emph{IMR}, vol. 239, no. 267, p.~15, 1998.

\bibitem{schneider2022large}
T.~Schneider, Y.~Hu, X.~Gao, J.~Dumas, D.~Zorin, and D.~Panozzo, ``A large-scale comparison of tetrahedral and hexahedral elements for solving elliptic pdes with the finite element method,'' \emph{ACM Transactions on Graphics}, vol.~41, no.~3, pp. 23:1--23:14, 2022.

\bibitem{biswas1998tetrahedral}
R.~Biswas and R.~C. Strawn, ``Tetrahedral and hexahedral mesh adaptation for cfd problems,'' \emph{Applied Numerical Mathematics}, vol.~26, no. 1-2, pp. 135--151, 1998.

\bibitem{yu2008high}
Z.~Yu, M.~J. Holst, and J.~A. McCammon, ``High-fidelity geometric modeling for biomedical applications,'' \emph{Finite Elements in Analysis and Design}, vol.~44, no.~11, pp. 715--723, 2008.

\bibitem{zhang2021meshingnet3d}
Z.~Zhang, P.~K. Jimack, and H.~Wang, ``Meshingnet3d: Efficient generation of adapted tetrahedral meshes for computational mechanics,'' \emph{Advances in Engineering Software}, vol. 157, p. 103021, 2021.

\bibitem{lohner1988generation}
R.~L{\"o}hner and P.~Parikh, ``Generation of three-dimensional unstructured grids by the advancing-front method,'' \emph{International Journal for Numerical Methods in Fluids}, vol.~8, no.~10, pp. 1135--1149, 1988.

\bibitem{jin1993AFTgeneration}
H.~Jin and R.~Tanner, ``Generation of unstructured tetrahedral meshes by advancing front technique,'' \emph{International Journal for Numerical Methods in Engineering}, vol.~36, no.~11, pp. 1805--1823, 1993.

\bibitem{hu2019triwild}
Y.~Hu, T.~Schneider, X.~Gao, Q.~Zhou, A.~Jacobson, D.~Zorin, and D.~Panozzo, ``Triwild: robust triangulation with curve constraints,'' \emph{ACM Transactions on Graphics}, vol.~38, no.~4, pp. 52:1--52:15, 2019.

\bibitem{field1988laplacian}
D.~A. Field, ``Laplacian smoothing and delaunay triangulations,'' \emph{Communications in applied numerical methods}, vol.~4, no.~6, pp. 709--712, 1988.

\bibitem{freitag1995efficient}
L.~Freitag, M.~Jones, and P.~Plassmann, ``An efficient parallel algorithm for mesh smoothing,'' Argonne National Lab.(ANL), Argonne, IL (United States), Tech. Rep., 1995.

\bibitem{freitag1996comparison}
L.~A. Freitag and C.~Ollivier-Gooch, ``A comparison of tetrahedral mesh improvement techniques,'' Argonne National Lab.(ANL), Argonne, IL (United States), Tech. Rep., 1996.

\bibitem{vollmer1999improvedLaplace}
J.~Vollmer, R.~Mencl, and H.~Mueller, ``Improved laplacian smoothing of noisy surface meshes,'' \emph{Computer Graphics Forum}, vol.~18, no.~3, pp. 131--138, 1999.

\bibitem{sorkine2006differential}
O.~Sorkine, ``Differential representations for mesh processing,'' \emph{Computer Graphics Forum}, vol.~25, no.~4, pp. 789--807, 2006.

\bibitem{klingner2009tetrahedral}
B.~M. Klingner, ``Tetrahedral mesh improvement,'' Ph.D. dissertation, University of California at Berkeley, 2008.

\bibitem{caendish1985apporach}
J.~C. Caendish, D.~A. Field, and W.~H. Frey, ``An apporach to automatic three-dimensional finite element mesh generation,'' \emph{International journal for numerical methods in engineering}, vol.~21, no.~2, pp. 329--347, 1985.

\bibitem{shewchuk2002good}
J.~Shewchuk, ``What is a good linear finite element? interpolation, conditioning, anisotropy, and quality measures (preprint),'' in \emph{Proceedings of the 11th International Meshing Roundtable}, 2002.

\bibitem{sheehy2012new}
D.~R. Sheehy, ``New bounds on the size of optimal meshes,'' \emph{Computer Graphics Forum}, vol.~31, no.~5, pp. 1627--1635, 2012.

\bibitem{sorgente2023survey}
T.~Sorgente, S.~Biasotti, G.~Manzini, and M.~Spagnuolo, ``A survey of indicators for mesh quality assessment,'' \emph{Computer Graphics Forum (Eurographics)}, vol.~42, no.~2, pp. 461--483, 2023.

\bibitem{du1999centroidal}
Q.~Du, V.~Faber, and M.~Gunzburger, ``Centroidal voronoi tessellations: Applications and algorithms,'' \emph{SIAM review}, vol.~41, no.~4, pp. 637--676, 1999.

\bibitem{du2005anisotropic}
Q.~Du and D.~Wang, ``Anisotropic centroidal voronoi tessellations and their applications,'' \emph{SIAM Journal on Scientific Computing}, vol.~26, no.~3, pp. 737--761, 2005.

\bibitem{liu2009centroidal}
Y.~Liu, W.~Wang, B.~L{\'e}vy, F.~Sun, D.-M. Yan, L.~Lu, and C.~Yang, ``On centroidal voronoi tessellation—energy smoothness and fast computation,'' \emph{ACM Transactions on Graphics}, vol.~28, no.~4, pp. 101:1--101:17, 2009.

\bibitem{levy2010p}
B.~L{\'e}vy and Y.~Liu, ``$l_p$ centroidal voronoi tessellation and its applications,'' \emph{ACM Transactions on Graphics}, vol.~29, no.~4, pp. 119:1--119:11, 2010.

\bibitem{yan2013efficient}
D.-M. Yan, W.~Wang, B.~L{\'e}vy, and Y.~Liu, ``Efficient computation of clipped voronoi diagram for mesh generation,'' \emph{Computer-Aided Design}, vol.~45, no.~4, pp. 843--852, 2013.

\bibitem{ekelschot2018robust}
D.~Ekelschot, M.~Ceze, A.~Garai, and S.~M. Murman, ``Robust metric aligned quad-dominant meshing using lp centroidal voronoi tessellation,'' in \emph{2018 AIAA Aerospace Sciences Meeting}, 2018, pp. 1501:1--1501:22.

\bibitem{xiao2018optimal}
Y.~Xiao, Z.~Chen, J.~Cao, Y.~J. Zhang, and C.~Wang, ``Optimal power diagrams via function approximation,'' \emph{Computer-Aided Design}, vol. 102, pp. 52--60, 2018.

\bibitem{ekelschot2019parallel}
D.~Ekelschot, M.~Ceze, S.~M. Murman, and A.~Garai, ``Parallel high-order anisotropic meshing using discrete metric tensors,'' in \emph{AIAA Scitech 2019 forum}, 2019, pp. 1--14.

\bibitem{telsang2022computation}
B.~Telsang and S.~M. Djouadi, ``Computation of centroidal voronoi tessellations in high dimensional spaces,'' \emph{IEEE Control Systems Letters}, vol.~6, pp. 3313--3318, 2022.

\bibitem{fu2014anisotropic}
X.-M. Fu, Y.~Liu, J.~Snyder, and B.~Guo, ``Anisotropic simplicial meshing using local convex functions,'' \emph{ACM Transactions on Graphics}, vol.~33, no.~6, pp. 182:1--182:11, 2014.

\bibitem{renka1995minimal}
R.~J. Renka and J.~Neuberger, ``Minimal surfaces and sobolev gradients,'' \emph{Siam journal on scientific computing}, vol.~16, no.~6, pp. 1412--1427, 1995.

\bibitem{twosimple}
R.~J. Renka, ``Two simple methods for improving a triangle mesh surface,'' \emph{Computer Graphics Forum}, vol.~35, no.~6, pp. 46--58, 2016.

\bibitem{renka2015mesh}
------, ``Mesh improvement by minimizing a weighted sum of squared element volumes,'' \emph{International Journal for Numerical Methods in Engineering}, vol. 101, no.~11, pp. 870--886, 2015.

\bibitem{talmor1997well}
D.~Talmor, ``Well-spaced points for numerical methods,'' Ph.D. dissertation, Carnegie Mellon University, 1997.

\bibitem{chen2014revisiting}
Z.~Chen, W.~Wang, B.~L{\'e}vy, L.~Liu, and F.~Sun, ``Revisiting optimal delaunay triangulation for 3d graded mesh generation,'' \emph{SIAM Journal on Scientific Computing}, vol.~36, no.~3, pp. A930--A954, 2014.

\bibitem{2004TOG_ODT}
P.~Alliez, D.~Cohen-Steiner, M.~Yvinec, and M.~Desbrun, ``Variational tetrahedral meshing,'' \emph{ACM Transactions on Graphics}, vol.~24, no.~3, pp. 617--625, 2005.

\bibitem{chen2004ODT}
L.~Chen and J.-c. Xu, ``Optimal delaunay triangulations,'' \emph{Journal of Computational Mathematics}, pp. 299--308, 2004.

\bibitem{chen2004ODT2}
L.~Chen, ``Mesh smoothing schemes based on optimal delaunay triangulations,'' in \emph{IMR}.\hskip 1em plus 0.5em minus 0.4em\relax Citeseer, 2004, pp. 109--120.

\bibitem{2009NODT}
J.~Tournois, C.~Wormser, P.~Alliez, and M.~Desbrun, ``Interleaving delaunay refinement and optimization for practical isotropic tetrahedron mesh generation,'' Ph.D. dissertation, 2009.

\bibitem{chen2011efficient}
L.~Chen and M.~Holst, ``Efficient mesh optimization schemes based on optimal delaunay triangulations,'' \emph{Computer Methods in Applied Mechanics and Engineering}, vol. 200, no. 9-12, pp. 967--984, 2011.

\bibitem{2022degreeODT}
Y.-q. Hai, Y.-f. Guo, M.~Dong, R.-l. Zhao, K.-w. Sun, and F.-f. Shang, ``Enhanced optimal delaunay triangulation methods with connectivity regularization,'' \emph{Applied Mathematics-A Journal of Chinese Universities}, vol.~37, no.~3, pp. 453--469, 2022.

\bibitem{weng2024global}
Y.~Weng, J.~Cao, and Z.~Chen, ``Global optimization of optimal delaunay triangulation with modified whale optimization algorithm,'' \emph{Engineering with Computers}, pp. 1--22, 2024.

\bibitem{cavalcanti1999three}
P.~R. Cavalcanti and U.~T. Mello, ``Three-dimensional constrained delaunay triangulation: A minimalist approach,'' in \emph{IMR}, 1999, pp. 119--129.

\bibitem{george1997improvements}
P.~L. George, ``Improvements on delaunay-based three-dimensional automatic mesh generator,'' \emph{Finite Elements in Analysis and Design}, vol.~25, no. 3-4, pp. 297--317, 1997.

\bibitem{botsch2010polygon}
M.~Botsch, L.~Kobbelt, M.~Pauly, P.~Alliez, and B.~L{\'e}vy, \emph{Polygon mesh processing}.\hskip 1em plus 0.5em minus 0.4em\relax CRC press, 2010.

\bibitem{backtracking}
J.~Nocedal and S.~J. Wright, \emph{Numerical optimization}.\hskip 1em plus 0.5em minus 0.4em\relax Springer, 1999.

\bibitem{GmshPaper}
C.~Geuzaine and J.-F. Remacle, ``Gmsh: A 3-d finite element mesh generator with built-in pre- and post-processing facilities,'' \emph{International Journal for Numerical Methods in Engineering}, vol.~79, pp. 1309 -- 1331, 09 2009.

\bibitem{DAPOGNY2014358}
\BIBentryALTinterwordspacing
C.~Dapogny, C.~Dobrzynski, and P.~Frey, ``Three-dimensional adaptive domain remeshing, implicit domain meshing, and applications to free and moving boundary problems,'' \emph{Journal of Computational Physics}, vol. 262, pp. 358--378, 2014. [Online]. Available: \url{https://www.sciencedirect.com/science/article/pii/S0021999114000266}
\BIBentrySTDinterwordspacing

\bibitem{yang2020error}
Y.~Yang, W.-X. Zhang, Y.~Liu, L.~Liu, and X.-M. Fu, ``Error-bounded compatible remeshing,'' \emph{ACM Transactions on Graphics (TOG)}, vol.~39, no.~4, pp. 113--1, 2020.

\bibitem{AspectRatio}
V.~Parthasarathy, C.~Graichen, and A.~Hathaway, ``A comparison of tetrahedron quality measures,'' \emph{Finite Elements in Analysis and Design}, vol.~15, no.~3, pp. 255--261, 1994.

\bibitem{Condition}
P.~M. Knupp, ``Achieving finite element mesh quality via optimization of the jacobian matrix norm and associated quantities. part ii—a framework for volume mesh optimization and the condition number of the jacobian matrix,'' \emph{International Journal for numerical methods in engineering}, vol.~48, no.~8, pp. 1165--1185, 2000.

\bibitem{AMIPSrabinovich2017scalable}
M.~Rabinovich, R.~Poranne, D.~Panozzo, and O.~Sorkine-Hornung, ``Scalable locally injective mappings,'' \emph{ACM Transactions on Graphics}, vol.~36, no.~2, pp. 16:1--16:16, 2017.

\end{thebibliography}

{
\appendices \label{Appendix A}

\section{Proof of Theorem \ref{Theorem 1}}\label{Appendix A}

\begin{proof}
Consider the the perturbed function $\mathbf{V}( x,  y,  z) + \varepsilon \delta \mathbf{u}( x,  y,  z)$ of $\mathbf{V}( x,  y,  z)$, where $\varepsilon \in \mathbb{R}$ is the perturbation parameter, and the $\delta \mathbf{u}( x,  y,  z) \in \mathbb{V}$ is the perturbation direction. Since \((\mathbf{V}(x, y, z) + \varepsilon \delta \mathbf{u}(x, y, z)) \in \mathbb{V}\), with the boundary condition
\begin{align*}
&\mathbf{V}(x, y, z)\big|_{\partial \Omega} = \mathbf{S}, \\
&(\mathbf{V}(x, y, z) + \varepsilon \delta \mathbf{u}(x, y, z))\big|_{\partial \Omega} = \mathbf{S},
\end{align*}
the perturbation direction must satisfy
\begin{align*}
\forall \delta \mathbf{u}, \quad \delta \mathbf{u}\big|_{\partial \Omega} = \mathbf{0}.
\end{align*}

The Gâteaux derivative of the functional $E(\mathbf{V})$ with the perturbation $\delta \mathbf{u}$ is $\delta E(\mathbf{V} ; \delta \mathbf{u}) :=\left.\frac{d}{d\varepsilon} \left(\mathbf{E}(\mathbf{V} + \varepsilon \delta \mathbf{u})\right)\right|_{\varepsilon=0}$, we get:

\begin{equation*}
\begin{aligned}
& \delta E(\mathbf{V} ; \delta \mathbf{u}) \\
& =\lim_{\varepsilon \rightarrow 0} \frac{1}{\varepsilon}[\delta E(\mathbf{V}+\varepsilon \delta \mathbf{u})-\delta E(\mathbf{V})] \\
& =\lim_{\varepsilon \rightarrow 0} \frac{1}{\varepsilon} \Bigg\{ 
\iiint_\Omega \rho \Bigg[ (\mathbf{V}_x + \varepsilon \delta \mathbf{u}_x) \cdot \\
& \quad \left( (\mathbf{V}_y + \varepsilon \delta \mathbf{u}_y) \times (\mathbf{V}_z + \varepsilon \delta \mathbf{u}_z) \right) \Bigg]^2 \mathrm{d}x \mathrm{d}y \mathrm{d}z \\
& \quad - \iiint_\Omega \rho \left[ \mathbf{V}_x \cdot (\mathbf{V}_y \times \mathbf{V}_z) \right]^2 \mathrm{d}x \mathrm{d}y \mathrm{d}z \Bigg\} \\
& = \lim_{\varepsilon \rightarrow 0} \frac{1}{\varepsilon} \iiint_\Omega \rho \left( \mathbf{V}_x \cdot (\mathbf{V}_y \times \mathbf{V}_z) \right) \\
& \quad \Bigg[ \varepsilon \mathbf{V}_x \cdot (\mathbf{V}_y \times \delta \mathbf{u}_z) 
+ \varepsilon \mathbf{V}_x \cdot (\delta \mathbf{u}_y \times \mathbf{V}_z) \\
& \quad + \varepsilon \delta \mathbf{u}_x \cdot (\mathbf{V}_y \times \mathbf{V}_z) \Bigg] 
\, \mathrm{d}x \mathrm{d}y \mathrm{d}z.
\end{aligned}
\end{equation*}

After simplification of the above equation, the following result can be obtained:
\begin{equation*}
\begin{aligned}
& \delta E(\mathbf{V} ; \delta \mathbf{u}) \\
&= \iiint_\Omega \left(\rho \mathbf{V}_ x \cdot\left(\mathbf{V}_ y \times \mathbf{V}_ z\right) \left(\mathbf{V}_ x \times \mathbf{V}_ y\right)\right) \cdot \delta \mathbf{u}_ z \, \mathrm{d}x\mathrm{d}y\mathrm{d}z \\
&\quad + \iiint_\Omega \left(\rho \mathbf{V}_ x \cdot\left(\mathbf{V}_ y \times \mathbf{V}_ z\right) \left(\mathbf{V}_ z \times \mathbf{V}_ x\right)\right) \cdot \delta \mathbf{u}_ y \, \mathrm{d}x\mathrm{d}y\mathrm{d}z \\
&\quad + \iiint_\Omega \left(\rho \mathbf{V}_ x \cdot\left(\mathbf{V}_ y \times \mathbf{V}_ z\right) \left(\mathbf{V}_ y \times \mathbf{V}_ z\right)\right) \cdot \delta \mathbf{u}_ x \, \mathrm{d}x\mathrm{d}y\mathrm{d}z.
\end{aligned}
\end{equation*}

In order to simplify the expressions for the variation of energy, we introduce three auxiliary vectors defined as follows:
\begin{equation*}
\begin{aligned}
    \mathbf{I} &= \mathbf{V}_ x \times \mathbf{V}_ y, \\
    \mathbf{J} &= \mathbf{V}_ z \times \mathbf{V}_ x, \\
    \mathbf{K} &= \mathbf{V}_ y \times \mathbf{V}_ z.
\end{aligned}
\end{equation*}

Using the operation rules of vectors, we can obtain the following two simple properties:
\begin{property}\label{Property 1}
\begin{align*}
& \mathbf{I}_z + \mathbf{J}_y + \mathbf{K}_x \\
&= \mathbf{V}_{xz} \times \mathbf{V}_y + \mathbf{V}_x \times \mathbf{V}_{yz} \\
&\quad + \mathbf{V}_{yz} \times \mathbf{V}_x + \mathbf{V}_z \times \mathbf{V}_{xy} \\
&\quad + \mathbf{V}_{xy} \times \mathbf{V}_z + \mathbf{V}_y \times \mathbf{V}_{xz} \\
&= 0
\end{align*}
\end{property}

Based on the relation $\mathbf{V}_x \cdot \mathbf{I} = \mathbf{V}_x \cdot (\mathbf{V}_x \times \mathbf{V}_y) = 0$, we can derive:

\begin{property}\label{Property 2}
\begin{equation*}
\left\{
\begin{array}{l}
\mathbf{V}_x \cdot \mathbf{I} = 0, \\
\mathbf{V}_x \cdot \mathbf{J} = 0, \\
\mathbf{V}_y \cdot \mathbf{I} = 0, \\
\mathbf{V}_y \cdot \mathbf{K} = 0, \\
\mathbf{V}_z \cdot \mathbf{J} = 0, \\
\mathbf{V}_z \cdot \mathbf{K} = 0.
\end{array}
\right.
\end{equation*}
\end{property}

Using the integration by parts formula, we can express the Gâteaux derivative as:

\begin{equation*}
\begin{aligned}
&\delta E(\mathbf{V} ; \delta \mathbf{u}) \\
&= \iiint_\Omega \rho \big(\mathbf{V}_{ x} \cdot \mathbf{K}\big) \mathbf{I} \cdot \delta \mathbf{u}_{ z} \, \mathrm{d}x\mathrm{d}y\mathrm{d}z \\
&+ \iiint_\Omega \rho \big(\mathbf{V}_{ x} \cdot \mathbf{K}\big) \mathbf{J} \cdot \delta \mathbf{u}_{ y} \, \mathrm{d}x\mathrm{d}y\mathrm{d}z  \\
&+ \iiint_\Omega \rho \big(\mathbf{V}_{ x} \cdot \mathbf{K}\big) \mathbf{K} \cdot \delta \mathbf{u}_{ x} \, \mathrm{d}x\mathrm{d}y\mathrm{d}z  \\
&= \rho \big(\mathbf{V}_{ x} \cdot \mathbf{K}\big) \mathbf{I} \cdot \delta \mathbf{u} \big|_{\partial \Omega} 
- \iiint_\Omega \frac{\partial \big(\rho \big(\mathbf{V}_{ x} \cdot \mathbf{K}\big) \mathbf{I}\big)}{\partial  z} \cdot \delta \mathbf{u} \, \mathrm{d}x\mathrm{d}y\mathrm{d}z  \\
&+ \rho \big(\mathbf{V}_{ x} \cdot \mathbf{K}\big) \mathbf{J} \cdot \delta \mathbf{u} \big|_{\partial \Omega} 
- \iiint_\Omega \frac{\partial \big(\rho \big(\mathbf{V}_{ x} \cdot \mathbf{K}\big) \mathbf{J}\big)}{\partial  y} \cdot \delta \mathbf{u} \, \mathrm{d}x\mathrm{d}y\mathrm{d}z  \\
&+ \rho \big(\mathbf{V}_{ x} \cdot \mathbf{K}\big) \mathbf{K} \cdot \delta \mathbf{u} \big|_{\partial \Omega} 
- \iiint_\Omega \frac{\partial \big(\rho \big(\mathbf{V}_{ x} \cdot \mathbf{K}\big) \mathbf{K}\big)}{\partial  x} \cdot \delta \mathbf{u} \, \mathrm{d}x\mathrm{d}y\mathrm{d}z. 
\end{aligned}
\end{equation*}

Since $\forall\delta\mathbf{u},  \left.\delta\mathbf{u}\right|_{\partial\Omega}=\mathbf{0}$, therefore:

\begin{equation*}
\begin{split}
&\delta E(\mathbf{V} ; \delta \mathbf{u}) \\
&= -\iiint_\Omega \left[ 
\frac{\partial \left( \rho \left( \mathbf{V}_{ x} \cdot \mathbf{K} \right) \mathbf{I} \right)}{\partial  z} \right. \\
& \left. + \frac{\partial \left( \rho \left( \mathbf{V}_{ x} \cdot \mathbf{K} \right) \mathbf{J} \right)}{\partial  y} 
+ \frac{\partial \left( \rho \left( \mathbf{V}_{ x} \cdot \mathbf{K} \right) \mathbf{K} \right)}{\partial  x}
\right] \cdot \delta \mathbf{u} \, \mathrm{d}x\mathrm{d}y\mathrm{d}z.\\ 
& =-\iiint_\Omega\nabla E\cdot\delta\mathbf{u}\mathrm{d}x\mathrm{d}y\mathrm{d}z
\end{split}
\end{equation*}

According to the variational principle, a necessary condition for the functional \(E(\mathbf{V})\) to have an extremum at \(\mathbf{V}(x, y, z)\) is that \(\delta E(\mathbf{V} ; \delta \mathbf{u})\) must be zero in every direction \(\delta \mathbf{u}\), therefore, we could obtain $\nabla E =0$. Since:
\begin{equation*}
\begin{aligned}
&\nabla E \\
&=\frac{\partial\left(\rho\left(\mathbf{V}_ x\cdot \mathbf{K}\right)\mathbf{I}\right)}{\partial z}+\frac{\partial\left(\rho\left(\mathbf{V}_ x\cdot \mathbf{K}\right)\mathbf{J}\right)}{\partial y}+\frac{\partial\left(\rho\left(\mathbf{V}_ x\cdot \mathbf{K}\right)\mathbf{K}\right)}{\partial x} \\
&=\frac{\partial\left(\rho\mathbf{V}_ x\cdot \mathbf{K}\right)}{\partial z}\mathbf{I}+\left(\rho\mathbf{V}_ x\cdot \mathbf{K}\right)\mathbf{I}_ z+\frac{\partial\left(\rho\mathbf{V}_ x\cdot \mathbf{K}\right)}{\partial y}\mathbf{J} \\
&+\left(\rho\mathbf{V}_ x\cdot \mathbf{K}\right)\mathbf{J}_ y+\frac{\partial\left(\rho\mathbf{V}_ x\cdot \mathbf{K}\right)}{\partial x}\mathbf{K}+\left(\rho\mathbf{V}_ x\cdot \mathbf{K}\right)\mathbf{K}_ x  \\
&=\frac{\partial\left(\rho\mathbf{V}_{ x}\cdot \mathbf{K}\right)}{\partial z}\mathbf{I}+\frac{\partial\left(\rho\mathbf{V}_{ x}\cdot \mathbf{K}\right)}{\partial y}\mathbf{J}+\frac{\partial\left(\rho\mathbf{V}_{ x}\cdot \mathbf{K}\right)}{\partial x}\mathbf{K} \\
&+\left(\rho\mathbf{V}_ x\cdot \mathbf{K}\right)\left(\mathbf{I}_ z+\mathbf{J}_ y+\mathbf{K}_ x\right)
\end{aligned}
\end{equation*}

With Property \ref{Property 1} we get:
\begin{equation*}
\begin{aligned}
&\nabla E \\
&= \frac{\partial\left(\rho \mathbf{V}_{ x} \cdot \mathbf{K}\right)}{\partial  z} \mathbf{I} 
+ \frac{\partial\left(\rho \mathbf{V}_{ x} \cdot \mathbf{K}\right)}{\partial  y} \mathbf{J} 
+ \frac{\partial\left(\rho \mathbf{V}_{ x} \cdot \mathbf{K}\right)}{\partial  x} \mathbf{K}
\end{aligned}
\end{equation*}


With Property \ref{Property 2}, we could obtain:
\begin{equation*}
\begin{aligned}
& \mathbf{V}_x \cdot \nabla E \\
& = \mathbf{V}_x \cdot \left(\frac{\partial\left(\rho \mathbf{V}_x \cdot \mathbf{K}\right)}{\partial z} \cdot \mathbf{I} + \frac{\partial\left(\rho \mathbf{V}_x \cdot \mathbf{K}\right)}{\partial y} \cdot \mathbf{J} \right.\\ 
& \quad \left. + \frac{\partial\left(\rho \mathbf{V}_x \cdot \mathbf{K}\right)}{\partial x} \cdot \mathbf{K}\right) \\ 
&= \frac{\partial\left(\rho \mathbf{V}_x \cdot \mathbf{K}\right)}{\partial x}\left(\mathbf{V}_x \cdot \mathbf{K}\right)
\end{aligned}
\end{equation*}

With $\nabla E = 0$ and $\mathbf{V}_ x \cdot \nabla E = 0$, we have $\frac{\partial(\rho\mathbf{V}_ x\cdot \mathbf{K})}{\partial  x} (\mathbf{V}_ x \cdot \mathbf{K}) = 0$. However, $\mathbf{V}_ x \cdot \mathbf{K} =\mathbf{V}_x \cdot (\mathbf{V}_y \times \mathbf{V}_z) \neq 0$, which implies $\frac{\partial(\rho\mathbf{V}_ x\cdot \mathbf{K})}{\partial  x} = 0$.

By symmetry, we can obtain similar results for the other partial derivatives: $\frac{\partial(\rho\mathbf{V}_ x\cdot \mathbf{K})}{\partial  y} = 0$ and $\frac{\partial(\rho\mathbf{V}_ x\cdot \mathbf{K})}{\partial  z} = 0$. 

This leads to $\rho(\mathbf{V}_ x\cdot \mathbf{K}) = \rho (\mathbf{V}_ x\cdot(\mathbf{V}_ y\times\mathbf{V}_ z))$ becoming constant.

\end{proof}

\section{Convexity Proof}\label{appd:convexProof}

\begin{lemma}\label{equ:lemma2_subj}
If the intersection of all half-spaces \(\{ f_m(\boldsymbol{v}_i) > 0 \}\) for \( m = 1 \ldots n \) is non-empty, then it forms a convex set.
\end{lemma}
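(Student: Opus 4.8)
The plan is to reduce the claim to the elementary fact that an intersection of convex sets is convex, by verifying that each individual constraint region $\{\boldsymbol{v}_i : f_m(\boldsymbol{v}_i) > 0\}$ is genuinely an open half-space of $\mathbb{R}^3$. The point to keep in mind is that, while $f_m$ involves four vertices, only $\boldsymbol{v}_i$ is free: the three vertices $\boldsymbol{v}_{mj},\boldsymbol{v}_{mk},\boldsymbol{v}_{ml}$ of the face opposite $\boldsymbol{v}_i$ are held fixed throughout the optimization of $\boldsymbol{v}_i$.

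First I would rewrite $f_m$ as an affine function of $\boldsymbol{v}_i$. Using the cyclic invariance of the scalar triple product,
\[
f_m(\boldsymbol{v}_i) = (\boldsymbol{v}_{mk}-\boldsymbol{v}_{ml})\cdot\big((\boldsymbol{v}_i-\boldsymbol{v}_{ml})\times(\boldsymbol{v}_{mj}-\boldsymbol{v}_{ml})\big) = (\boldsymbol{v}_i-\boldsymbol{v}_{ml})\cdot\boldsymbol{n}_m,
\]
where $\boldsymbol{n}_m := (\boldsymbol{v}_{mj}-\boldsymbol{v}_{ml})\times(\boldsymbol{v}_{mk}-\boldsymbol{v}_{ml})$ is a constant vector — up to sign and scale, the area-weighted normal of the triangle opposing $\boldsymbol{v}_i$. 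Hence $f_m(\boldsymbol{v}_i) = \boldsymbol{n}_m\cdot\boldsymbol{v}_i - \boldsymbol{n}_m\cdot\boldsymbol{v}_{ml}$ is affine in $\boldsymbol{v}_i$, with constant gradient $\boldsymbol{n}_m$.

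Next, since every triangle of a valid tetrahedral mesh is non-degenerate, $\boldsymbol{n}_m\neq\boldsymbol{0}$, so $\{f_m(\boldsymbol{v}_i)>0\}$ is a genuine open half-space of $\mathbb{R}^3$, which is convex. The feasible set of Equation~(\ref{equ:convexOpt}) is the intersection $\bigcap_{m=1}^{n}\{f_m(\boldsymbol{v}_i)>0\}$, and an arbitrary intersection of convex sets is convex; by the hypothesis of the lemma this intersection is moreover non-empty. (Should one wish to dispense with the non-degeneracy assumption, note that if $\boldsymbol{n}_m=\boldsymbol{0}$ the set $\{f_m>0\}$ is either empty or all of $\mathbb{R}^3$, hence still convex, so the conclusion is unaffected.)

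I do not anticipate a genuine obstacle: the whole content of the lemma is the linearity of the scalar triple product in one of its arguments. The only place that needs a little care is the bookkeeping in the affine reduction — in particular getting the orientation of $\boldsymbol{n}_m$ consistent with the cyclic ordering used to define $f_m$ — but this is cosmetic, since half-spaces are convex regardless of the orientation of their bounding plane.
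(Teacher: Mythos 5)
Your proof is correct and follows essentially the same route as the paper's: each constraint set $\{f_m(\boldsymbol{v}_i)>0\}$ is a (half-space, hence convex) set, and a non-empty intersection of convex sets is convex. The only difference is that you explicitly justify the half-space claim by rewriting $f_m(\boldsymbol{v}_i)=(\boldsymbol{v}_i-\boldsymbol{v}_{ml})\cdot\boldsymbol{n}_m$ as an affine function of $\boldsymbol{v}_i$ (and even cover the degenerate case $\boldsymbol{n}_m=\boldsymbol{0}$), a detail the paper simply asserts.
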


\begin{proof}[Proof of Lemma \ref{equ:lemma2_subj}]
Each constraint \( f_m(\boldsymbol{v}_i) > 0 \) defines a half-space, which is a convex set. By the properties of convex sets, the intersection of any collection of convex sets is also convex. Therefore, if the intersection of the half-spaces \(\bigcap_{m=1}^n \{ \boldsymbol{v}_i \in \mathbb{R}^3 \mid f_m(\boldsymbol{v}_i) > 0 \}\) is non-empty, it is inherently convex. 
\end{proof}

\begin{lemma}\label{equ:lemma1_obj}
The objective function \(E(\boldsymbol{v}_i)\) is convex with respect to \(\boldsymbol{v}_i\).
\end{lemma}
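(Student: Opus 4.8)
The plan is to exploit the fact that, with all vertices other than $\boldsymbol{v}_i$ held fixed, each signed tetrahedral volume $\overline{V}_m(\boldsymbol{v}_i)$ is an \emph{affine} function of $\boldsymbol{v}_i$, so that $E(\boldsymbol{v}_i)$ is a nonnegatively weighted sum of squares of affine functions — a textbook-convex object.

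First I would fix a tetrahedron index $m$ and expand the scalar triple product
\[
\overline{V}_m(\boldsymbol{v}_i) = \tfrac{1}{6}\,(\boldsymbol{v}_{mk}-\boldsymbol{v}_{ml})\cdot\big((\boldsymbol{v}_i-\boldsymbol{v}_{ml})\times(\boldsymbol{v}_{mj}-\boldsymbol{v}_{ml})\big).
\]
Using the cyclic invariance of the triple product to move $\boldsymbol{v}_i$ into the outer slot, this rewrites as $\overline{V}_m(\boldsymbol{v}_i)=\boldsymbol{a}_m\cdot\boldsymbol{v}_i+b_m$, where $\boldsymbol{a}_m=\tfrac{1}{6}(\boldsymbol{v}_{mj}-\boldsymbol{v}_{ml})\times(\boldsymbol{v}_{mk}-\boldsymbol{v}_{ml})$ is a fixed vector (geometrically, the area-weighted normal of the face opposite $\boldsymbol{v}_i$) and $b_m\in\mathbb{R}$ is a fixed scalar. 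This makes the affine dependence on $\boldsymbol{v}_i$ explicit.

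Next I would invoke the standard composition rule: $t\mapsto t^2$ is convex on $\mathbb{R}$, and precomposing a convex function with an affine map preserves convexity, so each $\overline{V}_m^2(\boldsymbol{v}_i)=(\boldsymbol{a}_m\cdot\boldsymbol{v}_i+b_m)^2$ is convex; since $\rho_m>0$ and nonnegative linear combinations of convex functions are convex, $E(\boldsymbol{v}_i)=\sum_{m=1}^{n}\rho_m\overline{V}_m^2(\boldsymbol{v}_i)$ is convex. The version I would actually record is the Hessian computation: $\nabla^2 E(\boldsymbol{v}_i)=2\sum_{m=1}^{n}\rho_m\,\boldsymbol{a}_m\boldsymbol{a}_m^{\!\top}$, a positively weighted sum of rank-one positive-semidefinite matrices, hence positive semidefinite at every $\boldsymbol{v}_i$; its constancy also shows $E$ is a convex quadratic. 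Together with Lemma~\ref{equ:lemma2_subj} (convexity of the feasible region), this yields that the program in~(\ref{equ:convexOpt}) is convex, establishing Theorem~\ref{thm:convex_optimization}.

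I do not expect a genuine obstacle here; the only point requiring care is the bookkeeping of the cyclic vertex ordering, so that the orientation convention implicit in $f_m(\boldsymbol{v}_i)>0$ is consistent with the expression for $\boldsymbol{a}_m$. Since the convexity claim only needs $\boldsymbol{a}_m\boldsymbol{a}_m^{\!\top}\succeq 0$, a sign error in $\boldsymbol{a}_m$ is harmless, so even this bookkeeping is not truly delicate.
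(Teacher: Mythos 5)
Your proposal is correct and follows essentially the same route as the paper: both arguments reduce to computing the Hessian of \(E(\boldsymbol{v}_i)\) as a nonnegatively weighted sum of rank-one outer products \(\boldsymbol{a}_m\boldsymbol{a}_m^{\!\top}\) (the paper's \(\boldsymbol{s}_m\otimes\boldsymbol{s}_m\) up to the constant \(1/6\)) and concluding positive semidefiniteness. The extra observation that each \(\overline{V}_m\) is affine in \(\boldsymbol{v}_i\), so \(E\) is a convex quadratic, is a harmless restatement of the same fact rather than a different method.
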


\begin{proof}[Proof of Lemma \ref{equ:lemma1_obj}]
A function is convex if its Hessian matrix is positive semi-definite.

To find the Hessian of \(E(\boldsymbol{v}_i)\), we derive the gradient:


\begin{equation*}
\begin{aligned}
\boldsymbol{g}
&=\nabla E(\boldsymbol{v}_{i}) \\
&= \sum_{m=1}^{n} 2 \rho_m \overline{V}_m \frac{\partial \overline{V}_m}{\partial \boldsymbol{v}_{i}} \\
&= \sum_{m=1}^{n} 2 \rho_m \overline{V}_m \frac{\partial}{\partial \boldsymbol{v}_{i}} \left[ \frac{1}{6}  (\boldsymbol{v}_{mk} - \boldsymbol{v}_{ml}) \cdot \left( (\boldsymbol{v}_i - \boldsymbol{v}_{ml}) \right. \right. \\
& \hspace{8em} \left. \left. \times (\boldsymbol{v}_{mj} - \boldsymbol{v}_{ml}) \right) \right] \\
&= \sum_{m=1}^{n} \frac{1}{3} \rho_m \overline{V}_m \left( (\boldsymbol{v}_{mj} - \boldsymbol{v}_{ml}) \times (\boldsymbol{v}_{mk} - \boldsymbol{v}_{ml}) \right).
\end{aligned}
\end{equation*}

Denoting that \(\boldsymbol{s_m} = (\boldsymbol{v}_{mj} - \boldsymbol{v}_{ml}) \times (\boldsymbol{v}_{mk} - \boldsymbol{v}_{ml})\), the Hessian of \(E(\boldsymbol{v}_i)\) is derived as:

\begin{equation*}
\begin{aligned}
\boldsymbol{H}
&= \nabla^2 E(\boldsymbol{v}_{i})\\
&= 2 \sum_{m=1}^{n} \rho_m \left(\frac{\partial}{\partial \boldsymbol{v}_{i}} \left( \overline{V}_m \boldsymbol{s}_m \right)\right)\\
&= 2 \sum_{m=1}^{n} \rho_m \left( \frac{\partial \overline{V}_m}{\partial \boldsymbol{v}_{i}} \otimes \boldsymbol{s}_m + \overline{V}_m \frac{\partial \boldsymbol{s}_m}{\partial \boldsymbol{v}_{i}} \right)\\
&= \frac{1}{18} \sum_{m=1}^{n} \rho_m \boldsymbol{s}_m \otimes \boldsymbol{s}_m.
\end{aligned}
\end{equation*}

For any vector \(\boldsymbol{x}\), the following condition holds:

\[
\boldsymbol{x}^\top \boldsymbol{H} \boldsymbol{x} = 2 \sum_{m=1}^{n} \rho_m (\boldsymbol{x}^\top \boldsymbol{s_m})^2,
\]
which is always non-negative. Therefore, the Hessian is positive semi-definite, indicating that \(E(\boldsymbol{v}_i)\) is convex.
\end{proof}

\begin{proof}
Lemma \ref{equ:lemma2_subj} verifies that each constraint \(f_m(\boldsymbol{v}_i) > 0\) forms a convex set.
Concurrently, Lemma \ref{equ:lemma1_obj} establishes the convexity of the objective function \(E(\boldsymbol{v}_i)\), shown by its positive semi-definite Hessian.
Therefore, with a convex objective and convex constraints, the optimization problem satisfies the criteria for convexity.

\end{proof}
}



\newpage
 
\vspace{11pt}

\vspace{11pt}

\vfill

\end{document}